\title{Maximum Reachability Orientation of Mixed Graphs}
\author{Florian Hörsch}{CISPA, Saarbrücken, Germany \and \url{https://florianhoersch.wordpress.com/} }{florian.hoersch@cispa.d}{https://orcid.org/0000-0002-5410-613X}{}
\authorrunning{F. Hörsch} 
\keywords{orientations, mixed graphs, reachability, parameterized complexity, approximation} 
\newtheorem{Case}{Case}
\newtheorem{question}{Question}
\begin{document}
\maketitle

\begin{abstract}
We aim to find orientations of mixed graphs optimizing the total reachability, a problem that has applications in causality and biology. For given a digraph $D$, we use $P(D)$ for the set of ordered pairs of distinct vertices in $V(D)$ and we define $\kappa_D:P(D)\rightarrow \{0,1\}$ by $\kappa_D(u,v)=1$ if $v$ is reachable from $u$ in $D$, and $\kappa_D(u,v)=0$, otherwise. We use $R(D)=\sum_{(u,v)\in P(D)}\kappa_D(u,v)$.

Now, given a mixed graph $G$, we aim to find an orientation $\vec{G}$ of $G$ that maximizes $R(\vec{G})$. Hakimi, Schmeichel, and Young proved that the problem can be solved in polynomial time when restricted to undirected inputs. They inquired about the complexity in mixed graphs.

We answer this question by showing that this problem is NP-hard, and, moreover, APX-hard.

We then develop a finer understanding of how quickly the problem becomes difficult when going from undirected to mixed graphs. To this end, we consider the parameterized complexity of the problem with respect to the number $k$ of preoriented arcs of $G$, a poorly understood form of parameterization.

We show that the problem can be solved in time $n^{O(k)}$ and that a $(1-\epsilon)$-approximation can be computed in time $f(k,\epsilon)n^{O(1)}$ for any $\epsilon > 0$. 
\end{abstract}

\section{Introduction}
This article deals with the problem of finding an orientation $\vec{G}$ of a mixed graph $G$ such that the total number of ordered pairs $(u,v)$ of vertices in $V(G)$ such that $v$ is reachable from $u$ in $\vec{G}$ is maximized.

This problem has numerous applications in various sciences. For example, in the context of causality, the vertices of the mixed graph may represent random variables and the edges and the arcs of the mixed graph may represent correlations between these random variables. An arc means that the random variable associated to the tail of the arc is causal for the random variable associated to the head of the arc while an edge means that a correlation between the two random variables can be observed, but it is not known which of them is causal for the other one. Hence a solution of the orientation problem helps to better understand the possible correlations between these random variables. More details on the connection of causal relationships and mixed graphs can be found in a survey of Vowels, Camgoz, and Bowden \cite{10.1145/3527154}.

One more concrete, very similar application for this problem comes from biology, namely so-called protein-protein interaction. This application has been a driving force for the research on the problem setting considered in the present article. It has fostered significant interaction between the communities of theoretical computer science and biology, see \cite{doi:10.1089/cmb.2013.0064,inproceedings1,10.1007/978-3-540-87361-7_19}. Concretely, an interaction between certain pairs of proteins can be observed, but it is often technically more difficult to determine which of the two proteins causes this interaction. It is not difficult to see that the collection of protein interactions can be modelled by a mixed graph again. In Life Sciences, it is important to understand which local interactions of proteins lead to most global interaction, which corresponds to our orientation problem. A better understanding of the current problem may hence help to better understand protein-protein interaction, which may be of interest for example in medicine.
\paragraph*{Problem definition}
We now introduce our problem more formally. Basic notation can be found in \Cref{sec2}. In a mixed graph graph $G$, we say that $v$ is reachable from $u$ for some $u,v\in V(G)$ if $v$ can be reached from $u$ by traversing arcs in the given direction and traversing edges in an arbitrary direction.
We define $P(G)$ to be the set of ordered pairs of distinct vertices of $V(G)$. We further define the function $\kappa_G:P(G)\rightarrow \{0,1\}$ by $\kappa_G(u,v)=1$ if $v$ is reachable from $u$ in $G$ and $\kappa_G(u,v)=0$, otherwise, for every $(u,v)\in P(G)$. Given a set $P \subseteq P(G)$, we use $R_P(G)$ for $\sum_{(u,v)\in P}\kappa_G(u,v)$. Observe that digraphs are in particular mixed graphs, so the above definitions carry over to digraphs. An orientation of a mixed graph is obtained by replacing every edge by an arc with the same two endvertices. In the problem Fixed Pairs Mixed Maximum Reachability Orientation (FPMMRO), the input consists of a mixed graph $G$ and a set $P\subseteq P(G)$, and the objective is to find an orientation $\vec{G}$ of $G$ that maximizes $R_P(\vec{G})$. We further use Fixed Pairs Undirected Maximum Reachability Orientation (FPUMRO) for the restriction of FPMMRO to undirected inputs. In the following, when speaking about complexity, we need to carefully distinguish between the version where we are searching for a complete solution of the instance and the maximization version. In the complete solution version, we want to decide whether there exists an orientation $\vec{G}$ of $G$ with $R_P(\vec{G})=|P|$, while in the maximization version, we assume that the instance comes with an integer $K$ and we want to decide whether there exists an orientation $\vec{G}$ of $G$ with $R_P(\vec{G})\geq K$. Clearly, in all settings, the maximization version is at least as hard as the complete solution version.
\paragraph*{Previous work}
For FPUMRO, a simple argument, which was given by Hassin and Meggido \cite{HASSIN1989589}, shows that the complete solution version can be handled in polynomial time. However, it turns out that the maximization version of FPUMRO is significantly harder. More concretely, it was proved by Elberfeld et al. \cite{Elberfeld28112011} that the problem is NP-hard. One important paradigm when encountering NP-hard problems is approximation algorithms. In \cite{Elberfeld28112011}, it was proved that FPUMRO on $n$-vertex graphs can be approximated within a factor of $\Omega(\frac{\log \log n}{\log n})$ in polynomial time, improving on an earlier result of Medevedovsky et al. \cite{10.1007/978-3-540-87361-7_19}. On the other hand, the authors of \cite{Elberfeld28112011} show that, unless P=NP, no polynomial-time approximation within a factor of less than $\frac{12}{13}$ is available. Another variation of FPUMRO taking into account the lengths of the created directed paths was studied by Blokh, Segev, and Sharan \cite{doi:10.1089/cmb.2013.0064}. The study of FPUMRO in the context of parameterized complexity was initialized by Dorn et al \cite{inproceedings1}. They studied several problem-specific parameters describing how much the desired paths connecting input pairs interact. Following up on their work, Cygan, Kortsarz, and Nutov \cite{cygan_kortsarz_nutov_2013} proved that the maximization version of FPUMRO is FPT with respect to the objective function, that is, the number of pairs that are supposed to be satisfied. Interestingly, the existence of a polynomial-time constant-factor approximation algorithm still seems to be open for FPUMRO. 

We now turn our attention to FPMMRO, the version which is more general and hence also more powerful for modelling the applications. 
FPMMRO was first considered by Arkin and Hassin \cite{ARKIN2002271}. They show that, in contrast to FPUMRO, the complete solution version of FPMMRO is already NP-hard in general, but can be solved in polynomial time if $|P|=2$. There exists a rich literature for FPMMRO regarding approximation algorithms and parameterized complexity.

For the approximation side, in \cite{ELBERFELD201396}, Elberfeld et al. showed that, unless $P=NP$, FPMMRO cannot be approximated in polynomial time within a factor better than $\frac{7}{8}$, and, on the other hand, gave a first polynomial-time sublinear approximation algorithm. The approximation ratio was later improved by Gamzu and Medina in \cite{inproceedings}. However, a constant-factor approximation algorithm for FPMMRO was proven to be unlikely by W\l{}odarczyk \cite{wlodarczyk:LIPIcs.ICALP.2020.104}, even when allowing algorithms that run in FPT time with respect to the number of pairs.

For parameterized complexity, the most studied parameter is the number of pairs. It was proved by Cygan, Kortsarz and Nutov in \cite{cygan_kortsarz_nutov_2013} that the complete solution version of FPMMRO is XP with respect to the number of pairs, improving on the result of Arkin and Hassin. 
On the other hand, it was proved by Pilipczuk and Wahlström \cite{10.1145/3201775} that the problem is $W[1]$-hard, which is also implied by the above mentioned result of W\l{}odarczyk. In a recent work, Hanaka et al \cite{hanaka2025structuralparameterssteinerorientation} dealt with the parameterized complexity of FPMMRO when considering other parameters, for example the number of arcs of $G$ or the treewidth of its underlying graph.

We now come to the concrete special case of FPMMRO that is dealt with in the present article. A lot of research on FPMMRO so far focused on the case that the set $P$ of pairs in the instance is small. For example, the above mentioned parameterized complexity results all fall into this category. 
Here, we deal with a specific, but very natural restriction of FPMMRO, which is in some way the other extreme, where $P$ is as large as possible. Namely, we consider the case that the set of pairs in consideration is the set of all possible pairs, i.e. $P=P(G)$.  For simplicity, we use MMRO to denote this restriction and, given a mixed graph $G$, we use $R(G)$ for $R_{P(G)}(G)$. For the complete solution version of MMRO, which corresponds to the question of whether a given mixed graph has a strongly connected orientation, a full characterization directly yielding a polynomial-time algorithm was given by Boesch and Tindell \cite{Boesch01111980}. We hence focus on the maximization version.

In \cite{Hakimi1997OrientingGT}, Hakimi, Schmeichel, and Young considered a setting which is in some way the intersection of MMRO and FPUMRO, namely the special case of MMRO that the input graph $G$ is undirected. They gave a complete characterization of the optimal orientations in this case. A polynomial-time algorithm for the maximization version of this restriction of MMRO follows directly from this characterization.

\paragraph*{Our contributions}
In \cite{Hakimi1997OrientingGT}, the authors also inquired about the complexity of MMRO in general. Surprisingly, despite this question being almost 30 years old, little progress seems to have been made. The first contribution of the present article is to give a negative answer to this question by providing an NP-hardness result. Moreover, the following result is slightly stronger and also shows that the approximability of MMRO has certain limits.
\begin{restatable}{theorem}{mainone}\label{main1}
    Unless P=NP, there exists no polynomial-time $\frac{581773}{581774}$-approximation algorithm for MMRO.
\end{restatable}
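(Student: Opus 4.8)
The plan is to prove the claim by a gap‑preserving reduction from a problem that is APX‑hard with an explicitly known inapproximability constant; for concreteness one may take MAX‑E3SAT in Håstad's gap form, where for every fixed $\delta>0$ it is NP‑hard to distinguish a satisfiable $3$‑CNF formula from one in which at most a $\frac{7}{8}+\delta$ fraction of the clauses can be simultaneously satisfied (a bounded‑occurrence SAT variant, or the hard FPMMRO instances of \cite{ELBERFELD201396}, would serve equally well). The feature of MMRO that makes the known FPMMRO reductions insufficient is that here \emph{every} ordered pair of vertices contributes to the objective, so we cannot confine the combinatorics to a small designated set $P$. Instead the reduction must arrange that the whole value $R(\vec G)$ is determined, up to a polynomially bounded additive error that we will later swamp, by a single number counting satisfied clauses.

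First I would build a polynomial‑size \emph{core} mixed graph $G_0$ from the input formula $\phi$ on variables $x_1,\dots,x_n$ and clauses $C_1,\dots,C_m$. It contains a distinguished vertex $s$; a variable gadget for each $x_i$ (an edge whose two orientations encode the two truth values of $x_i$); and, for each clause $C_j$, a clause gadget carrying a \emph{gate} vertex $g_j$, all designed so that in \emph{every} orientation $\vec G_0$ the following holds. Letting $a$ be the truth assignment read off the variable edges, $g_j$ is reachable from $s$ in $\vec G_0$ if and only if $a$ satisfies $C_j$; moreover every vertex of $G_0$ other than the gates $g_1,\dots,g_m$ is reachable from $s$ in every orientation (or in no orientation). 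Granting such a gadget, $|\mathrm{Reach}_{\vec G_0}(s)|=c_0+X(\vec G_0)$, where $c_0$ depends only on $\phi$ and $X(\vec G_0)\in\{0,\dots,m\}$ is the number of clauses satisfied by $a$, and the maximum of $X$ over all orientations equals the optimum of the MAX‑E3SAT instance.

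The amplification step produces the MMRO instance $G$: put $N:=|V(G_0)|^2$ (any sufficiently large polynomial works) and add to $G_0$ vertices $p_1,\dots,p_N$, each joined to $s$ by a single \emph{arc} $p_\ell\to s$ and incident to nothing else. Each $p_\ell$ has in‑degree $0$, hence nothing reaches it, and $\mathrm{Reach}(p_\ell)=\{s\}\cup\mathrm{Reach}_{\vec G_0}(s)$, so a short case analysis yields $R(\vec G)=N+(N+1)\bigl(c_0+X(\vec G)\bigr)+E(\vec G)$, where $E(\vec G)$, the number of reachable ordered pairs with both endpoints in $V(G_0)$ that do not start at $s$, satisfies $0\le E(\vec G)\le |V(G_0)|^2\le N$. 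Therefore, if $\phi$ is satisfiable the optimum of $G$ is at least $N+(N+1)(c_0+m)$, whereas if at most $(\frac{7}{8}+\delta)m$ clauses are satisfiable it is at most $2N+(N+1)\bigl(c_0+(\frac{7}{8}+\delta)m\bigr)$. For $N$ a sufficiently large polynomial these two bounds are in a ratio strictly bounded away from $1$; substituting the value of $c_0$ produced by the gadgets and the slack in Håstad's theorem — and, if one wants a clean expression, first passing to a bounded‑occurrence version of MAX‑E3SAT so that $m=\Theta(n)$ and $c_0=\Theta(m)$ — turns this ratio into the constant $\frac{581773}{581774}$ in the statement. A polynomial‑time approximation algorithm with a better ratio would decide the MAX‑E3SAT gap, so none exists unless P$=$NP.

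The main obstacle is the construction of $G_0$: the clause gadgets must be \emph{noise‑free with respect to reachability from $s$}. It is not enough that each $g_j$ is reachable from $s$ precisely when $C_j$ is satisfied; in addition \emph{no other} vertex of $G_0$ may have orientation‑dependent reachability from $s$, because in $R(\vec G)$ the term $|\mathrm{Reach}_{\vec G_0}(s)|$ is multiplied by $N+1$, so even one gadget vertex that is reachable from $s$ for some orientations but not others would inject an amplified error of order $N$ that the argument cannot absorb. (Reachability among vertices of $G_0$ that does not originate at $s$, and reachability into the $p_\ell$, is by contrast harmless, being bounded by $|V(G_0)|^2\le N$.) Building variable and clause gadgets with this rigidity — every non‑gate vertex forced to be reachable from $s$ irrespective of the orientation of the variable edges, yet without the shortcut arcs that enforce this accidentally short‑circuiting the ``satisfied iff reachable'' logic at the gates — is the delicate point; once it is in place, deriving the explicit fraction $\frac{581773}{581774}$ from the gadget sizes and the MAX‑E3SAT threshold is routine bookkeeping.
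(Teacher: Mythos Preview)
Your outline identifies the right shape of argument (gap-preserving reduction plus amplification so that ``noise'' pairs are dominated), but the proof is incomplete at exactly the point you flag: you never construct the core graph $G_0$. You require a gadget in which the \emph{only} vertices whose reachability from $s$ depends on the variable-edge orientations are the gates $g_j$; you call this ``the main obstacle'' and ``the delicate point,'' and then do not resolve it. Without that gadget the whole argument is vacuous, and the specific constant $\tfrac{581773}{581774}$ is not derived either---it is asserted to fall out of ``routine bookkeeping'' on gadget sizes you have not specified. As written, this is a plan, not a proof.

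It is worth noting that the paper avoids precisely the difficulty you are stuck on, and does so by changing the architecture rather than by building your noise-free gadget. Instead of a single amplified source $s$, the paper's reduction (from 3-bounded Max-2-SAT) creates, for each clause $C$, a source $a_C$ and a sink $b_C$; the variable gadgets sit strictly between the sources and the sinks. Because every $a_C$ is a source and every $b_C$ is a sink, the only pairs with nonzero contribution (after weighting variable vertices by $0$) are of the form $(a_C,b_{C'})$. Extra arcs $a_C b_{C'}$ are added whenever $C,C'$ share a variable, so those cross-pairs are fixed to $1$ regardless of orientation; what remains variable is exactly the set of pairs $(a_C,b_C)$, one per clause, and $\kappa(a_C,b_C)=1$ iff $C$ is satisfied. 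This source/sink discipline is what makes the ``no other vertex has orientation-dependent reachability from $s$'' requirement unnecessary. The passage from the weighted problem (WAMMRO) to unweighted MMRO is then handled separately by blowing up each vertex of weight $w$ into a digon-cluster of size $\Theta(w\cdot n^4)$, which is where the constant degrades from $\tfrac{34220}{34221}$ to $\tfrac{581773}{581774}$.

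If you want to salvage your single-source approach, you would need to exhibit the gadget explicitly and verify the noise-freeness property; absent that, the many-sources/many-sinks design is both simpler and what actually yields the stated constant.
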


While this inapproximability result may appear marginal, it rules out a polynomial-time approximation scheme, which distinguishes MMRO from a large collection of combinatorial problems. With \Cref{main1} at hand, it appears interesting to understand special cases and restrictions of MMRO that are tractable. Parameterized complexity is a powerful paradigm to obtain a better understanding of the complexity of NP-hard problems. In the current case, we wish to consider a parameterization which is natural in the light of the concrete input. Namely, we consider the parameterization with respect to the number of arcs of $G$. On a high level, we wish to understand the complexity of the problem in the case that the number of arcs is small. In \cite{GUTIN2017107}, Gutin et al. studied this parameter for a version of the Chinese Postman Problem in mixed graphs, answering a question of van Bevern et al., see \cite{10.5555/2815545}. We are not aware of any previous consideration of this parameter for orientation problems, except for the work of Hanaka et al. \cite{hanaka2025structuralparameterssteinerorientation}, which was recently submitted independently of the present article. Before describing our results, we wish to remark that for the natural analogous parameterization by the number of edges of $G$, a trivial FPT algorithm exists. Namely, we can compute all possible orientations of $G$ and compare their respective objective values.

In the following, given an instance $G$ of MMRO, we use $k$ for $|A(G)|$. We further say that an orientation $\vec{G}$ of $G$ is {\it optimal} if $R(\vec{G})$ is maximum among all orientations of $G$. Our first result is that, for fixed $k$, we can indeed compute an optimal orientation for an instance of MMRO in polynomial time. More precisely, we show the following result.

\begin{theorem}\label{main2}
    Given an instance $G$ of MMRO, we can compute an optimal orientation of $G$ in time $n^{O(k)}$.
\end{theorem}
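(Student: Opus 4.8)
The plan is to reduce, via $n^{O(k)}$ guesses, to instances of the arc-free problem solved by Hakimi, Schmeichel, and Young~\cite{Hakimi1997OrientingGT}. First, since $R(\cdot)$ is additive over connected components and the $k$ arcs are split among them, we may assume $G$ is connected. Recall the shape of an arc-free optimum: every $2$-edge-connected component of the underlying graph is oriented strongly and hence becomes a single strongly connected set, and once these are contracted the surviving bridges form a tree which is then oriented; in particular the condensation of such an optimum is an oriented forest. The core of the proof is a structural lemma asserting that $k$ preoriented arcs can distort this picture by only $O(k)$ ``anomalies''. Concretely, I would aim to show: there is an optimal orientation $\vec{G}$ and a set $\mathcal{S}$ of $O(k)$ strongly connected sets of $\vec{G}$, each of them determined by $O(k)$ vertices of $G$ (its delimiting cut-vertices and arc-endpoints), such that after deleting $\mathcal{S}$ from the condensation DAG the rest is an oriented forest whose every tree attaches to $\bigcup \mathcal{S}$ in a bounded way. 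For $k=0$ this is precisely the Hakimi--Schmeichel--Young structure with $\mathcal{S}=\emptyset$.

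Granting this lemma, the algorithm enumerates the bounded ``core'' and then solves the residual instance by dynamic programming. A guess consists of: the $O(k)$ strongly connected sets in $\mathcal{S}$ (each described by its $O(k)$ delimiting vertices, hence ranging over an $n^{O(k)}$-size family), the reachability order imposed among them, and the attachment data of the pendant forest; this is $n^{O(k)}$ guesses. For a fixed guess, one first checks --- by the criterion of Boesch and Tindell~\cite{Boesch01111980} --- that each set of $\mathcal{S}$ admits the required strong orientation realizing the guessed order; then the remaining edges form a forest (together with the $\le O(k)$ super-vertices obtained by contracting the sets of $\mathcal{S}$), which we must orient to maximize $R(\vec{G})$. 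This is an arc-free reachability-orientation problem on a forest with $O(k)$ marked vertices whose mutual reachabilities are prescribed, and it is solved by a bottom-up dynamic program extending the one underlying~\cite{Hakimi1997OrientingGT}: for each subtree one stores the best value together with the number of its vertices reaching, and reachable from, its root and, if needed, their relation to the $O(k)$ marked vertices; any resulting $2^{O(k)}$ blow-up of the state space is absorbed into the $n^{O(k)}$ budget. Outputting the best orientation over all guesses gives an optimal one.

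The main obstacle is the structural lemma. Two difficulties stand out. First, a $2$-edge-connected component of the underlying graph containing an arc-endpoint need not be strongly orientable, and a large such component could a priori carry a complicated internal reachability pattern; one must bound the number of these ``affected'' components by $O(k)$ and show --- by re-applying the arc-free structure theory inside each --- that each of them enlarges the core by only $O(k)$. Second, and most delicately, one needs exchange arguments certifying that the optimum is attained by an orientation of the restricted form: that it never pays to orient an unaffected $2$-edge-connected component non-strongly, and that the reachabilities routed through the core beyond the guessed $O(k)$ ones are never needed. Establishing these local-modification statements without loss of $R(\cdot)$ is where the bulk of the case analysis lies; once they are in place, the reduction to arc-free forest instances and the appeal to~\cite{Hakimi1997OrientingGT} are routine, as is the $n^{O(k)}$ running-time accounting.
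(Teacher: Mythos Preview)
Your proposal is a plan rather than a proof: the central structural lemma (that an optimum admits a ``core'' of $O(k)$ strongly connected sets, each pinned down by $O(k)$ vertices) is asserted but not established, and you yourself flag the exchange arguments as the bulk of the work. As stated, the claim that each core set is ``determined by $O(k)$ vertices'' is vague --- a $2$-edge-connected block touched by a single arc could have many bridges once that arc forbids a strong orientation, and it is not clear why its internal structure collapses to $O(k)$ data. Without that lemma nothing else goes through.

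The paper avoids this difficulty entirely by preprocessing. It first contracts \emph{all} mixed cycles (Proposition~\ref{addwa}), passing to a weighted acyclic instance (WAMMRO); now every undirected component is automatically a tree, so your worry about non-strongly-orientable $2$-edge-connected blocks evaporates. Next, a purely combinatorial lemma (Lemma~\ref{edgedis}) shows that one can choose $O(k)$ undirected edges whose orientation, once guessed, leaves a ``dismembered'' instance: every undirected tree component touches the arc set in at most two vertices, each via a single arc. This is the analogue of your core, but it is proved unconditionally for trees and costs only a $2^{O(k)}$ factor. Finally, for each of the at most $2k$ tree components $T$, the paper builds a small \emph{simulation set} of at most two arboresque supergraphs of $T$ with $\le 3$ extra vertices (Lemma~\ref{rdtfguh}); ranging over the $O(n^3)$ possible weights on those extra vertices and solving each arboresque instance by a polynomial-time DP (Lemma~\ref{tfzghik}) yields an \emph{optimal replacement set} of $O(n^3)$ candidate orientations per component (Lemma~\ref{rdztghok}). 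Trying all $(O(n^3))^{2k}=n^{O(k)}$ combinations gives the algorithm.

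So the missing idea is: contract cycles first and work in the acyclic weighted setting. That single reduction replaces your hard exchange lemma by the easy tree-cutting Lemma~\ref{edgedis}, and your DP on a forest with $O(k)$ marked super-vertices by the cleaner device of simulation sets plus a DP on arboresque graphs.
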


\Cref{main2} raises the question whether a further improvement in the running time of this algorithm is possible. Namely, it would be desirable to determine whether the problem is in FPT, that is, if there exists an algorithm computing an optimal orientation that runs in time $f(k)n^{O(1)}$. While this question remains open, we make some progress in this direction by combining the paradigms of parameterized complexity and approximation. To this end, we say that, for some $\alpha \in [0,1]$, an orientation $\vec{G}$ of a given mixed graph $G$ is {\it $\alpha$-optimal} if $R(\vec{G})\geq \alpha R(\vec{G}_0)$ holds for every orientation $\vec{G}_0$ of $G$. The following result roughly speaking shows that the problem is 'almost' FPT, that is, it can be arbitrarily well approximated in FPT time.

\begin{theorem}\label{main3}
    Given a constant $\epsilon >0$ and an instance $G$ of MMRO, in time $f(k,\epsilon)n^{O(1)}$, we can compute a $(1-\epsilon)$-optimal orientation of $G$.
\end{theorem}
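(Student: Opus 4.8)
The plan is to combine \Cref{main2} with a win/win argument in $n=|V(G)|$. Let $\mathrm{OPT}$ denote the maximum of $R(\vec H)$ over all orientations $\vec H$ of $G$. Since $R(\cdot)$ is additive over the weakly connected components of $G$ and $k$ does not increase when we pass to a component, we may assume $G$ is weakly connected. Fix a threshold $N_0$, a polynomial in $k$ and $1/\epsilon$ to be specified. If $n\le N_0$ we run the algorithm of \Cref{main2}: its running time $n^{O(k)}\le N_0^{O(k)}$ depends only on $k$ and $\epsilon$, and it even returns an optimal orientation, so this case is within budget. Hence all the difficulty is in the regime where $n$ is large relative to $k$ and $1/\epsilon$.

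The second ingredient is a lower bound showing that $\mathrm{OPT}$ is then quadratically large: if $G$ is weakly connected with $k$ arcs and $n\ge 4k$, then $\mathrm{OPT}=\Omega\!\big((n/k)^{2}\big)$. The proof is short. Let $S$ be the set of at most $2k$ endpoints of arcs. As every arc has both endpoints in $S$, every edge of $G$ with an endpoint outside $S$ is undirected; consequently each connected component of $G-S$ is an undirected graph, distinct components of $G-S$ are non-adjacent, and each is attached to $S$ only through undirected edges. Assigning each component of $G-S$ to a neighbour in $S$ and applying pigeonhole, some $v\in S$ together with the components assigned to it induces an undirected connected subgraph $H$ on at least $(n-2k)/(2k)$ vertices. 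Taking a spanning tree of $H$, removing a centroid to split it into two forests of roughly balanced total size, orienting the edges of one towards the centroid and those of the other away from it, and orienting the rest of $G$ arbitrarily, gives an orientation with $\Omega(|V(H)|^{2})=\Omega((n/k)^{2})$ reachable pairs. Thus, for a suitable polynomial choice of $N_0$, whenever $n>N_0$ an additive deficit of size $O(\epsilon)\cdot\mathrm{OPT}+\mathrm{poly}(k)\cdot n$ is itself only an $O(\epsilon)\cdot\mathrm{OPT}$ loss; so in this regime it suffices to produce, in polynomial time, an orientation of $G$ respecting its arcs whose value is at least $\mathrm{OPT}-O(\epsilon)\cdot\mathrm{OPT}-\mathrm{poly}(k)\cdot n$.

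To build such an orientation I would first note that the optimum may be assumed to make every arc-free $2$-edge-connected block strongly connected (by Robbins' theorem, and because making a block strongly connected never decreases $R$); contracting these blocks yields an equivalent instance in which every $2$-edge-connected block contains an arc, so there are at most $k$ of them, glued in a tree-like fashion by bridges. Next, guess --- over the $2^{O(k^{2})}$ consistent possibilities --- the \emph{qualitative interaction pattern} of the at most $2k$ arc-endpoints: which of them share a strongly connected component, and the reachability order among those components. For a fixed pattern, the remaining choices (orienting the undirected parts and distributing the remaining vertices among the prescribed components) decouple along the bridges and along the boundaries of the arc-containing blocks into essentially undirected subproblems, each solvable by the characterisation of Hakimi, Schmeichel, and Young; doing this exactly would reintroduce an $n^{O(k)}$ factor, so instead we round the $O(k)$ interface cardinalities to nearby powers of $1+\Theta(\epsilon/k)$, keeping the running time polynomial at the cost of an $O(\epsilon)\cdot\mathrm{OPT}$ loss. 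Returning the best orientation over all guesses (and the better of it and the output of the $n\le N_0$ branch) finishes the algorithm.

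The step I expect to be the main obstacle is the last one: proving that once the coarse arc-pattern is fixed the residual optimization genuinely decomposes into undirected instances, and that $R$ changes by only a factor $1+O(\epsilon)$ under rounding of the $O(k)$ quantities describing how the arcs attach to the bulk of the graph --- i.e. a quantitative stability statement for $R$ in these parameters in the presence of the fixed arcs. The contraction of arc-free blocks, the decomposition along bridges, the lower bound, and the win/win are by comparison routine.
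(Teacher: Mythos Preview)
Your high-level plan is essentially the paper's strategy: a quadratic lower bound on $\mathrm{OPT}$ (your centroid argument is almost exactly the paper's Lemmas~\ref{esrzdtufziguohi} and~\ref{asefhjikj}) converts an additive deficit into a multiplicative one, and rounding of $O(k)$ ``interface'' quantities replaces the $n^{O(k)}$ enumeration from \Cref{main2} by an $f(k,\epsilon)$ one. So the architecture is right.

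The genuine gap is the step you yourself flag. Your decomposition via $2$-edge-connected blocks and a guessed ``qualitative interaction pattern'' among the $\le 2k$ arc-endpoints is suggestive but not a proof: you have not shown that, once this pattern is fixed, the residual problem really factors into independent undirected pieces whose optima combine additively, nor that the Hakimi--Schmeichel--Young characterisation (which is unweighted and has no boundary conditions) applies to each piece. The paper fills exactly this hole with different, more concrete machinery. It first passes to the weighted acyclic problem WAMMRO (so every undirected component is a tree), then finds a \emph{dismembering set} of $O(k)$ edges (\Cref{edgedis}) whose $2^{O(k)}$ orientations are brute-forced; after this each undirected component meets at most two arcs. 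For such a component $T$ it builds a \emph{simulation set}: one or two explicit ``arboresque'' supergraphs of $T$ with at most three extra vertices (\Cref{rdtfguh}), together with a proof that optimally orienting $T$ inside $G$ is captured by optimally orienting one of these supergraphs under a suitable weight on the extra vertices. The arboresque instances are solved exactly by a dedicated DP (\Cref{tfzghik}), and the rounding you describe is applied only to the $\le 3$ extra weights (\Cref{drzftugziuhi}, \Cref{tcvzbun}), yielding $f(\epsilon)$ candidate orientations per component. Your $2$-ec-block picture and $2^{O(k^2)}$ pattern guess are replaced by the dismembering/simulation-set pair, which is what makes the ``decouples into undirected subproblems'' claim rigorous.
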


Due to Theorems \ref{main1} and \ref{main3}, it becomes evident that for MMRO, algorithms running in FPT time with respect to $k$ can yield solutions of a higher quality than polynomial time algorithms. Indeed, FPT approximations can yield arbitrarily good approximations by \Cref{main3}, which polynomial time algorithms cannot by \Cref{main1}.  

The remainder of this article is structured as follows: in Section \ref{sec2}, we give some more formal definitions and show that the problem can be reduced to mixed graphs that are acyclic. In Section \ref{sec3}, we prove \Cref{main1}. In Section \ref{sec4}, we prove our algorithmic results, that is, \Cref{main2} and \Cref{main3}. Sections \ref{sec3} and \ref{sec4} also contain overviews of the proofs of \Cref{main1} and Theorems \ref{main2} and \ref{main3}, respectively. Finally, in Section \ref{sec5}, we conclude our work.
\section{Preliminaries}\label{sec2}
In this section, we give some preliminaries, we need for our main proofs. In \Cref{not}, we collect the notation we need. In \Cref{tdfzhijo}, we introduce a satisfiability problem, we need for our reduction in \Cref{sec3}. Finally, in \Cref{equi}, we prove the equivalence of MMRO and a closely related problem in acyclic mixed graphs.
\subsection{Notation}\label{not}
We here give a collection of definitions, some of which are repeated from the introduction.
\subsubsection{General basics}
For a positive integer $q$, we use $[q]$ for $\{1,\ldots,q\}$ and $[q]_0$ for $\{0,\ldots,q\}$. For two reals $a,b$ with $a \leq b$, we use $[a,b]$ for the set of reals $c$ satisfying $a \leq c \leq b$. Given a set $S$ of integers, we use $\sum S$ for $\sum_{s \in S}s$. When we say that a certain value equals $f(k)$ for a certain parameter $k$, we mean that there exists a computable function $f$ such the value is bounded by $f(k)$. For some nonnegative integer $\ell$, we use ${\ell \choose 2}$ for $\frac{1}{2}\ell (\ell-1)$. Observe that ${1 \choose 2}={0 \choose 2}=0$. We use $\mathbb{Z}_{\geq 0}$ for the set of nonnegative integers. Given sets $S_1,S_2$, and $Y$ with $S_2\subseteq S_1$ and a function $g:S_1 \rightarrow Y$, we use $g|S_2$ for the restriction of $g$ to $S_2$. If $Y=\mathbb{R}$, we use $|g|$ for $\sum_{s \in S_1}g(s)$. 
\subsubsection{Standard graph theory}
A mixed graph $G$ consists of a vertex set $V(G)$, an edge multiset $E(G)$ and an arc multiset $A(G)$. Here, every $e \in E(G)$ is a subset of $V(G)$ of size exactly $2$ and every $a \in A(G)$ is an ordered pair $(u,v)$ of two distinct vertices $u,v \in V(G)$. We say that $u$ is the {\it tail} of $a$ and $v$ is the  {\it head} of $a$. We write $uv$ for an edge $\{u,v\}\in E(G)$ or an arc $(u,v)\in A(G)$.  A mixed graph $G$ is called an {\it undirected graph} if $A(G)=\emptyset$  and a {\it digraph} if $E(G)=\emptyset$. For $uv \in E(G)\cup A(G)$, we say that $u$ and $v$ are the {\it endvertices} of $uv$. For some $F \subseteq E(G)\cup A(G)$, we use $V(F)$ for the set of vertices that are an endvertex of at least one edge or arc of $F$.

Next, we use $\delta_G(X)$ for the edges in $E(G)$ that have exactly one endvertex in $X$, we use $\delta_G^+(X)$ for the set of arcs in $A(G)$, whose tail is in $X$ and whose head is in $V(G)\setminus X$, and we use $\delta_G^-(X)$ for $\delta_G^+(V(G)\setminus X)$. We say that an arc in $\delta_G^-(X)$ or an edge in $\delta_G(X)$ {\it enters} $X$. Further an edge or an arc {\it leaves} $X$ if it enters $V(G)\setminus X$. Next, we use $d_G(X)$ for $|\delta_G(X)|, d^+_G(X)$ for $|\delta^+_G(X)|$, and $d_G^-(X)$ for $|\delta_G^-(X)|$. For a single vertex $v$, we use $\delta_G(v)$ instead of $\delta_G(\{v\})$ etc..

{\it Orienting} an edge of $G$ means replacing it by an arc with the same two endvertices.  Given a set $F\subseteq E(G)$, an orientation of $F$ is a set of arcs obtained from $F$ by orienting every element of $F$. If a mixed graph $G'$ can be obtained from $G$ by orienting some edges, then we say that $G'$ is a {\it partial orientation} of $G$. Moreover, if $G'$ is a digraph, we say that $G'$ is an {\it orientation} of $G$. Given a mixed graph $G$, its {\it underlying graph $UG(G)$} is obtained by replacing every arc in $A(G)$ by an edge with the same two endvertices. Given a set $X\subseteq V(G)$, we say that that a mixed graph $H$ is obtained from $G$ by {\it contracting $X$} if $H$ is obtained from $G$ by deleting $X$, adding a new vertex $v_X$, adding an edge (arc) $uv_X$ for every edge (arc) $uv$ of $G$ with $u \in V(G)\setminus X$ and $v \in X$ and adding an arc $v_Xu$ for every arc $vu$ of $G$ with $u \in V(G)\setminus X$ and $v \in X$.
Given a mixed graph $G$, an orientation $\vec{G}$ of $G$ and a mixed subgraph $H$ of $G$, we say that the orientation of $H$ {\it inherited} from $\vec{G}$ is the unique orientation of $H$ in which every edge in $E(H)$ has the same orientation it has in $\vec{G}$.

A {\it source} in $G$ is  a vertex $v \in V(G)$ with $d_G(v)=d_G^-(v)=0$ and a {\it sink} in $G$ is  a vertex $v \in V(G)$ with $d_G(v)=d_G^+(v)=0$. A set of two arcs with the same endvertices and different tails is called a {\it digon}.

Given two mixed graphs $G_1$ and $G_2$, we say that $G_2$ is a {\it mixed subgraph} of $G_1$ if $V(G_2)\subseteq V(G_1), E(G_2)\subseteq E(G_1)$, and $A(G_2)\subseteq A(G_1)$ hold. When we say that $G_1$ {\it contains} $G_2$, we mean that $G_2$ is a mixed subgraph of $G_1$.
For $X \subseteq V(G)$, we use $G[X]$ for the mixed subgraph of $G$ induced by $X$, that is, we set $V(G[X])=X$ and we let $E(G[X])(A(G[X]))$ contain all edges (arcs) of $G$ that have both endvertices in $X$.
 
 A directed path is a digraph $P$ on a vertex set $\{v_1,\ldots,v_q\}$ and whose arc set consists of $v_{i}v_{i+1}$ for $i \in [q-1]$. We say that $P$ goes {\it from $v_1$ to $v_q$} and that $P$ is a {\it directed $v_1v_q$-path}.  A {\it mixed path} is a  mixed graph $P$ that can be oriented as a directed path. Moreover, if this directed path can be chosen to go from $u$ to $v$ for some vertices $u,v \in V(P)$, then we say that $P$ goes {\it from $u$ to $v$} and is a {\it mixed $uv$-path.} Given a mixed graph $G$ and two vertices $u,v\in V(G)$, we say that $v$ is {\it reachable} from $u$ in $G$ if $G$ contains a mixed path from $u$ to $v$. A mixed path that is an undirected graph is called a {\it path}.

A {\it directed} walk in a digraph $D$ is a sequence of vertices $(v_1,\ldots,v_q)$ such that $v_{i}v_{i+1}\in A(D)$ for $i \in [q-1]$. We also speak about a {\it directed $v_1v_q$-walk}. It is well-known that a digraph contains a directed $uv$-walk for two vertices $u,v \in V(D)$ if and only if it contains a directed $uv$-path. Given a digraph $D$, $x,y,z\in V(D)$, an $xy$-walk $Q_1=(x=u_1,\ldots,y=u_q)$, and a $yz$-walk $Q_2=(y=v_1,\ldots,z=v_{q'})$, we call the $xz$-walk $(x,u_2,\ldots,u_{q-1},y,v_2,\ldots,v_{q'-1},z)$ the {\it concatenation} of $Q_1$ and $Q_2$. When we speak about the {\it concatenation} of two directed paths, we mean the concatenation of the corresponding walks.

A digraph $D$ is called {\it strongly connected} if for all distinct $u,v \in V(D)$, we have that $v$ is reachable from $u$ in $D$. A {\it directed cycle} is a strongly connected directed graph $C$ such that $d_C^+(v)=d_C^-(v)=1$ holds for all $v \in V(C)$. A {\it mixed cycle} is a mixed graph that can be oriented as a directed cycle. A mixed graph that does not contain a mixed cycle is called {\it acyclic}.  An undirected graph $G$ is {\it connected} if $d_G(X)\geq 1$ holds for every nonempty, proper $X \subseteq V(G)$. A maximal connected subgraph of $G$ is called a {\it component} of $G$.
 If $G$ is connected and does not contain a mixed cycle, we say that $G$ is a {\it tree}. If $G$ is a tree, then a vertex $v \in V(G)$ with $d_G(v)=1$ is called a {\it leaf} of $G$.

\subsubsection{Non-standard graph theory and problem-specific definitions}
Given a mixed graph $G$ and a weight function $w:V(G)\rightarrow \mathbb{Z}_{\geq 0}$, we use $w(X)$ for $\sum_{x \in X}w(x)$ for some $X \subseteq V(G)$, we use $w(H)$ for $w(V(H))$ for some mixed subgraph $H$ of $G$, and we use $w(\mathcal{H})$ for  $\sum_{H \in \mathcal{H}}w(H)$ for some collection $\mathcal{H}$ of vertex-disjoint mixed subgraphs of $G$.

Given a mixed graph $G$, we use $P(G)$ for the set of ordered pairs of distinct vertices in $V(G)$. Given a function $g:P(G)\rightarrow \mathbb{Z}_{\geq 0}$ and some $(u,v)\in P(G)$, we abbreviate $g((u,v))$ to $g(u,v)$. We define the function $\kappa_{G}:P(G)\rightarrow \{0,1\}$ by $\kappa_G(u,v)=1$ if $v$ is reachable from $u$ in $G$, and $\kappa_G(u,v)=0$, otherwise, for all $(u,v)\in P(G)$. Observe that $\kappa_G(v,v)=1$ for all $v \in V(G)$.  We use $R(G)$ for $\sum_{(u,v)\in P(G)}\kappa_G(u,v)$. Further, given a weight function $w:V(G)\rightarrow \mathbb{Z}_{\geq 0}$, we use $R(G,w)$ for $2\sum_{v \in V(G)}{w(v)\choose 2}+\sum_{(u,v)\in P(G)}w(u)w(v)\kappa_G(u,v)$. Given a mixed graph $G$ and a constant $\alpha>0$, we say that an orientation $\vec{G}_0$ of $G$ is {\it $\alpha$-optimal} if $R(\vec{G}_0)\geq \alpha R(\vec{G})$ holds for every orientation $\vec{G}$ of $G$. Next, given a weight function $w:V(G)\rightarrow \mathbb{Z}_{\geq 0}$, we say that an orientation $\vec{G}_0$ of $G$ is {\it $\alpha$-optimal for $(G,w)$} if $R(\vec{G}_0,w)\geq \alpha R(\vec{G},w)$ holds for every orientation $\vec{G}$ of $G$. In either case, we use {\it optimal} for 1-optimal.

For some $u \in V(G)$, we use $Out_G(u)$ for all the vertices in $V(G)$ which are reachable from $u$ in $G$ and $In_G(u)$ for all the vertices in $V(G)$ from which $u$ is reachable. Observe that $u \in In_G(u)\cap Out_G(u)$.
Given two mixed graphs $G,H$, an orientation $\vec{G}$ of $G$, an orientation $\vec{H}$ of $H$ and a set $F \subseteq E(G)\cap E(H)$, we say that $\vec{G}$ and $\vec{H}$ {\it agree} on $F$ if all edges in $F$ have the same orientation in $\vec{G}$ and $\vec{H}$. Given two mixed graphs $T$ and $U$, we say that $U$ is a {\it mixed supergraph} of $T$ if $V(T)\subseteq V(U)$ and $U[V(T)]$ is a partial orientation of $T$. Observe that this does not necessarily mean that $T$ is a mixed subgraph of $U$.
Moreover, for a nonnegative integer $\ell$, we say that $U$ is a {\it mixed $\ell$-supergraph} of $T$ if $U$ is a mixed supergraph of $T$ and $|V(U)\setminus V(T)|\leq \ell$.

Given two digraphs $D_1$ and $D_2$, we use $D_1 \langle D_2 \rangle$ for the unique orientation of $UG(D_1)$ in which an edge $uv$ has its orientation  in $D_2$ if $\{u,v\}\subseteq V(D_2)$ and has its orientation in $D_1$, otherwise.

An instance of WAMMRO consists of a mixed graph $G$ and a weight function $w:V(G)\rightarrow \mathbb{Z}_{\geq 0}$. We say that $(G,w)$ is {\it connected} if $G$ is connected. An {\it undirected component} of $(G,w)$ is a component of $G\setminus A(G)$. Given two instances $(G_1,w_1)$ and $(G_2,w_2)$ of WAMMRO, we say that $(G_1,w_1)$ {\it extends} $(G_2,w_2)$ if $V(G_2)\subseteq V(G_1)$ and $w_1|V(G_2)=w_2$.

\subsubsection{Algorithms and Complexity}
 Given a problem instance of input size $n$ with a parameter $k$ and a function $g:\mathbb{Z}_{\geq 0}^2\rightarrow \mathbb{Z}_{\geq 0}$, we say that an algorithm runs in time $f(k)g(n,k)$ if there exists a computable function $f$ such that the algorithm runs in time $f(k)g(n,k)$. We wish to emphasize that, here, the role of $f$ and $g$ is not analogous: while $g$ is a fixed function, the usage of $f$ is short-hand for the existence of a computable function. For basics on the Landau notation and the definition of the complexity classes P and NP, see \cite{10.1016/j.dam.2011.09.003}. Given a problem instance of input size $n$ that comes with a parameter $k$, we say that the problem is XP (FPT) with respect to $k$ if there exists an algorithm that solves the problem and runs in time $n^{f(k)}$ $(f(k)n^{O(1)})$. For a maximization problem, an $\alpha$-approximation algorithm for some $\alpha<1$ is an algorithm that outputs a feasible solution whose objective value is at least $\alpha$ times as large as the value of any feasible solution. A maximization problem is called APX-hard if there exists some $\alpha<1$ such that the problem does not admit a polynomial-time $\alpha$-approximation algorithm unless P=NP.
\subsection{A satisfiability problem}\label{tdfzhijo}
Satisfiability problems are basic algorithmic problems whose role in computational complexity is immense. There are many variations of satisfiability problems. We here introduce one of them which will be used in the proof of \Cref{main1}. Namely, an instance of 3-Bounded Max 2-SAT (3BMax2Sat) consists of a set $X$ of binary variables and a set $\mathcal{C}$ of clauses each of which contains exactly two literals over $X$ such that for every $x \in X$, there exist at most 3 clauses in $\mathcal{C}$ that contain a literal of $x$. We say that an assignment $\phi:X \rightarrow \{True, False\}$ satisfies some $C \in \mathcal{C}$ if there exists some $x \in X$ such that either $x \in C$ and $\phi(x)=True$ or $\overline{x}\in C$ and $\phi(x)=False$. The reduction in the proof of \Cref{main1} is based on the fact that 3BMax2Sat is APX-hard. More precisely, we use the following result which is proven in \cite{bk}.

\begin{proposition}\label{max2}
Unless $P=NP$, there is no polynomial-time algorithm whose input is an instance $(X,\mathcal{C})$ of 3BMax2Sat and a positive integer $K$, that outputs 'yes' if $X$ admits an assignment satisfying at least  $K$ clauses of $\mathcal{C}$ and that outputs 'no' if every assignment of $X$ satisfies at most $\frac{2012}{2013}K$ clauses of $\mathcal{C}$.
\end{proposition}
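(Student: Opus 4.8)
The plan is to establish \Cref{max2} as the endpoint of a gap-preserving chain of reductions rooted in the constant-factor inapproximability of satisfiability that follows from the PCP theorem; this is essentially the route taken in \cite{bk}. A convenient starting point is the APX-hardness of \emph{unrestricted} Max~2-SAT with an explicit gap: the PCP theorem --- for instance via H{\aa}stad's optimal inapproximability results, or via Dinur's gap-amplification proof, which additionally produces constraint graphs of bounded degree --- yields an explicit constant $\rho_0<1$ and a polynomial-time reduction from SAT producing Max~2-SAT instances in which either all clauses are simultaneously satisfiable or every assignment satisfies at most a $\rho_0$-fraction of the clauses. If one prefers to begin one clause-width higher, one first passes from Max~3-SAT to Max~2-SAT by replacing every clause $(a\vee b\vee c)$ with the ten $2$-clauses of the classical Garey--Johnson--Stockmeyer gadget over $a,b,c$ and one fresh variable; this gadget has the property that a satisfied $3$-clause permits exactly seven of the ten new clauses to hold while an unsatisfied one permits only six, so summing over all $m$ clauses converts a gap between $m$ and $\rho m$ into a gap between $7m$ and $(6+\rho)m$, and a constant gap is preserved.

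The technically delicate step is the second reduction, which must enforce that every variable occurs in at most three clauses while keeping a constant gap. The idea is to replace each variable $x$ by a bundle of private copies and to glue the copies together with $2$-clause gadgets expressing (soft) equalities between copies. Since a single equality constraint $x\equiv y$ already consumes two occurrences of each of $x$ and $y$, these constraints cannot be placed freely: the copies of each variable must be organised along bounded-degree \emph{amplifier} graphs in the spirit of Papadimitriou--Yannakakis and Berman--Karpinski. The expansion property of such graphs guarantees that, for any assignment, forcing every copy of $x$ to agree with the majority value of its bundle costs only a number of clauses proportional to the number of violated equality constraints, which is what makes the reduction gap-preserving; simultaneously, the bounded degree keeps every copy within three occurrences once the contribution of its single original occurrence is accounted for. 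Composing this with the first reduction produces, from any SAT instance, an instance of 3BMax2Sat together with a threshold $K$ such that the optimum is either exactly $K$ or at most a fixed constant fraction of $K$.

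The main obstacle is purely quantitative rather than conceptual: one must select the amplifier graphs, then count occurrences and bound the number of clauses lost in the majority-repair step, carefully enough to convert the abstract constant gap produced by the chain above into the concrete ratio $\tfrac{2012}{2013}$ asserted in the statement, and to make $K$ (rather than the total number of clauses) the reference quantity, which is automatic once the optimum in the completeness case is pinned down. This bookkeeping is exactly the computation carried out in \cite{bk}, so for the purposes of the present paper the proposition is simply invoked as a black box.
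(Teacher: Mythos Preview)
Your proposal is appropriate: the paper does not prove \Cref{max2} at all but simply imports it from \cite{bk}, and you correctly identify this, ending with the same conclusion that the proposition is used as a black box. The sketch you give of the Berman--Karpinski route (PCP-based gap for Max 2-SAT followed by degree reduction via amplifier/expander gadgets) is accurate background, but since the paper offers no proof of its own there is nothing further to compare.
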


We further need the following well-known result which can be obtained by basic probabilistic methods.
\begin{proposition}\label{34}
    For every instance $(X,\mathcal{C})$ of 3BMax2Sat, there exists an assignment of $X$ satisfying at least $\frac{3}{4}|\mathcal{C}|$ clauses of $\mathcal{C}$.
\end{proposition}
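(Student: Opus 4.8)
The plan is to use the probabilistic method. First I would consider a uniformly random assignment $\phi:X\rightarrow\{True,False\}$, obtained by setting each variable of $X$ independently to $True$ or $False$ with probability $\frac12$ each. The key observation is that every clause $C\in\mathcal{C}$ is satisfied by $\phi$ with probability at least $\frac34$. Indeed, $C$ consists of exactly two literals: if these two literals are over the same variable, then $C$ has the form $x\vee\overline{x}$ and is satisfied by every assignment; if they are over two distinct variables, then $C$ fails to be satisfied precisely when both of its literals evaluate to $False$, an event of probability $\frac12\cdot\frac12=\frac14$ by independence, so $\Pr[C\text{ satisfied by }\phi]=\frac34$ in this case.

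Next I would apply linearity of expectation. Let $Y$ denote the number of clauses of $\mathcal{C}$ satisfied by $\phi$, and for each $C\in\mathcal{C}$ let $Y_C$ be the indicator random variable of the event that $C$ is satisfied by $\phi$, so that $Y=\sum_{C\in\mathcal{C}}Y_C$. Then $\mathbb{E}[Y]=\sum_{C\in\mathcal{C}}\Pr[C\text{ satisfied by }\phi]\geq\frac34|\mathcal{C}|$ by the per-clause bound above. Since the expected value of $Y$ is at least $\frac34|\mathcal{C}|$, there must exist at least one outcome of $\phi$, i.e. at least one concrete assignment of $X$, for which $Y\geq\frac34|\mathcal{C}|$; this is the assignment claimed by the proposition.

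This is a textbook-style argument, so I do not anticipate a genuine obstacle. The only point requiring a moment of care is the verification of the per-clause probability bound in the degenerate case in which a clause's two literals are over the same variable, and as observed above this case is in fact easier (the clause is then a tautology, satisfied with probability $1$). If an explicit assignment were desired rather than a mere existence statement, one could derandomize the above via the method of conditional expectations, but for \Cref{34} the plain probabilistic method suffices.
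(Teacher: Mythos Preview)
Your argument is correct and is precisely the ``basic probabilistic methods'' the paper alludes to; the paper does not actually spell out a proof of \Cref{34}, so there is nothing to compare against beyond noting that your write-up matches the intended approach. One tiny nitpick: when you treat the case of both literals over the same variable you silently assume they are distinct literals (hence $x\vee\overline{x}$ rather than $x\vee x$); this is the standard reading of ``contains exactly two literals,'' so no change is needed.
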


\subsection{An adapted version of MMRO}\label{equi}
It is mentioned in \cite{10.1007/978-3-540-87361-7_19} that for FPUMRO, one can assume that the input graph is acyclic as cycles can be safely contracted. Intuitively, the same should be possible for MMRO. However, due to the specific character of the problem, we have to associate a polynomial weight function to the newly created mixed graph. This weight function is associated to the mixed graph obtained from the contraction and displays how many vertices of the original mixed graph were contained in the strongly connected component which was contracted into that vertex. Formally, an instance of Weighted Acyclic Mixed Maximum Reachability Orientation (WAMMRO) consists of a mixed graph $G$ and a weight function $w:V(G)\rightarrow \mathbb{Z}_{\geq 0}$. The objective is to find an orientation $\vec{G}$ of $G$ that maximizes $R(\vec{G},w)$. Recall that $R(\vec{G},w)=2\sum_{v \in V(G)}{w(v)\choose 2}+\sum_{(u,v)\in P(G)}w(u)w(v)\kappa_{\vec{G}}(u,v)$. The first term in this expression reflects the contributions to the objective function made by vertices in the same strongly connected component of the original mixed graph. While this term is clearly the same for all orientations, it may not be neglected when considering approximation algorithms. Most results in Sections \ref{sec3} and \ref{sec4} will be proved for WAMMRO, which then implies the corresponding results for MMRO. The remainder of \Cref{equi} is dedicated to proving the equivalence of these problems. Here, we need to specify that when speaking about an instance $(G,w)$ of WAMMRO, we suppose that $w$ is given in unitary encoding. Further, we generally use $n$ for $|V(G)|+w(G)$. When we say that an algorithm runs in polynomial  time, we refer to $n$. As a side remark, we wish to point out that this specification is crucial. Indeed, as pointed out in \cite{Hakimi1997OrientingGT}, the Subset Sum problem can easily be reduced to the one of finding an orientation $\vec{G}$ maximizing $R(\vec{G},w)$ when given an undirected graph $G$ and a function $w:V(G)\rightarrow \mathbb{Z}_{\geq 0}$ given in binary encoding.

The first result justifies why the operation of contracting a mixed cycle behaves well with respect to the final solutions. In particular, we show that contracting a strongly connected directed subgraph of a given digraph does not affect the objective value when accurately adapting the weight function.
\begin{proposition}\label{contract}
    Let $D$ be a digraph, $w:V(D)\rightarrow \mathbb{Z}_{\geq 0}$ a weight function, $X \subseteq V(D)$, let a digraph $D'$ be obtained from $D$ by contracting $X$ into a new vertex $v_X$ and let $w':V(D')\rightarrow \mathbb{Z}_{\geq 0}$ be defined by $w'(v_X)=w(X)$ and $w'(v)=w(v)$ for all $v \in V(D')\setminus \{v_X\}$. Then $R(D',w')\geq R(D,w)$. Moreover, if $D[X]$ is strongly connected, then equality holds.
\end{proposition}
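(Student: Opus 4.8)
The plan is to reinterpret $R(\cdot,\cdot)$ combinatorially. For an instance $(H,w)$, attach to each vertex $v$ a set of $w(v)$ \emph{copies}, and observe that $R(H,w)$ is exactly the number of ordered pairs $(p,q)$ of distinct copies for which the vertex carrying $q$ is reachable from the vertex carrying $p$ in $H$: the pairs formed by two copies of a single vertex $v$ are all counted, since $\kappa_H(v,v)=1$, and there are exactly $w(v)(w(v)-1)=2{w(v)\choose 2}$ of them, which accounts for the first term of $R(H,w)$; the pairs formed by a copy of $u$ and a copy of a distinct vertex $v$ contribute $w(u)w(v)\kappa_H(u,v)$ each, which accounts for the second term. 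Since $w'(v_X)=w(X)=\sum_{x\in X}w(x)$, the instances $(D,w)$ and $(D',w')$ have the same total number of copies, so I fix a bijection $\beta$ between the two copy-sets mapping every copy of a vertex $v\in V(D)\setminus X$ to a copy of $v$ in $D'$, and every copy of a vertex of $X$ to a copy of $v_X$. Through $\beta$, both $R(D,w)$ and $R(D',w')$ become cardinalities of subsets of one fixed ground set (the ordered pairs of distinct copies), and the statement reduces to comparing these two subsets.

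For the inequality $R(D',w')\ge R(D,w)$, I would show that every pair counted for $(D,w)$ is counted for $(D',w')$. Contracting $X$ turns any directed path of $D$ into a directed walk of $D'$; hence if $u,v\in V(D)\setminus X$ and $\kappa_D(u,v)=1$ then $\kappa_{D'}(u,v)=1$, and if exactly one of $u,v$ lies in $X$ the same contraction forces $\kappa_{D'}$ to equal $1$ on the corresponding pair involving $v_X$ (using that a directed $uv$-walk yields a directed $uv$-path, as recalled in \Cref{sec2}). Finally, if both $u,v\in X$, then under $\beta$ both copies land on $v_X$, and the pair is counted because $\kappa_{D'}(v_X,v_X)=1$. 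This argument uses no hypothesis on $D[X]$, so it already establishes the first claim.

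Now suppose $D[X]$ is strongly connected; I would prove the reverse inclusion. The crucial point is that a directed walk of $D'$ lifts to a directed walk of $D$: each time the walk passes through $v_X$ it enters along an arc corresponding to some $(a,x)\in A(D)$ with $x\in X$ and leaves along an arc corresponding to some $(x',b)\in A(D)$ with $x'\in X$, and strong connectivity of $D[X]$ provides a directed $xx'$-path inside $D[X]$ to bridge the two; the two ends of the walk are handled the same way, using that any two vertices of $X$ are mutually reachable within $D[X]$. Hence $\kappa_{D'}=1$ on a pair of (images of) vertices implies $\kappa_D=1$ on the original pair, and for a pair of two copies both sitting on $v_X$ the underlying vertices lie in $X$ and therefore reach each other inside $D[X]$, hence in $D$. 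Combining the two inclusions yields $R(D,w)=R(D',w')$.

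The bulk of the work is routine: the contraction argument, the lifting argument, and the elementary bookkeeping of walks versus paths. The one place that really needs care, and which I expect to be the main obstacle, is the equality direction: one must verify that the lift of a $D'$-walk visiting $v_X$ several times is a genuine directed walk in $D$ (this is exactly where strong connectivity of $D[X]$ is used), and one must not overlook the boundary cases in which one, or both, of the two vertices of a counted pair lie in $X$.
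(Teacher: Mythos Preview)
Your proof is correct and takes a genuinely different, more conceptual route than the paper. The paper partitions $P(D)$ into three pieces according to how many of the two endpoints lie in $X$, proves a separate inequality for each piece (with equality under strong connectivity), and then sums; in particular it carries out an explicit algebraic manipulation to match $\sum_{(u,v)\in P_1}w(u)w(v)\kappa_D(u,v)+2\sum_{v\in X}{w(v)\choose 2}$ against $2{w'(v_X)\choose 2}$. Your copy-expansion instead reinterprets $R(H,w)$ as the cardinality of a single set of ordered pairs, so the whole proposition becomes a comparison of two subsets of one common ground set via the bijection $\beta$; the paper's three cases are absorbed into one uniform inclusion argument, and the binomial bookkeeping vanishes because pairs of copies sitting on the same vertex are handled on the same footing as all others. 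The trade-off is that the paper's decomposition makes visible exactly which term of $R$ each piece of the argument controls, whereas your approach hides that algebra behind the combinatorial model---cleaner here, but it would need to be unpacked if one later wanted to quantify where the slack in the inequality comes from when $D[X]$ is not strongly connected.
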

\begin{proof}
    Let $P_1$ contain the pairs $(u,v)\in P(D)$ with $\{u,v\}\subseteq X$ and let $P_2$ contain the pairs $ (u,v) \in P(D)$ with $\{u,v\}\cap X=\emptyset$. 

  First observe that
    \begin{equation}\label{firstp1}
    \begin{split}
       \sum_{(u,v)\in P_1}w(u)w(v)\kappa_{D}(u,v)+2\sum_{v \in X}{w(v)\choose 2}&\leq \sum_{(u,v)\in P_1}w(u)w(v)+2\sum_{v \in X}{w(v)\choose 2}\\
       &=\sum_{u \in X}w(u)(\sum_{v \in X\setminus \{u\}}w(v)+w(u)-1)\\
        &=w'(v_X)(w'(v_X)-1)\\
        &=2{w'(v_X)\choose 2}.
        \end{split}
    \end{equation}
    Moreover, if $D[X]$ is strongly connected, then $\kappa_{D}(u,v)=1$ for all $(u,v)\in P_1$, so equality holds.

Next consider some $(u,v)\in P_2$. If there exists a directed path from $u$ to $v$ in $D$, then there also exists a directed path from $u$ to $v$ in $D'$. Moreover, if $D[X]$ is strongly connected and there exists a directed path from $u$ to $v$ in $D'$, then there also exists a directed path from $u$ to $v$ in $D$.  This yields \begin{align}\label{firstp2}
    \sum_{(u,v)\in P_2}w(u)w(v)\kappa_{D}(u,v)\leq \sum_{(u,v)\in P_2}w'(u)w'(v)\kappa_{D'}(u,v)
\end{align}
and if $D[X]$ is strongly connected, then equality holds.

Further observe that for every $u \in V(D)\setminus X$, there exists a directed path from $u$ to $v$ in $D$ for some $v \in X$ if and only if there exists a directed path from $u$ to $v_X$ in $D'$. Moreover, if $D[X]$ is strongly connected and there exists a directed path from $u$ to $v$ in $D$ for some $v \in X$, then there exists a directed path from $u$ to $v'$ in $D$ for all $v' \in X$. 

We obtain that
\begin{equation}\label{firstp3}
\begin{split}
    \sum_{u \in V(D)\setminus X}\sum_{v \in X}w(u)w(v)\kappa_{D}(u,v)
    &\leq \sum_{u \in V(D)\setminus X}w(u)\sum_{v \in X}w(v)\max_{v' \in X}\kappa_{D}(u,v')\\
    &=\sum_{u \in V(D)\setminus X}w(u)\sum_{v \in X}w(v)\kappa_{D'}(u,v_C)\\
     &=\sum_{u \in V(D)\setminus X}w'(u)w'(v_C)\kappa_{D'}(u,v_C).
     \end{split}
\end{equation}
and if $D[X]$ is strongly connected, then equality holds throughout.
A similar argument shows that 

   \begin{align}\label{firstp4}
    \sum_{u \in X}\sum_{v \in V(D)\setminus X}w(u)w(v)\kappa_{D}(u,v)
     &\leq \sum_{v \in V(D)\setminus X}w'(v_C)w'(v)\kappa_{D'}(v_C,v).
\end{align}
and that, if $D[X]$ is strongly connected, then equality holds.

Summing \eqref{firstp1}, \eqref{firstp2},\eqref{firstp3}, and \eqref{firstp4}, we obtain that
\begin{align*}
    R(D,w)=&2\sum_{v \in V(D)}{w(v)\choose 2}+\sum_{(u,v)\in P(D)}w(u)w(v)\kappa_D(u,v)\\
    =&2\sum_{v \in X}{w(v)\choose 2}+2\sum_{v \in V(D)\setminus X}{w(v)\choose 2}+\sum_{(u,v)\in P_1}w(u)w(v)\kappa_D(u,v)\\&+\sum_{(u,v)\in P_2}w(u)w(v)\kappa_D(u,v)+ \sum_{u \in V(D)\setminus X}\sum_{v \in X}w(u)w(v)\kappa_{D}(u,v)\\&+ \sum_{u \in  X}\sum_{v \in V(D)\setminus X}w(u)w(v)\kappa_{D}(u,v)\\&\\
    \leq& 2{w'(v_C)\choose 2}+2\sum_{v \in V(D)\setminus X}{w'(v)\choose 2}+\sum_{(u,v)\in P_2}w'(u)w'(v)\kappa_{D'}(u,v)\\&+\sum_{u \in V(D)\setminus X}w'(u)w'(v_C)\kappa_{D'}(u,v_C)+\sum_{v \in V(D)\setminus X}w'(v_C)w'(v)\kappa_{D'}(v_C,v)\\
    =&2\sum_{v \in V(D)}{w'(v)\choose 2}+\sum_{(u,v)\in P(D')}w'(u)w'(v)\kappa_{D'}(u,v)\\
    =&R(D',w').
\end{align*}
Moreover, if $D[X]$ is strongly connected, then equality holds.
\end{proof}

In the remaining results, we aim to prove the equivalence of WAMMRO and MMRO. The first result shows that an instance of MMRO can be reduced to an equivalent instance of WAMMRO by contracting mixed cycles. It will be used several times in Section \ref{sec4}.
\begin{proposition}\label{addwa}
    Let $G$ be an instance of MMRO and let $n=|V(G)|$. Then, in polynomial time, we can compute an instance $(G_0,w)$ of WAMMRO with $|V(G_0)|\leq |V(G)|,|A(G_0)|\leq |A(G)|$, and $w(G_0)=n$ such that for every positive integer $K$, there exists an orientation $\vec{G}_0$ of $G_0$ with $R(\vec{G}_0,w)\geq K$ if and only if there exists an orientation $\vec{G}$ of $G$ with $R(\vec{G})\geq K$. Moreover, given an orientation $\vec{G}_0$ of $G_0$, we can compute an orientation $\vec{G}$ of $G$ with $R(\vec{G})\geq R(\vec{G}_0,w)$ in polynomial time.
\end{proposition}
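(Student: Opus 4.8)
The plan is to iteratively contract mixed cycles until the graph becomes acyclic, then assign the weight function so that each contracted vertex carries the total number of original vertices it represents. First I would describe the core operation: given the current mixed graph $H$ (which at the start is $G$, with $w \equiv 1$), if $H$ contains a mixed cycle $C$, then any orientation of $H$ must orient the edges of $C$ somehow, but regardless of how the edges not on $C$ are oriented, there is an optimal orientation in which $C$ is oriented as a directed cycle; this is because making $C$ a directed cycle can only increase the set of reachability pairs (it identifies all vertices of $C$ in the reachability preorder without destroying any existing reachabilities). Hence we may \emph{contract} $V(C)$ into a single vertex $v_C$, replacing parallel arcs/edges and deleting loops, and set $w(v_C) = w(V(C))$, keeping $w$ unchanged elsewhere. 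The key point that this operation preserves the optimum value is exactly \Cref{contract}: applying it with $D = \vec{H}$ an orientation of $H$ in which $C$ is directed (which we may assume is optimal), and $X = V(C)$, gives $R(D', w') = R(D, w)$ since $D[X]$ is strongly connected; and for a general orientation $\vec{H}$ of $H$, \Cref{contract} gives $R(D',w') \ge R(D,w)$, so no value is lost in either direction. Since every contraction strictly decreases $|V(H)|$, after at most $|V(G)|$ iterations we reach an acyclic mixed graph $G_0$; detecting a mixed cycle and contracting it is polynomial, so the whole construction runs in polynomial time. Throughout, $|V(G_0)| \le |V(G)|$ and $|A(G_0)| \le |A(G)|$ (contraction never creates new arcs beyond images of existing ones, and we may suppress duplicates — or keep them, which does not hurt the bound since it is stated with $\le$), and $w(G_0) = w(G) = |V(G)| \le n$; if one wants $w(G_0) = n$ exactly one can add an isolated vertex of the appropriate weight, which changes $R$ only by an additive constant depending on the instance, or more cleanly one absorbs the equality into the definition of $n = |V(G)| + w(G)$ for the produced instance.

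For the forward and backward equivalence of the thresholds: let $(G_i, w_i)$ denote the sequence of instances with $(G_0^{\text{init}}, w_0^{\text{init}}) = (G, \mathbf{1})$ and each $(G_{i+1}, w_{i+1})$ obtained from $(G_i, w_i)$ by one cycle contraction. I would prove by induction on $i$ that for every positive integer $K$, there is an orientation $\vec{G_i}$ with $R(\vec{G_i}, w_i) \ge K$ iff there is an orientation $\vec{G_{i+1}}$ with $R(\vec{G_{i+1}}, w_{i+1}) \ge K$. For the ``only if'' direction, given $\vec{G_i}$, first note that we may assume $\vec{G_i}$ orients the contracted cycle $C$ as a directed cycle without decreasing $R(\cdot, w_i)$ — this is the one genuine lemma to verify, and it follows because reorienting the edges of a mixed cycle into a consistent directed cycle only adds reachability pairs (every path using an edge of $C$ can be rerouted). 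Then contracting $C$ in this orientation gives, by \Cref{contract} with strong connectivity, an orientation $\vec{G_{i+1}}$ with equal objective value. For the ``if'' direction, given $\vec{G_{i+1}}$, ``uncontract'': orient $C$ as a directed cycle and take the inherited orientation on all other edges; \Cref{contract} (the inequality $R(D', w') \ge R(D, w)$, read in the direction that the contracted graph has value at least that of the uncontracted one, applied to this specific orientation where equality in fact holds) shows the resulting orientation $\vec{G_i}$ satisfies $R(\vec{G_i}, w_i) \ge R(\vec{G_{i+1}}, w_{i+1}) \ge K$. Chaining the induction from $i = 0$ to the final acyclic instance $(G_0, w)$ and noting $R(\vec{G}) = R(\vec{G}, \mathbf{1})$ gives the stated biconditional. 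The final ``Moreover'' — reconstructing an orientation of $G$ from one of $G_0$ with $R(\vec{G}) \ge R(\vec{G_0}, w)$ — is obtained by replaying the uncontraction step by step: at each uncontraction we orient the relevant cycle as a directed cycle and keep everything else, and the inequality from \Cref{contract} propagates, each step being polynomial-time and there being at most $|V(G)|$ steps.

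I expect the main obstacle to be the claim that, for a single mixed cycle $C$ in $\vec{H}$, reorienting $C$ to be a directed cycle does not decrease $R(\vec{H}, w)$. This needs a careful argument that no reachability pair is destroyed: one shows that if $v$ was reachable from $u$ before the reorientation, using possibly several edges of $C$, then after the reorientation $v$ is still reachable from $u$, because all vertices of $C$ now lie in a single strongly connected set, so any path can be ``patched'' by routing through $C$ in the consistent direction. Handling multi-edges and the bookkeeping of $w$ under repeated contractions (and making sure that when several mixed cycles share vertices we contract them one at a time, re-checking for mixed cycles after each contraction) are the routine but slightly delicate points; everything else is a direct application of \Cref{contract}.
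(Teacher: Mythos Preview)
Your approach is essentially the same as the paper's: iteratively contract mixed cycles, track total weight, and use \Cref{contract} for both directions of the equivalence. The one substantive difference is that the step you flag as the ``main obstacle'' --- first reorienting the cycle $C$ to be directed before contracting in the forward direction --- is entirely unnecessary. The inequality $R(D',w')\geq R(D,w)$ in \Cref{contract} holds for \emph{any} orientation, with no strong-connectivity assumption on $D[X]$; so from an arbitrary $\vec{G}_i$ with $R(\vec{G}_i,w_i)\geq K$ you can simply take the induced orientation of $G_{i+1}$ and get $R(\vec{G}_{i+1},w_{i+1})\geq R(\vec{G}_i,w_i)\geq K$ directly. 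The paper does exactly this, and the equality case of \Cref{contract} is only invoked in the backward (uncontraction) direction, where you orient $C$ as a directed cycle by construction. So your extra lemma is correct but redundant, and dropping it removes what you identified as the delicate point. The remark about possibly needing to pad to achieve $w(G_0)=n$ is also a slight confusion: since you start with $w\equiv 1$ and contraction preserves total weight, $w(G_0)=|V(G)|=n$ on the nose.
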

\begin{proof}
    Let $G$ be an instance of MMRO. We now define a sequence $G^0,\ldots,G^q$ of mixed graphs and a sequence of weight functions $w^0:V(G^0)\rightarrow \mathbb{Z}_{\geq 0},\ldots,w^q:V(G^q)\rightarrow \mathbb{Z}_{\geq 0}$. We first set $G^0=G$ and define $w^0:V(G^0)\rightarrow \mathbb{Z}_{\geq 0}$ by $w^0(v)=1$ for all $v \in V(G^0)$. Now suppose that $G^0,\ldots,G^{i}$ and $w^0,\ldots,w^{i}$ have already been constructed for some $i\geq 0$. If  $G^{i}$ is acyclic, we do not continue the sequences. Otherwise, we find a mixed cycle $C$ in $G$. We now obtain $G^{i+1}$ from $G^{i}$ by contracting $C$ into a new vertex $v_C$. We further define $w^{i+1}:V(G^{i+1})\rightarrow {\mathbb{Z}_{\geq 0}}$ by $w^{i+1}(v_C)=w^{i}(C)$ and $w^{i+1}(v)=w^{i}(v)$ for all $v \in V(G^{i+1})\setminus \{v_C\}$. Observe that for every $i\geq 1$ for which $(G^{i},w^{i})$ is defined, we have $|V(G^i)|<|V(G^{i-1})|$. It hence follows that this procedure terminates and $q \leq n$. We set $(G_0,w)=(G^q,w^q)$. Observe that for $i \in [q]$, we clearly have that $(G^{i},w^{i})$ can be computed from $(G^{i-1},w^{i-1})$ in polynomial time. As $q \leq n$, we obtain that $(G_0,w)$ can be computed in polynomial time from $G$. Next, we recursively have $|V(G^i)|\leq |V(G^{i-1})|,|A(G^i)|\leq |A(G^{i-1})|$, and $w(G^i)= n$ for $i \in [q]$. This yields that $|V(G_0)|\leq |V(G)|,|A(G_0)|\leq |A(G)|$, and $w(G_0)=n$.

    Now consider some positive integer $K$ and suppose that there exists an orientation $\vec{G}$ of $G$ with $R(\vec{G})\geq K$. It follows directly by construction that there exists an orientation $\vec{G}^0$ of $G^0$ with $R(\vec{G}^0,w^0)\geq K$. We will inductively show that there exists an orientation $\vec{G}^{i}$ of $G^{i}$ with $R(\vec{G}^{i},w^{i})\geq K$. Consider some $i \in [q]$, let $C$ be the mixed cycle in $G^{i-1}$ used when constructing $G^{i}$. Inductively, we may suppose that there exists an orientation $\vec{G}^{i-1}$ of $G^{i-1}$ with $R(\vec{G}^{i-1},w^{i-1})\geq K$. Let $\vec{G}^{i}$ be the orientation of $G^{i}$ in which every edge of $G^{i}$ has the same orientation as in $\vec{G}^{i-1}$. It follows from Proposition \ref{contract} that $R(\vec{G}^{i},w^{i})\geq R(\vec{G}^{i-1},w^{i-1})\geq K$. Inductively, we obtain in particular that there exists an orientation $\vec{G}_0$ of $G_0$ with $R(\vec{G}_0,w_0)\geq K$.

    Now suppose that there exists an orientation $\vec{G}_0$ of $G_0$ with $R(\vec{G}_0,w_0)\geq K$. We now define a sequence $\vec{G}^q,\ldots,\vec{G}^0$ of digraphs such that $\vec{G}^{i}$ is an orientation of $G^{i}$ for $i \in [q]_0$. First let $\vec{G}^q=\vec{G}_0$. Now suppose that $\vec{G}^q,\ldots,\vec{G}^{i}$ has already been defined for some $i \in [q]$ and let $C$ be the mixed cycle that is used when constructing $G^{i}$ from $G^{i-1}$. As $C$ is a mixed cycle, we clearly have that a strongly connected orientation $\vec{C}$ of $C$ can be computed in polynomial time. We now let $\vec{G}^{i-1}$ be the unique orientation of $G^{i-1}$ in which all edges in $E(C)$ have the orientation they have in $\vec{C}$ and all edges in $E(G^{i-1})\setminus E(C)$ have the orientation they have in $\vec{G}^{i}$. As $\vec{C}$ is strongly connected, it follows from Proposition \ref{contract} that $R(\vec{G}^{i-1},w^{i-1})= R(\vec{G}^{i},w^{i})$. Moreover, it is not difficult to see that $\vec{G}^{i-1}$ can be computed from $\vec{G}^{i}$ in polynomial time. This finishes the description of $\vec{G}^q,\ldots,\vec{G}^0$. Finally, we set $\vec{G}=\vec{G}^0$. Observe that $\vec{G}$ is an orientation of $G$. Inductively, we obtain that $R(\vec{G})=R(\vec{G}^0,w^0)=R(\vec{G}_0,w_0)\geq K$. Moreover, as $q \leq n$, we obtain that the time for the whole procedure of computing $\vec{G}$ is polynomial.
\end{proof}

We now show the other direction, that is, that an instance of WAMMRO can be reduced to an instance of MMRO. This result will be applied in the hardness proof in Section \ref{sec3}. While it seems evident that a vertex of positive weight can be replaced by a collection of vertices of appropriate size, which are sufficiently connected among each other, the main difficulty is to incorporate vertices of weight 0. This additional difficulty leads to the following statement and its proof becoming a bit technical. For the next result, recall that, given an instance $(G,w)$ of WAMMRO, we use $n$ for $|V(G)|+w(G)$.
\begin{proposition}\label{smallbig}
    Let $(G,w)$ be an instance of WAMMRO. Then, in polynomial time, we can compute an instance $G_0$ of MMRO and an integer $K_0 \leq n^9$ such that for every positive integer $K$, there exists an orientation $\vec{G}_0$ of $G_0$ with $R(\vec{G}_0)\geq Kn^8+K_0$ if and only if there exists an orientation $\vec{G}$ of $G$ with $R(\vec{G},w)\geq K$. 
\end{proposition}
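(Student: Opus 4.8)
The plan is to blow up each vertex of $(G,w)$ into a strongly connected ``blob'' of size proportional to its weight (scaled up by a factor $n^4$), so that under any orientation a blob behaves like a single vertex of that multiplicity, and to mirror each edge and each arc of $G$ by exactly one edge or arc joining representatives of the two blobs. The one genuine subtlety is that a vertex $v$ with $w(v)=0$ cannot be replaced by zero vertices while it may still be a cut vertex of $G$; such a vertex will be kept as a single vertex, and the (orientation-dependent but small) contribution of these singleton blobs to $R$ will be swallowed by the scaling factor.

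Concretely, I would assume $n\ge 2$ (the case $n\le 1$ being trivial) and proceed as follows. For each $v$ with $w(v)\ge 1$, introduce a set $S_v$ of $w(v)\cdot n^4$ fresh vertices and add arcs making $G_0[S_v]$ a single directed cycle through $S_v$ (a single vertex if $|S_v|=1$, a digon if $|S_v|=2$); in particular $G_0[S_v]$ is strongly connected. For each $v$ with $w(v)=0$, let $S_v=\{s_v\}$ be one fresh vertex. For each edge $uv\in E(G)$, add an edge between some $r_u\in S_u$ and some $r_v\in S_v$, and for each arc $(u,v)\in A(G)$, add an arc from some $r_u\in S_u$ to some $r_v\in S_v$. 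Set $K_0:=w(G)(n^8-n^4)$. Since $w$ is given in unary we have $|V(G_0)|\le w(G)n^4+|V(G)|\le n^5$, so $G_0$ is constructible in polynomial time, and $K_0\le n\cdot n^8\le n^9$. The only edges of $G_0$ are the mirrors of the edges of $G$, so orienting $G_0$ is the same thing as orienting $G$; let $\vec{G}\leftrightarrow \vec{G}_0$ denote this bijection, where each mirror edge inherits the orientation $\vec{G}$ gives to the corresponding edge of $G$.

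The heart of the argument is the following claim about corresponding orientations: for $a\in S_u$ and $b\in S_v$ with $(a,b)\in P(G_0)$, one has $\kappa_{\vec{G}_0}(a,b)=1$ if and only if $u=v$, or $u\neq v$ and $\kappa_{\vec{G}}(u,v)=1$. For the ``if'' part, each blob is strongly connected, so if $v$ is reachable from $u$ in $\vec{G}$ then $b$ is reachable from $a$ in $\vec{G}_0$ by moving through the blobs along the corresponding correctly oriented mirror arcs. For the ``only if'' part, the only arcs of $\vec{G}_0$ leaving a blob are mirror arcs, so recording the successive blobs visited by a directed $ab$-path in $\vec{G}_0$ produces a directed $uv$-walk in $\vec{G}$, whence $v$ is reachable from $u$. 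Feeding this claim into $R(\vec{G}_0)=\sum_{(a,b)\in P(G_0)}\kappa_{\vec{G}_0}(a,b)$ and splitting the sum according to which blobs $a$ and $b$ lie in: the pairs inside one blob contribute $\sum_v|S_v|(|S_v|-1)=n^8\sum_vw(v)^2-n^4w(G)$, the pairs between two blobs of positive weight contribute $n^8\sum_{u\ne v}w(u)w(v)\kappa_{\vec{G}}(u,v)=n^8\big(R(\vec{G},w)-2\sum_v{w(v)\choose 2}\big)$, and the remaining pairs (those meeting a weight-$0$ singleton) contribute some $E(\vec{G})$ with $0\le E(\vec{G})\le 2\,|V(G)|\cdot|V(G_0)|\le 2n^6$. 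Using $2{w(v)\choose 2}=w(v)^2-w(v)$, these combine to $R(\vec{G}_0)=n^8\big(R(\vec{G},w)+w(G)\big)-n^4w(G)+E(\vec{G})$ with $0\le E(\vec{G})\le n^8-1$.

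The proposition then follows. If some $\vec{G}$ satisfies $R(\vec{G},w)\ge K$, then the corresponding $\vec{G}_0$ satisfies $R(\vec{G}_0)\ge n^8(K+w(G))-n^4w(G)=Kn^8+K_0$. Conversely, if $R(\vec{G}_0)\ge Kn^8+K_0$ for some $\vec{G}_0$, then $n^8R(\vec{G},w)+E(\vec{G})\ge Kn^8$, hence $n^8R(\vec{G},w)>Kn^8-n^8$ and so $R(\vec{G},w)\ge K$. The main obstacle is exactly the weight-$0$ vertices: one must retain them to preserve the path structure of $G$, yet argue that their unavoidable and orientation-dependent contribution $E(\vec{G})$ is provably smaller than one ``unit'' $n^8$ of the rescaled objective, and then carry out the bookkeeping that pins down the additive constant $K_0$ precisely.
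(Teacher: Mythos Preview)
Your proof is correct and follows essentially the same approach as the paper: scale positive weights by $n^4$, replace each vertex by a strongly connected blob of the scaled size (the paper uses a digon-star, you use a directed cycle---both work), keep weight-$0$ vertices as singletons, and show that the residual contribution from pairs touching a weight-$0$ vertex is strictly less than $n^8$. Your $K_0=w(G)(n^8-n^4)$ is exactly the same constant the paper obtains (after simplifying its expression $2\sum_{v}\binom{n^4w(v)}{2}-n^8\binom{w(v)}{2}$), and your bookkeeping leading to $R(\vec{G}_0)=n^8R(\vec{G},w)+K_0+E(\vec{G})$ with $0\le E(\vec{G})<n^8$ mirrors the paper's final computation.
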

\begin{proof}
Let $V_0=w^{-1}(0)$ and $V_1 =V(G)\setminus V_0$. We further let $P_1$ contain the pairs $(u,v)\in P(G)$ with $\{u,v\}\subseteq V_1$ and we set $P_0=P(G)\setminus P_1$. We further set $K_0=2\sum_{v \in V_1}{n^4w(v)\choose 2}-n^8{w(v)\choose 2}$. Observe that 
\begin{align*}
    K_0&=2\sum_{v \in V_1}{n^4w(v)\choose 2}-n^8{w(v)\choose 2}\\
    &=2\sum_{v \in V_1}
\frac{1}{2}w(v)n^4(n^4-1)\\
&\leq \sum_{v \in V_1}w(v)n^8\\
&\leq n^9.\end{align*}
We now define $w':V(G)\rightarrow\mathbb{Z}_{\geq 0}$ by $w'(v)=1$ for all $v \in V_0$ and $w'(v)=w(v)n^4$ for all $v \in V_1$.
    We obtain $G_0$ from $G$ in the following way: for every $v \in V(G)$, we add a set $X_v$ of $w'(v)-1$ vertices and for every $v' \in X_v$, we add a digon linking $v$ and $v'$. For simplicity, we use $X'_v$ for $X_v \cup \{v\}$. This finishes the description of $G_0$. Observe that $G_0$ and $K_0$ can be computed in polynomial time from $G$. Let $w_0:V(G_0)\rightarrow \mathbb{Z}_{\geq 0}$ be the function defined by $w_0(v)=1$ for all $v \in V(G_0)$. 

Next observe that there exists a direct bijection between the orientations of $G$ and the orientations of $G_0$. We say that an orientation $\vec{G}$ of $G$ and an orientation $\vec{G}_0$ of $G_0$ are {\it associated} if they agree on $E(G)$.

Now let $\vec{G}$ be an orientation of $G$ and $\vec{G}_0$ the associated orientation of $G_0$. By a repeated application of Proposition \ref{contract} and the definitions of $R$ and $K_0$, we have 
\begin{align*}
    R(\vec{G}_0)&=R(\vec{G}_0,w_0)\\
    &=R(\vec{G},w')\\
    &=2\sum_{v \in V(G)}{w'(v)\choose 2}+\sum_{(u,v) \in P(G)}w'(u)w'(v)\kappa_{\vec{G}}(u,v)\\
       &=2\sum_{v \in V(G)}{n^4 w(v)\choose 2}+\sum_{(u,v) \in P_1}w'(u)w'(v)\kappa_{\vec{G}}(u,v)+\sum_{(u,v) \in P_0}w'(u)w'(v)\kappa_{\vec{G}}(u,v)\\
       &=K_0+2n^8\sum_{v \in V(G)}{w(v)\choose 2}+n^8\sum_{(u,v) \in P_1}w(u)w(v)\kappa_{\vec{G}}(u,v)+\sum_{(u,v) \in P_0}w'(u)w'(v)\kappa_{\vec{G}}(u,v)\\
       &=n^8R(\vec{G},w)+K_0+\sum_{(u,v) \in P_0}w'(u)w'(v)\kappa_{\vec{G}}(u,v).
\end{align*}

    Further observe that $\sum_{(u,v) \in P_0}w'(u)w'(v)\kappa_{\vec{G}}(u,v)\leq n^5|P(G)|<n^8$. Hence, as $R(\vec{G},w)$ is an integer, the statement follows.
\end{proof}
\section{Hardness result}\label{sec3}
This section is dedicated proving \Cref{main1}. The proof is cut in two parts. First, in \Cref{apxwa}, we prove the APX-hardness of WAMMRO. This part is a reduction from 3BMax2Sat and the more challenging part. Our reduction is similar to the one proving the NP-hardness of the complete solution version of FPMMRO in \cite{ARKIN2002271}. However, some more care is needed. Roughly speaking, in both reductions, we actually want to only consider the reachabilities among certain pairs corresponding to clauses being satisfied. While in the reduction in \cite{ARKIN2002271} the restriction to these pairs can be obtained from the problem definition, for the hardness of WAMMRO, we need to make sure by some additional connections that we can control the contribution of the remaining pairs to the objective function. Further, the restricted number of variable occurences allows us to make sure that these additional reachabilities do not dominate those obtained by the pairs of vertices we actually want to consider, which is crucial for the APX-hardness. 

In the second part of the proof, we combine  \Cref{apxwa} and \Cref{smallbig} to obtain \Cref{main1}. While this second part is calculation heavy, it does not present any conceptual difficulties.

We now give the hardness proof for WAMMRO.
\begin{lemma}\label{apxwa}
Unless $P=NP$, there is no polynomial-time algorithm whose input is an instance $(G,w)$ of WAMMRO  and a positive integer $K\geq \frac{1}{16}n$, that outputs 'yes' if $G$ admits an orientation $\vec{G}$ with $R(\vec{G},w)\geq K$, and that outputs 'no' if $R(\vec{G},w)<\frac{34220}{34221}K$ holds for every orientation $\vec{G}$ of $G$.
\end{lemma}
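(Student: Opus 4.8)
The plan is to reduce from 3BMax2Sat, using \Cref{max2} for the gap and \Cref{34} to control the numerology. Let $(X,\mathcal{C})$ be a 3BMax2Sat instance with $m=|\mathcal{C}|$ and threshold $K'$; after deleting tautological clauses and treating a clause with a repeated variable as a unit clause, we may assume every clause has two literals on distinct variables, so $\sum_{x\in X}r_x=2m$, where $r_x\leq 3$ is the number of clauses containing a literal of $x$. I build $G$ essentially as in the reduction of Arkin and Hassin \cite{ARKIN2002271}: an undirected edge $e_x=u_xv_x$ for each variable (orienting $u_x\to v_x$ encodes ``$x$ true''); vertices $f_C,l_C$ for each clause $C$; and, for a clause on variables $x$ and $y$, four arcs forming two parallel directed routes from $f_C$ to $l_C$, one through $e_x$ and one through $e_y$, placed so that the route through $e_x$ is traversable precisely when the orientation of $e_x$ makes $C$'s $x$-literal true. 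Hence orientations of $G$ correspond bijectively to assignments of $X$, and $l_C$ is reachable from $f_C$ exactly when $C$ is satisfied.

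Because MMRO/WAMMRO counts \emph{all} pairs, not just the pairs $(f_C,l_C)$, I must neutralise every other pair; this is the role of the weights and the extra connections. Give each $u_x,v_x$ weight $0$, so that no pair inside a variable gadget contributes. The only remaining potentially contributing pairs are ordered pairs of clause vertices; of these, $(f_C,f_{C'})$, $(l_C,f_{C'})$, $(l_C,l_{C'})$ are never realised (the $f$'s are sources, the $l$'s are sinks), while $(f_C,l_{C'})$ with $C\neq C'$ is realised, in an orientation-dependent manner, exactly when $C$ and $C'$ share a variable. For each such ordered pair introduce a fresh weight-$0$ vertex $z_{C,C'}$ with arcs $f_C\to z_{C,C'}$ and $z_{C,C'}\to l_{C'}$; since $z_{C,C'}$ has no other incidences and $l_{C'}$ is a sink, this forces $\kappa_{\vec G}(f_C,l_{C'})=1$ in \emph{every} orientation and affects no other pair, in particular not $(f_C,l_C)$. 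Finally give every $f_C$ and every $l_C$ weight $W$ for a suitable constant $W$. Writing $N_s$ for the (instance-dependent but orientation-independent) number of ordered pairs of distinct clauses sharing a variable, a direct evaluation gives, for the orientation $\vec G$ corresponding to assignment $\phi$,
\[
R(\vec G,w)=\underbrace{2mW(W-1)+W^2N_s}_{=:B}+W^2\cdot(\text{number of clauses of }\mathcal{C}\text{ satisfied by }\phi),
\]
so that $\max_{\vec G}R(\vec G,w)=B+W^2\cdot\mathrm{OPT}$, where $\mathrm{OPT}$ is the optimum of the 3BMax2Sat instance; also $(G,w)$ is clearly constructible in polynomial time.

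It remains to set $K:=B+W^2K'$ and verify the two numerical conditions. The ``yes'' direction is immediate from the displayed identity. For the ``no'' direction one computes, using $34221=17\cdot2013$, that $B+W^2\tfrac{2012}{2013}K'<\tfrac{34220}{34221}(B+W^2K')$ is equivalent to $B<16W^2K'$. This is where \Cref{34} enters: it gives $\mathrm{OPT}\geq\tfrac34 m$, hence in the ``no'' case $\tfrac34 m\leq\mathrm{OPT}\leq\tfrac{2012}{2013}K'$ and so $K'>\tfrac38 m$, while $r_x\leq 3$ gives $N_s\leq\sum_x r_x(r_x-1)\leq 2\sum_x r_x=4m$ and hence $B<6mW^2<16W^2K'$. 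A similar elementary estimate, with $|V(G)|=O(m)$, $w(G)=2mW$ and $K\geq W^2\mathrm{OPT}\geq\tfrac34 mW^2$, shows $K\geq\tfrac1{16}n$ for the chosen $W$. The crux of the whole argument is the design of these extra connections: one must freeze the value of every non-signal pair without perturbing the signal pairs $(f_C,l_C)$, and do so cheaply enough — this is exactly where the bounded-occurrence hypothesis $r_x\leq 3$ is needed — that the constant offset $B$ stays below $16W^2K'$ and the multiplicative gap is preserved.
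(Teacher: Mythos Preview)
Your reduction is correct and is essentially the paper's proof. The paper uses the same variable gadget (an edge $s_xs_{\bar x}$ of weight $0$), the same clause gadget (a source $a_C$ and sink $b_C$, each of weight $1$, linked through the two variable gadgets), and neutralises the cross-clause pairs $(a_C,b_{C'})$ for $\{C,C'\}\in\mathcal Y$ by adding the direct arcs $a_Cb_{C'}$ rather than your weight-$0$ connector vertices $z_{C,C'}$; the effect on $R(\vec G,w)$ is identical. The paper takes $W=1$, so its offset is exactly $2|\mathcal Y|=N_s$, and it handles $K\geq\tfrac1{16}n$ by first disposing of the case $K'\leq\tfrac34|\mathcal C|$ via \Cref{34}. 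Your variant with $W\geq 2$ achieves the same bound via $K\geq B\geq 2mW(W-1)\geq mW^2\geq W^2\mathrm{OPT}$, which is a legitimate alternative. The gap arithmetic ($34221=17\cdot 2013$, hence $B<16W^2K'$ suffices) and the use of bounded occurrence to bound $N_s\leq 4m$ match the paper's $K_0\leq 17K$ computation.
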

\begin{proof}
    Suppose that an algorithm $A$ with the properties described in \Cref{apxwa} exists. In the following, we describe an algorithm based on $A$ whose input is an instance of 3BMax2Sat and a positive integer $K$. We will show that the algorithm has a behavior such that $P=NP$ follows by \Cref{max2}. 

    Let $(X,\mathcal{C})$ be an instance of 3BMax2Sat and $K$ a positive integer. Clearly, we may suppose that every variable of $X$ is contained in at least one clause. As every clause contains exactly two variables, we have $|\mathcal{C}|\geq \frac{1}{2}|X|$. Further, as every variable of $X$ is contained in at most 3 clauses of $\mathcal{C}$, we have $|\mathcal{C}|\leq 3|X|$. If $K \leq \frac{3}{4}|\mathcal{C}|$, we let the algorithm output 'yes'. This is an output of the desired form by \Cref{34}. We may hence suppose that $K > \frac{3}{4}|\mathcal{C}|$. 

In the following, we let $\mathcal{Y}$ denote the set of subsets of $\mathcal{C}$ that consist of exactly two distinct clauses $C,C' \in \mathcal{C}$ that share at least one variable. As every variable of $X$ is contained in at most 3 clauses of $\mathcal{C}$, we have $|\mathcal{Y}|\leq 3|X|$.
    We now construct an instance $(G,w)$ of WAMMRO. First, we let $V(G)$ contain a set $V_\mathcal{C}$ that contains two vertices $a_C$ and $b_C$ for every $C \in \mathcal{C}$. For every $\{C,C'\}\in \mathcal{Y}$, we let $A(G)$ contain the arcs $a_Cb_{C'}$ and $a_{C'}b_C$. Next, we let $V(G)$ contain a set $V_X$ that contains two vertices $s_x$ and $s_{\overline{x}}$ for every $x\in X$. For every $x \in X$, we let $E(G)$ contain an edge linking $s_x$ and $s_{\overline{x}}$. Further, for every $x \in X$ and every $C \in \mathcal{C}$ such that $x \in C$, we let $A(G)$ contain the arcs $a_Cs_{\overline{x}}$ and $s_xb_C$ and for every $x \in X$ and every $C \in \mathcal{C}$ such that $\overline{x} \in C$, we let $A(G)$ contain the arcs $a_Cs_x$ and $s_{\overline{x}}b_C$. Finally, we set $w(v)=1$ for all $v \in V_\mathcal{C}$ and $w(v)=0$ for all $v \in V_X$. This finishes the description of $(G,w)$. We use $n$ for $|V(G)|+w(G)$. For an illustration, see Figure \ref{drftg}.
    
\begin{figure}[h]
    \centering
        \includegraphics[width=.8\textwidth]{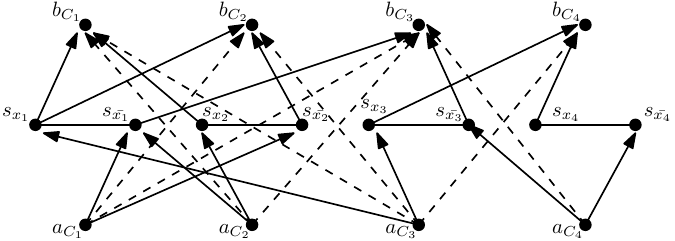}
        \caption{An example of the construction in the proof of \Cref{apxwa}, where $X=\{x_1,x_2,x_3,x_4\}$ and $\mathcal{C}=\{C_1=\{x_1,x_2\},C_2=\{x_1,\overline{x_2}\}, C_3=\{\overline{x_1},\overline{x_3}\},C_4=\{x_3,x_4\}\}$. For the sake of better readability, the arcs with both endvertices in $V_{\mathcal{C}}$ are dashed while the arcs with one endvertex in $V_{\mathcal{C}}$ and one endvertex in $V_X$ are solid. }\label{drftg}
\end{figure}

    Observe that $a_C$ is a source in $G$ for all $C \in \mathcal{C}$, $b_C$ is a sink in $G$ for all $C \in \mathcal{C}$ and $G[V_x]$ is a matching. It follows that $G$ is acyclic, so $(G,w)$ is indeed an instance of WAMMRO.  Further, it is not difficult to see that $(G,w)$ can be constructed in polynomial time from $(X,\mathcal{C})$. Further, we set $K_0=2|\mathcal{Y}|+K$. As $K >\frac{3}{4}|\mathcal{C}|, |\mathcal{C}|\geq \frac{1}{2}|X|$, by construction, and $w(v)\leq 1$ for all $v \in V(G)$, we have that $K_0\geq K\geq \frac{3}{4}|\mathcal{C}|\geq \frac{1}{4}(|X|+|\mathcal{C})|= \frac{1}{8}|V(G)|\geq \frac{1}{16}n$. Further, as  $|\mathcal{Y}|\leq 3|X|\leq 6 |\mathcal{C}|\leq 8K$, we have $K_0=2|\mathcal{Y}|+K\leq 17K$.

    We now apply $A$ to $(G,w)$ and $K_0$ and output the output of $A$. This finishes the description of our algorithm. We still need to show that it has the desired properties. First observe that, as $(G,w)$ can be constructed in polynomial time from $(X,\mathcal{C})$ and $A$ runs in polynomial time, we obtain that the entire algorithm runs in polynomial time.

    Next suppose that there exists an assignment $\phi:X \rightarrow \{True,False\}$ that satisfies a set $\mathcal{C}_0$ of at least $K$ clauses of $\mathcal{C}$.

    We now define an orientation $\vec{G}$ of $G$ in the following way: for every $x \in X$, we orient the edge linking $s_x$ and $s_{\overline{x}}$ from $s_{\overline{x}}$ to $s_x$ if $\phi(x)=True$ and from $s_{x}$ to $s_{\overline{x}}$ if $\phi(x)=False.$

    We show in the following that $R(\vec{G},w)\geq K_0$. First observe that for every $\{C,C'\}\in \mathcal{Y}$, we have that $\kappa_{\vec{G}}(a_{C'},b_C)=\kappa_{\vec{G}}(a_C,b_{C'})=1$. Next, let $C \in \mathcal{C}_0$. As $C$ is satisfied by $\phi$, there exists some $x \in X$ such that either $x \in C$ and $\phi(x)=True$ or $\overline{x}\in C$ and $\phi(x)=False$. In the first case, by construction, we obtain that $a_Cs_{\overline{x}}s_xb_C$ is a directed $a_Cb_C$-path in $\vec{G}$ and in the second case, again by construction, we obtain that $a_Cs_xs_{\overline{x}}b_C$ is a directed $a_Cb_C$-path in $\vec{G}$. In either case, we have that $\kappa_{\vec{G}}(a_C,b_{C})=1$.  
    It follows that \begin{align*}
        R(\vec{G},w)=&2\sum_{v \in V(G)}{w(v)\choose 2}+\sum_{(u,v)\in P(G)}\kappa_{\vec{G}}(u,v)w(u)w(v)\\
        \geq& \sum_{\{C,C'\}\in \mathcal{Y}}\left(\kappa_{\vec{G}}(a_{C'},b_C)+\kappa_{\vec{G}}(a_{C},b_{C'})\right)+\sum_{C \in \mathcal{C}_0}\kappa_{\vec{G}}(a_{C},b_{C})\\
        \geq& 2|\mathcal{Y}|+|\mathcal{C}_0|\\
        \geq &2|\mathcal{Y}|+K\\
        =&K_0.
    \end{align*}

    It follows by assumption that our algorithm outputs 'yes'.

    Now suppose that our algorithm does not output 'no'. It follows by construction that $A$ does not output 'no' when applied to $(G,w)$. We obtain by the assumption on $A$ that there exists an orientation $\vec{G}$ of $G$ such that $R(\vec{G},w)\geq \frac{34220}{34221}K_0$. We now define an assignment $\phi:X \rightarrow \{True, False\}$ in the following way: for every $x \in X$ such that the edge linking $s_x$ and $s_{\overline{x}}$ is oriented from $s_{\overline{x}}$ to $s_x$, we set $\phi(x)=True$, and for every $x \in X$ such that the edge linking $s_x$ and $s_{\overline{x}}$ is oriented from $s_x$ to $s_{\overline{x}}$, we set $\phi(x)=False$. Let $\mathcal{C}_0$ be the set of clauses satisfied by $\phi$. Suppose for the sake of a contradiction that $|\mathcal{C}_0|<\frac{2012}{2013} K$.

    Let $P_1$ be the set of pairs of the form $(a_C,b_{C'})$ with $\{C,C'\}\in \mathcal{Y}$ and let $P_2$ contain the pairs of the form $(a_C,b_C)$ for $C \in \mathcal{C}_0$. Observe that $|P_1|=2|\mathcal{Y}|$.  Observe that, by construction, we have $\kappa_{\vec{G}}(u,v)=0$ for all $(u,v)\in P(G)\setminus(P_1 \cup P_2)$ with $\min \{w(u),w(v)\}\neq 0$.

By the definition of $R$, the definition of $w$, construction, $|\mathcal{C}_0|<\frac{2012}{2013}K$, the definition of $K_0$, and $K \geq \frac{1}{17}K_0$, we obtain that 
    \begin{align*}
        R(\vec{G},w)&=2\sum_{v \in V(G)}{w(v)\choose 2}+\sum_{(u,v)\in P(G)}\kappa_{\vec{G}}(u,v)w(u)w(v)\\
        &=\sum_{(u,v)\in P_1}w(u)w(v)+\sum_{(u,v)\in P_2}w(u)w(v)\\
        &=|P_1|+|P_2|\\
        &=2|\mathcal{Y}|+|\mathcal{C}_0|\\
        &<2|\mathcal{Y}|+\frac{2012}{2013}K\\
        &=K_0-\frac{1}{2013} K\\
        &\leq \frac{34220}{34221}K_0.
    \end{align*}

    We obtain a contradiction to the assumption on $\vec{G}$. This yields that $|\mathcal{C}_0|\geq \frac{2012}{2013}k$.

    It follows that our algorithm has the properties described in \Cref{max2}. We hence obtain that $P=NP$.
\end{proof}

We are now ready to give the proof of \Cref{main1}. For the sake of clarity, we prove the following more formal restatement.

\begin{theorem}\label{main1tech}
Unless $P=NP$, there is no polynomial-time algorithm whose input is an instance $G_0$ of MMRO  and a positive integer $K$, that outputs 'yes' if $G_0$ admits an orientation $\vec{G}_0$ with $R(\vec{G}_0)\geq K$, and that outputs 'no' if $R(\vec{G}_0)<\frac{581773}{581774}K$ for every orientation $\vec{G}_0$ of $G$.
\end{theorem}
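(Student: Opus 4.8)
The plan is to derive \Cref{main1tech} by composing \Cref{apxwa} with the reduction from WAMMRO to MMRO supplied by \Cref{smallbig}; as indicated in the preceding text, no new idea is needed beyond a careful propagation of the gap through the threshold shift. Suppose, for contradiction, that there is a polynomial-time algorithm $B$ that, given an instance $G_0$ of MMRO and a positive integer $K$, outputs \emph{yes} whenever $G_0$ has an orientation $\vec{G}_0$ with $R(\vec{G}_0)\geq K$ and outputs \emph{no} whenever $R(\vec{G}_0)<\tfrac{581773}{581774}K$ for every orientation $\vec{G}_0$ of $G_0$. I would use $B$ to solve the promise problem of \Cref{apxwa} as follows. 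Given an instance $(G,w)$ of WAMMRO and a positive integer $\kappa\geq\tfrac{1}{16}n$: if $n<N_0$ for a suitable absolute constant $N_0$ fixed below, decide the instance directly in polynomial time (the reachability relation of an orientation of $G$ is determined, for each of the at most $\binom{n}{2}$ unordered vertex pairs, by which arc directions between the two vertices are present, so there are at most $4^{\binom{N_0}{2}}$ reachability-distinct orientations to test against $\kappa$ and $\tfrac{34220}{34221}\kappa$); otherwise apply \Cref{smallbig} to $(G,w)$ to compute in polynomial time an instance $G_0$ of MMRO and an integer $K_0\leq n^9$, set $K:=\kappa n^8+K_0$ (a positive integer), run $B$ on $(G_0,K)$, and return its answer.

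It remains to verify correctness in the branch $n\geq N_0$. If $G$ admits an orientation $\vec{G}$ with $R(\vec{G},w)\geq\kappa$, then the forward direction of \Cref{smallbig} yields an orientation $\vec{G}_0$ of $G_0$ with $R(\vec{G}_0)\geq\kappa n^8+K_0=K$, so $B$ outputs \emph{yes}. Conversely, suppose every orientation of $G$ satisfies $R(\vec{G},w)<\tfrac{34220}{34221}\kappa$. Since $R(\cdot,w)$ is integer-valued, no orientation of $G$ reaches $R(\vec{G},w)\geq\kappa'$ for $\kappa':=\lceil\tfrac{34220}{34221}\kappa\rceil$, so \Cref{smallbig} applied with the positive integer $\kappa'$ shows that no orientation of $G_0$ reaches $R(\vec{G}_0)\geq\kappa'n^8+K_0$; hence every orientation of $G_0$ satisfies
\begin{align*}
R(\vec{G}_0)<\kappa'n^8+K_0\leq\Bigl(\tfrac{34220}{34221}\kappa+1\Bigr)n^8+K_0 .
\end{align*}
To conclude that $B$ outputs \emph{no}, it suffices to show the right-hand side is at most $\tfrac{581773}{581774}K=\tfrac{581773}{581774}(\kappa n^8+K_0)$; after rearranging, this is exactly the inequality $n^8+\tfrac{K_0}{581774}\leq\tfrac{547553}{581774\cdot 34221}\,\kappa n^8$, and plugging in $K_0\leq n^9$ and $\kappa\geq\tfrac{1}{16}n$ and simplifying (using $16\cdot 34221=547536$ and $581774=17\cdot 34222$) reduces it to $n\geq 34222\cdot 547536=:N_0$, which holds in this branch.

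Since the algorithm just described is correct on all instances and runs in polynomial time, \Cref{apxwa} yields $P=NP$, which proves \Cref{main1tech}. The entire substance of the argument lies in the displayed inequality and the associated bookkeeping: one must keep $K$ a positive integer, absorb into the additive $n^8$ term the rounding of $\tfrac{34220}{34221}\kappa$ up to an integer, and then dominate the resulting additive slack by the multiplicative slack $\tfrac{581773}{581774}-\tfrac{34220}{34221}=\tfrac{547553}{581774\cdot 34221}$, all the while remembering that $K_0$ may be as large as $n^9$ and hence comparable to $16\kappa n^8$; this is what forces the constant to be substantially weaker than $\tfrac{34220}{34221}$ and also forces $N_0$ to be large, so the brute-force treatment of small instances is genuinely needed. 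I expect this calculation — verifying that $\tfrac{581773}{581774}$ is large enough for the chain of inequalities to close — to be the main, and purely computational, obstacle.
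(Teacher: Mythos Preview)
Your proposal is correct and follows essentially the same route as the paper: compose \Cref{apxwa} with \Cref{smallbig}, brute-force small instances, and propagate the gap through the threshold $K=\kappa n^8+K_0$. Your contrapositive formulation of the backward direction and the paper's direct formulation are equivalent, and your threshold $N_0=34222\cdot 547536=18737776992$ coincides exactly with the paper's cutoff; only the bookkeeping is arranged slightly differently. One minor remark: your justification for the brute-force step is a bit roundabout—since a WAMMRO instance is acyclic, parallel edges are forbidden, so $|E(G)|\le\binom{|V(G)|}{2}$ and you can simply enumerate all $2^{|E(G)|}$ orientations directly rather than reasoning about reachability-distinct ones.
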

\begin{proof}
Suppose that there exists an algorithm $A$ with the properties described in \Cref{main1tech}. We will show that P=NP using \Cref{apxwa}. Let $(G,w)$ be an instance of WAMMRO and $K_1\geq \frac{1}{16}n$ a positive integer. If $n\leq 18737776991$, we can solve the problem by a brute force approach. We may hence suppose that $n \geq 18737776992$. Now, by Proposition \ref{smallbig}, in polynomial time, we can compute an instance $G_0$ of MMRO and a constant $K_0$ with $K_0 \leq n^9$ such that for any positive integer $K$, there exists an orientation $\vec{G}$ of $G$ with $R(\vec{G},w)\geq K$ if and only if there exists an orientation $\vec{G}_0$ of $G_0$ with $R(\vec{G}_0)\leq =Kn^8+K_0$. Let $K'=K_1n^8+K_0$. We now apply $A$ to $G_0$ and $K'$. We output the output of $A$. This finishes the description of our algorithm. As $A$ runs in polynomial time and by construction, we obtain that the entire algorithm runs in polynomial time.

Next suppose that there exists an orientation $\vec{G}$ of $G$ with $R(\vec{G},w)\geq K_1$. Then, by construction, there exists an orientation $\vec{G}_0$ of $G_0$ with $R(\vec{G}_0)\geq K'$. It follows by assumption that $A$ outputs 'yes' and hence our algorithm outputs 'yes'.

Now suppose that our algorithm does not output 'no'. As $K_1n^8\geq \frac{1}{16}n^9\geq \frac{1}{16}K_0$, we have $\frac{1}{581774} K'\leq \frac{17}{581774} K_1n^8=\frac{1}{34222}K_1n^8$. We obtain by assumption that there exists an orientation $\vec{G}_0$ of $G_0$ with $R(\vec{G}_0)\geq \frac{581773}{581774}K'\geq \frac{34221}{34222} K_1n^8+K_0$. We hence obtain by assumption that there exists an orientation $\vec{G}$ of $G$ with $R(\vec{G},w)\geq \lfloor \frac{34221}{34222}K_1\rfloor$. As $K_1\geq \frac{1}{16}n\geq 1171111062$, we obtain that $R(\vec{G},w)\geq \lfloor \frac{34221}{34222}K_1\rfloor\geq \frac{34221}{34222}K_1-1\geq \frac{34220}{34221}K_1$. We hence obtain that $P=NP$ by \Cref{apxwa}.
\end{proof}
\section{Algorithmic results}\label{sec4}
The objective of this Section is to prove Theorems \ref{main2} and \ref{main3}.  We now give an overview of these proofs. We mainly explain the proof strategy for Theorem \ref{main2} and later point out how to adapt it for the algorithm of \Cref{main3}.

 Roughly speaking, the main strategy of the algorithm is to cut a given instance into small parts which are highly structured, then compute optimal solutions for these parts and put them together. One important substructure are so-called undirected components. Namely, given an instance $(G,w)$ of WAMMRO, an {\it undirected component} of $G$ is a component of $G\setminus A(G)$. 

 Consider an undirected component $T$ of $G$. Observe that $T$ is a tree as $G$ is acyclic. Our objective is to find a set of orientations $\mathcal{T}$ of $T$ such that for any orientation $\vec{G}$ of $G$, there exists some $\vec{T} \in \mathcal{T}$ with the property that when reorienting all edges in $E(T)$ according to $\vec{T}$, we  obtain an orientation that is at least as good as $\vec{G}$. Formally, given an instance $(G,w)$ of WAMMRO and an undirected component $T$ of $G$, we say that a set $\mathcal{T}$ of orientations of $T$ is an {\it optimal replacement set} for $((G,w),T)$ if for every orientation $\vec{G}$ of $G$, there exists some $\vec{T} \in \mathcal{T}$ such that $R(\vec{G}\langle\vec{T}\rangle,w)\geq R(\vec{G},w)$. Here, recall that $\vec{G}\langle\vec{T}\rangle$ is the unique orientation of $G$ in which all edges in $E(T)$ have the orientation they have in $\vec{T}$ and all other edges in $E(G)$ have the orientation they have in $\vec{G}$.

 The importance of optimal replacement sets can be summarized in the following result.
 \begin{restatable}
 {lemma}{utgzhu}\label{utgzhu}
     Let $(G,w)$ be an instance of WAMMRO, let $T_1,\ldots,T_q$ be the undirected components of $G$ and for $i \in [q]$, let $\mathcal{T}_i$ be an optimal replacement set for $((G,w),T_i)$. Then there exists an optimal orientation $\vec{G}$ for $(G,w)$ such that $\vec{G}[V(T_i)]\in \mathcal{T}_i$ for $i \in [q]$. 
\end{restatable}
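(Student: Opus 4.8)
The plan is to prove the lemma by processing the undirected components one at a time, repeatedly invoking the defining property of optimal replacement sets. Concretely, I would start from an arbitrary optimal orientation $\vec{G}^0$ of $(G,w)$ and build a sequence $\vec{G}^0,\vec{G}^1,\dots,\vec{G}^q$ of orientations of $G$ as follows: given $\vec{G}^{i-1}$ for some $i\in[q]$, I apply the definition of an optimal replacement set to $\vec{G}^{i-1}$ and $T_i$ to obtain some $\vec{T}_i\in\mathcal{T}_i$ with $R(\vec{G}^{i-1}\langle\vec{T}_i\rangle,w)\ge R(\vec{G}^{i-1},w)$, and set $\vec{G}^i=\vec{G}^{i-1}\langle\vec{T}_i\rangle$. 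The claimed orientation will be $\vec{G}=\vec{G}^q$.

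Two simple inductions finish the proof. First, each $\vec{G}^i$ is optimal: this holds for $\vec{G}^0$ by choice, and if $\vec{G}^{i-1}$ is optimal then $R(\vec{G}^i,w)\ge R(\vec{G}^{i-1},w)$ forces equality, so $\vec{G}^i$ is optimal too; in particular $\vec{G}=\vec{G}^q$ is optimal for $(G,w)$. Second, $\vec{G}^i[V(T_j)]=\vec{T}_j\in\mathcal{T}_j$ for every $j\le i$. The crucial observation here is that the $T_j$, being the components of $G\setminus A(G)$, are pairwise vertex-disjoint and their edge sets partition $E(G)$, so $E(T_i)\cap E(T_j)=\emptyset$ for $i\ne j$; by the definition of $\vec{G}^{i-1}\langle\vec{T}_i\rangle$, passing from $\vec{G}^{i-1}$ to $\vec{G}^i$ only re-orients edges of $E(T_i)$ and leaves every edge of $E(T_j)$ with $j\ne i$ untouched. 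Hence the property established for $T_1,\dots,T_{i-1}$ is preserved and additionally holds for $T_i$. Taking $i=q$ yields $\vec{G}[V(T_j)]\in\mathcal{T}_j$ for all $j\in[q]$.

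I do not anticipate a real obstacle: the statement is essentially bookkeeping. The only points deserving an explicit line are that the operation $\langle\cdot\rangle$ is local to a single undirected component --- this is precisely what guarantees that the replacements carried out for earlier components are not undone, and it relies on the vertex- and edge-disjointness of the undirected components --- and that $G[V(T_i)]=T_i$ (no arc of $G$ has both endvertices in one undirected component, since $T_i$ is a tree and $G$ is acyclic), which is what makes the assertion $\vec{G}^i[V(T_i)]\in\mathcal{T}_i$ well-posed.
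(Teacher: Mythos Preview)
Your proposal is correct and follows essentially the same approach as the paper: start from an optimal orientation, replace the orientation on each $T_i$ in turn using the defining property of an optimal replacement set, and observe that the final orientation is still optimal and now restricts to an element of $\mathcal{T}_i$ on each $V(T_i)$. Your write-up is in fact a bit more explicit than the paper's in justifying why later replacements do not disturb earlier ones and why $G[V(T_i)]=T_i$.
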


 The proof of \Cref{utgzhu} is not difficult. Observe that \Cref{utgzhu} suggests an algorithm for WAMMRO: given an instance $(G,w)$ of WAMMRO, find an optimal replacement set of $T$ of appropriate size for all undirected components $T$ of $G$ and then try all possible orientations of $G$ such that the inherited orientations of the undirected componets come from the optimal replacement sets. It is not difficult to prove that this yields an optimal orientation for $(G,w)$ indeed.

 The difficulty now is to prove that optimal replacement sets of appropriate size exist for all undirected components of $G$ and can be computed sufficiently fast.  As this seems challenging for general instances of WAMMRO, we first restrict the structure of the instances in consideration by a preprocessing step.

Roughly speaking, we consider instances in which the interaction of every undirected component with the remaining part of the mixed graph is limited. Formally, we say that an instance $(G,w)$ of WAMMRO is {\it dismembred} if every undirected component of $G$ either contains at most one vertex of $V(A(G))$ or contains exactly two vertices of $V(A(G))$ and each of them is incident to exactly one arc of $A(G)$. We show that in order to prove \Cref{main2}, it suffices to prove a version for dismembered instances. Moreover, we make the simple observation that we can restrict ourselves to connected instances, where an instance $(G,w)$ is {\it connected} if $UG(G)$ is connected. In order to be able to reuse the result later for the proof of \Cref{main3}, the following strong version taking into account approximate solutions is given.

\begin{restatable}{lemma}{reddismem}\label{reddismem}
    Let $g:\mathbb{Z}_{\geq 0}\times \mathbb{Z}_{\geq 0}\times [0,1]\rightarrow \mathbb{Z}_{\geq 0}$ be a fixed function and suppose that there exists an algorithm $A$ that computes a $(1-\epsilon)$-optimal orientation for every connected, dismembered instance of WAMMRO and every $\epsilon\geq 0$ in time $g(n,k,\epsilon)$. Then, for every $\epsilon\geq 0$, a $(1-\epsilon)$-optimal orientation for every instance of WAMMRO can be computed in time $f(k)g(n,k',\epsilon)n^{O(1)}$ for some $k'=O(k)$.
\end{restatable}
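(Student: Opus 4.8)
The plan is to reduce an arbitrary instance of WAMMRO to a bounded collection of connected, dismembered instances, each of which is not much larger than the original, by a two-stage preprocessing. First I would split the input graph into its connected components (in the sense of $UG(G)$). A crucial observation is that vertices lying in different connected components never contribute a reachable pair, so $R(\vec G,w)$ decomposes additively over the components plus a fixed term coming from the $2\sum_v\binom{w(v)}{2}$ part; hence a $(1-\epsilon)$-optimal orientation for each component yields a $(1-\epsilon)$-optimal orientation overall (one has to be slightly careful, since a ratio bound on each summand transfers to the sum, but the additive fixed term only helps). This justifies restricting to connected instances at the cost of a linear-in-$n$ overhead and no change to $k$.

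Next, for a connected instance I would make it dismembered by repeatedly ``splitting off'' the excessive interaction of an undirected component $T$ with the arcs incident to $V(T)$. The idea: take an undirected component $T$ and the set $F$ of arcs of $A(G)$ with at least one endvertex in $V(T)$; if $T$ already satisfies the dismembered condition with respect to $F$, move on. Otherwise, for each arc $a=uv\in F$, I would \emph{subdivide} $a$ by introducing a fresh vertex of weight $0$ (so that $a$ is replaced by two arcs $uv_a$ and $v_av$), and, more importantly, for each vertex $x\in V(T)\cap V(A(G))$ that is incident to $F$, I would attach its incident arcs to a private new weight-$0$ copy of $x$ joined to $x$ by an undirected edge (or a short undirected gadget), pushing all arc-endpoints off $T$ onto degree-controlled new vertices. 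After this operation, every undirected component contains at most one vertex of $V(A(G))$ or exactly two each incident to exactly one arc, as required. The key point is that this transformation (i) creates only $O(|A(G)|)=O(k)$ new vertices, all of weight $0$, so $n$ grows only by a function of $k$ and $w(G)$ is unchanged; (ii) does not change $|A(G)|$ by more than a constant factor, so $k'=O(k)$; and (iii) \emph{preserves $R(\cdot,w)$ exactly up to a fixed additive constant} under the natural bijection between orientations — indeed each weight-$0$ insertion is exactly the kind of transformation handled by (a minor variant of) Proposition~\ref{contract}, reading it in reverse: replacing a vertex $x$ by $x$ plus a weight-$0$ pendant joined by an edge leaves all relevant reachabilities and all products $w(u)w(v)$ untouched, since the new vertex contributes $0$ to every term of $R(\cdot,w)$. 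Thus an orientation of the transformed instance can be mapped back to an orientation of $G$ with the same $R(\cdot,w)$ value, and conversely; so a $(1-\epsilon)$-optimal orientation of the transformed, connected, dismembered instance yields a $(1-\epsilon)$-optimal orientation of $G$.

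Putting the pieces together: given an arbitrary instance $(G,w)$ and $\epsilon>0$, decompose into connected components $G_1,\dots,G_t$ (here $t\le n$ and each has at most $k$ arcs), apply the dismembering transformation to each to get connected, dismembered instances $(G_i',w_i')$ with $|V(G_i')|\le |V(G_i)|+f(k)$ and $|A(G_i')|=O(k)$, run the assumed algorithm $A$ on each in time $g(|V(G_i')|+w(G_i'),\,O(k),\,\epsilon)\le g(n,k',\epsilon)$, and translate the outputs back and combine. The total running time is $\sum_{i=1}^t g(n,k',\epsilon)\cdot n^{O(1)} = f(k)\,g(n,k',\epsilon)\,n^{O(1)}$ (the $f(k)$ absorbing the size blow-up of the transformation if desired, though here it is not even needed), which is the claimed bound. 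I expect the main obstacle to be the bookkeeping in the dismembering step: one must design the weight-$0$ gadgets so that \emph{every} undirected component of the result meets the dismembered condition — in particular an original vertex incident to several arcs must have those arcs redistributed onto distinct new components, and one has to check that the new undirected edges joining $x$ to its weight-$0$ copies do not themselves merge undirected components in a way that recreates the problem — together with verifying carefully, via Proposition~\ref{contract} applied to the singleton weight-$0$ vertices, that the bijection on orientations is $R(\cdot,w)$-preserving up to the fixed additive constant, so that the approximation ratio is genuinely transferred and not merely the additive gap.
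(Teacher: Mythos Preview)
Your reduction to connected components is fine and matches the paper, but the dismembering step has a genuine gap: the weight-$0$ gadget you sketch cannot achieve the dismembered property. Concretely, suppose $x\in V(T)$ is incident to an arc $a=(y,x)$. If you replace $a$ by $(y,z)$ together with an \emph{edge} $\{z,x\}$ for a new weight-$0$ vertex $z$, then $z$ lies in the same undirected component as $T$ and is now itself a vertex of $V(A(G'))$ --- you have not decreased the number of arc-incident vertices in that component. If instead you connect $z$ to $x$ by an \emph{arc} $(z,x)$, then $x$ is still arc-incident. The same obstruction applies to every local gadget of this kind: to lower the count of arc-incident vertices in a tree component you must disconnect the tree, and the only way to do that while staying inside WAMMRO is to orient some of its edges. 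Your remark that ``one has to check that the new undirected edges \dots\ do not themselves merge undirected components in a way that recreates the problem'' is exactly the point where the argument breaks, and it is not a bookkeeping issue but a structural one.

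The paper takes a different route that sidesteps this. It proves (Lemma~\ref{edgedis}, via Proposition~\ref{treeedge} and Lemma~\ref{vertexdis}) that for any tree $T$ and any set $X\subseteq V(T)$ there is a set $F\subseteq E(T)$ of size at most $5|X|$ whose removal leaves every component either touching $X\cup V(F)$ in at most one vertex, or touching $V(F)$ in exactly two degree-one vertices and avoiding $X$. Applying this with $X=V(A(G))$ inside each undirected component yields a global edge set $F\subseteq E(G)$ with $|F|\le 10k$. One then \emph{branches} over all $2^{|F|}$ orientations of $F$: each branch turns $F$ into arcs, and by construction the resulting partial orientation is a dismembered instance with $|A|\le 11k$. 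Every orientation of $G$ agrees with exactly one branch, so running $A$ on each and taking the best output gives a $(1-\epsilon)$-optimal orientation. This is precisely where the $f(k)=2^{O(k)}$ factor in the statement comes from; contrary to your parenthetical remark, it is essential to the argument, not an artifact one can absorb. Your plan, by trying to build a \emph{single} equivalent dismembered instance, implicitly claims a stronger running time than the lemma asserts, and that stronger claim does not seem attainable by local weight-$0$ surgery.
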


It turns out that for dismembered instances, the situation is much brighter and we can indeed efficiently compute optimal replacement sets of polynomial size. The following is the main lemma for the proof of \Cref{main2}.

\begin{restatable}{lemma}{rdztghok}\label{rdztghok}
    Let $(G,w)$ be a connected, dismembered instance of WAMMRO and let $T$ be an undirected component of $G$. Then, in polynomial time, we can compute an  optimal replacement set $\mathcal{T}$ for $((G,w),T)$ of size $O(n^3)$. 
\end{restatable}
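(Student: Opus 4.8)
The plan is to analyze an undirected component $T$ of a connected, dismembered instance $(G,w)$ according to how it attaches to the rest of $G$ via arcs. Since $(G,w)$ is dismembered, there are essentially three cases for $T$: it touches no arc of $A(G)$ at all; it contains exactly one vertex $r$ of $V(A(G))$ (possibly incident to several arcs, some entering, some leaving); or it contains exactly two vertices $r_1, r_2$ of $V(A(G))$, each incident to exactly one arc. In the first case $G$ is just the tree $T$ itself (by connectedness), and by the Hakimi--Schmeichel--Young characterization we may even take $|\mathcal{T}|=O(1)$, so this is trivial. The real work is in the second and third cases, where I will exploit the tree structure of $T$ (Observation: $T$ is a tree since $G$ is acyclic).

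**The key structural observation.** In an orientation $\vec T$ of the tree $T$, for any vertex $v$ the sets $In_{\vec T}(v)$ and $Out_{\vec T}(v)$ are subtrees, and since $T$ is a tree, once we fix which edges at the ``attachment vertices'' ($r$, or $r_1$ and $r_2$) point toward versus away from those vertices, the contribution of $T$ to $R(\vec G,w)$ decomposes. More concretely: the pairs $(u,v)\in P(G)$ contributing to $R(\vec G,w)$ split into pairs internal to $T$, pairs internal to $G - E(T)$, and pairs with one endpoint reachable through $T$. The interaction with the outside world is entirely mediated by whether $r$ (resp.\ $r_1,r_2$) can reach / be reached within $\vec T$ by the various vertices of $T$; because there are only $O(1)$ attachment vertices, the ``interface'' of $\vec T$ is described by a constant amount of data, namely for each attachment vertex $r$ the pair $(w(In_{\vec T}(r)\cap V(T)),\, w(Out_{\vec T}(r)\cap V(T)))$ together with, in the two-attachment case, whether $r_1$ reaches $r_2$ in $\vec T$ or vice versa or neither. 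I would argue that two orientations of $T$ with the same interface data yield the same value $R(\vec G\langle \vec T\rangle, w)$ regardless of the orientation of $A(G)$ and of $E(G)\setminus E(T)$ — so it suffices to produce, for each achievable interface datum, one orientation of $T$ realizing it that additionally maximizes the ``internal'' reachability $R_{\text{int}}$ (the weighted count of reachable pairs inside $T$) among orientations with that interface. The set $\mathcal T$ of these representative orientations is then an optimal replacement set, and its size is bounded by the number of interface data, which is $O(n^2)$ (two weight-values in $[0,n]$ per attachment vertex) times the $O(1)$ direction-flags, hence $O(n^3)$ as claimed (the extra factor absorbing the two-attachment weight product or a more careful count).

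**Computing the representatives.** For a fixed interface datum I would root $T$ appropriately and use dynamic programming over the tree to compute, subject to the constraints imposed by the interface (which amount to prescribing the orientation of certain edges incident to $r$, or fixing the ``direction'' the path between $r_1$ and $r_2$ must take), an orientation maximizing $R_{\text{int}}$. This is where the Hakimi--Schmeichel--Young machinery or a direct tree DP enters: maximizing weighted reachable pairs inside an oriented tree, possibly with some edge orientations pinned, is a polynomial-time tree problem (each edge splits $T$ into two parts of known weight, and orienting it one way contributes the product of the two part-weights in one direction; subtleties about ``through'' paths are handled by tracking, at each node, the weight of the in-subtree and out-subtree). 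Doing this for all $O(n^2)$ interface data and both direction-flag settings gives the whole set $\mathcal T$ in polynomial time.

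**The main obstacle.** The delicate point is proving the claim that the interface datum really determines $R(\vec G\langle\vec T\rangle,w)$ exactly — i.e.\ that no information about the \emph{internal} reachability pattern of $\vec T$ beyond the scalars $(w(In),w(Out))$ at the attachment vertices can ever matter for pairs leaving $T$. One has to check that a vertex $u\in V(T)$ reaches a vertex $z\notin V(T)$ in $\vec G\langle\vec T\rangle$ if and only if $u$ reaches some attachment vertex within $\vec T$ and that attachment vertex reaches $z$ in $G$ with the $E(T)$-edges deleted — and symmetrically for being reached — and that in the two-attachment case one additionally needs the $r_1$-to-$r_2$ reachability flag to account for paths that enter $T$ at one attachment vertex and leave at the other. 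Verifying this uses acyclicity of $G$ crucially (so that a path cannot leave $T$ and re-enter it) together with the dismembered condition (so that each attachment vertex has the simple arc-structure that makes the bookkeeping finite). Once this equivalence is nailed down, the rest is the routine tree DP and counting sketched above.
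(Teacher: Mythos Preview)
Your high-level strategy—classify orientations of $T$ by a finite ``interface profile'' at the attachment vertices, keep one $R_{\text{int}}$-maximizing representative per profile, and argue this yields an optimal replacement set—is a genuine alternative to the paper's route. The paper works in the opposite direction: it attaches at most three auxiliary vertices to $T$ whose weights summarize the \emph{outside} world (a ``simulation set''), ranges over all $O(n^{3})$ such weight assignments, and for each one solves the resulting arboresque WAMMRO instance exactly by a tree DP. Both approaches end with a tree dynamic program and an $O(n^{3})$ enumeration; yours enumerates \emph{inside} summaries, the paper's enumerates \emph{outside} summaries.

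However, your specific interface is insufficient, and the claim you flag as ``the main obstacle''—that beyond $(w(In_{\vec T}(r)),w(Out_{\vec T}(r)))$ at each attachment vertex plus the reachability flag nothing else matters for pairs crossing $T$—is false in the two-attachment case with both arcs in the same direction. Suppose both arcs enter $T$, at $x_{1}$ and $x_{2}$. An outside vertex $u$ that reaches \emph{both} $x_{1}$ and $x_{2}$ (this is the class $\overline{V_{3}}$ in the paper's proof) contributes $w(u)\cdot w\bigl(Out_{\vec T}(x_{1})\cup Out_{\vec T}(x_{2})\bigr)$ to the cross term, and the union weight depends on $w\bigl(Out_{\vec T}(x_{1})\cap Out_{\vec T}(x_{2})\bigr)$, which is \emph{not} determined by your profile. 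Concretely, take $T$ the path $x_{1}\,a\,b\,c\,x_{2}$ with pendants $p_{a}^{1},p_{a}^{2}$ at $a$, $p_{b}$ at $b$, $p_{c}^{1},p_{c}^{2}$ at $c$, each pendant of weight $1$ and all path vertices of weight $0$. The orientation $x_{1}\to a\to b\leftarrow c\leftarrow x_{2}$ with $a\to p_{a}^{1}$, $p_{a}^{2}\to a$, $b\to p_{b}$, $c\to p_{c}^{1}$, $p_{c}^{2}\to c$ has profile $(0,2,0,2,\text{neither})$ and intersection weight $1$; the orientation $x_{1}\to a\leftarrow b\to c\leftarrow x_{2}$ with $a\to p_{a}^{1},p_{a}^{2}$ and $c\to p_{c}^{1},p_{c}^{2}$ has the \emph{same} profile but intersection weight $0$. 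With a single outside vertex of large weight having arcs to both $x_{1}$ and $x_{2}$, these two orientations give different values of $R(\vec G\langle\vec T\rangle,w)$, so your argument that the same-profile representative dominates $\vec G[V(T)]$ breaks.

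The repair is to add the intersection weight to the interface in the same-direction case (and drop the irrelevant half of each $(In,Out)$ pair, which also fixes the $O(n^{4})$-versus-$O(n^{3})$ count you handwaved). The paper's auxiliary-vertex formulation sidesteps the issue automatically: its third vertex $z_{3}$, with arcs to both $x_{1}$ and $x_{2}$, carries precisely the weight of $\overline{V_{3}}$, so optimizing the arboresque instance $(U,w')$ forces the DP to handle the overlap correctly without ever naming it.
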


\Cref{rdztghok}, \Cref{utgzhu} and \Cref{reddismem} together with \Cref{addwa} will imply \Cref{main2}.
\bigskip

We now give an overview of the proof of \Cref{rdztghok}. Consider an undirected component $T$ of a dismembered, connected instance $(G,w)$ of WAMMRO. Our strategy is to compute a collection of instances of WAMMRO extending $(T,w|V(T))$ such that a collection of orientations of $T$ consisting of the one inherited from an optimal solutions to each of these instances forms an optimal replacement set for $((G,w),T)$. Informally, the idea is that, given an orientation of $E(G)\setminus E(T)$, we want to consider a small mixed supergraph $U$ of $T$ together with a weight function $w'$ on $V(U)$ such that finding an optimal way to extend the orientation of $E(G)\setminus E(T)$ to $T$ corresponds to finding an optimal orientation of $(U,w')$. Formally, a {\it simulation set} for $((G,w),T)$ is a set $\mathcal{U}$ of mixed supergraphs of $T$ such that for every orientation $\vec{G}$ of $G$, there exists some $U \in \mathcal{U}$ and a weight function $w':V(U)\rightarrow [n]_0$ with $|w'|\leq |w|$ that is consistent with $w$ and with the property that for every  orientation $\vec{U}$ of $U$, we have that $\vec{G}\langle\vec{U}\rangle$ is an orientation of $G$, $\vec{U}\langle\vec{G}\rangle$ is an orientation of $U$ and $R(\vec{G}\langle\vec{U}\rangle,w)- R(\vec{G},w)\geq R(\vec{U},w')-R(\vec{U}\langle \vec{G}\rangle,w' )$ holds. 

Once we have a good simulation set $\mathcal{U}$ for $((G,w),T)$, we want to compute an optimal replacement set $\mathcal{T}$ in the following way. For every $U \in \mathcal{U}$ and every appropriate weight function $w'$ on $V(U)$ consistent with $w$, we compute an optimal orientation for $(U,w')$ and add the inherited orientation $\vec{T}$ of $T$ to $\mathcal{T}$. It is not difficult to see that $\mathcal{T}$ is an optimal replacement set indeed.

However, in order to obtain an efficient algorithm from this observation, we need to take care of several things. First, we need to make sure that the number of mixed graphs in $\mathcal{U}$ is small enough. Next, we need to make sure that the mixed graphs in $\mathcal{U}$ contain only slightly more vertices than $T$ in order to be able to enumerate all weight functions extending $w$. Finally, we need to make sure that for every $U \in \mathcal{U}$ and for every weight function $w'$ on $V(U)$, we can compute an optimal orientation for $(U,w')$ sufficiently fast. To this end, we need some more technical definitions. First, for some positive integer $\ell$, an {\it $\ell$-simulation set} for $((G,w),T)$ is a simulation set for $((G,w),T)$ such that every $U \in \mathcal{U}$ is a mixed $\ell$-supergraph of $T$. We further need to define a class of tree-like mixed graphs. Namely, we say that a mixed graph $U$ is {\it arboresque} if $A(G)$ contains an arc $rs$ such that the underlying graph of $G\setminus \{rs\}$ is a tree and one of $d_G^+(r)=d_{UG(G)}(r)=2$ and $d_G^-(s)=d_{UG(G)}(s)=2$ holds. We say that a collection $\mathcal{U}$ of mixed graphs is {\it arboresque} if every element of $U$ is arboresesque.  

We are now ready to formulate our main result on simulation sets.
\begin{restatable}
    {lemma}{rdtfguh}\label{rdtfguh}
    Let $(G,w)$ be a dismembered instance of WAMMRO and $T$ an undirected component of $G$ with $|V(T)\cap V(A(G))|\geq 1$. Then, in polynomial time, we can compute an arboresque 3-simulation set of size at most 2 for $((G,w),T)$. 
\end{restatable}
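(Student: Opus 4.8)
\textbf{Proof plan for \Cref{rdtfguh}.} The plan is to analyze the structure of an undirected component $T$ of a dismembered instance $(G,w)$ according to how it interfaces with the arc set $A(G)$. Since $(G,w)$ is dismembered and $|V(T)\cap V(A(G))|\geq 1$, there are exactly two cases: either $T$ contains exactly one vertex $r$ of $V(A(G))$, or $T$ contains exactly two such vertices, each incident to exactly one arc. In each case I would build the mixed supergraphs in the simulation set by ``summarizing'' the part of $\vec{G}$ outside $T$ by at most three new vertices together with suitable arcs connecting them to the interface vertices of $T$; the weight of each new vertex records the total weight of the corresponding block of $V(G)\setminus V(T)$, which is why we only need $|w'|\le |w|$ and why three extra vertices suffice. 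The arcs joining the new vertices to $T$ are chosen so that the resulting graph is a tree plus one extra arc satisfying the degree condition, i.e.\ arboresque.

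First I would handle the single-interface case. Let $r$ be the unique vertex of $V(T)\cap V(A(G))$. Given an arbitrary orientation $\vec{G}$ of $G$, the influence of the rest of the graph on how we should orient $E(T)$ is captured by two quantities: the total weight of vertices reachable into $r$ from outside $T$ (call this the ``in-block'') and the total weight of vertices reachable from $r$ to outside $T$ (the ``out-block''). I would introduce one new vertex $p$ with an arc $pr$ and one new vertex $q$ with an arc $rq$, set $w'(p)$ and $w'(q)$ to these two totals, and keep $w'$ equal to $w$ on $V(T)$. One must check the defining inequality of a simulation set: that $R(\vec{G}\langle\vec{U}\rangle,w)-R(\vec{G},w)\ge R(\vec{U},w')-R(\vec{U}\langle\vec{G}\rangle,w')$ for every orientation $\vec{U}$ of $U$. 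This amounts to verifying that reorienting $E(T)$ changes the global objective by at least as much as it changes the local (simulated) objective, which follows because every new reachability created or destroyed inside $U$ corresponds to an actual reachability change in $G$ of at least the same weight (the block vertices genuinely realize the claimed reachabilities), while new reachabilities in $G$ not visible in $U$ can only help. Two choices of $U$ may be needed here (hence ``size at most $2$'') to account for whether certain reachabilities between the in-block and out-block already exist in $\vec{G}$ via a path avoiding $T$; this is exactly the point where a single static summary is not enough and we must branch.

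Next I would treat the two-interface case: $V(T)\cap V(A(G))=\{r,s\}$, with $r$ incident to one arc, say leaving or entering via a vertex outside $T$, and similarly for $s$. Here the external part of $\vec{G}$, restricted through the two interface arcs, is summarized by at most three new vertices and a constant number of arcs attached at $r$ and $s$; again the weights record block totals. The combinatorial bookkeeping is heavier because now there can be external paths connecting the block hanging off $r$ to the block hanging off $s$ (in either direction), and whether such a connection exists in $\vec{G}$ cannot be read off from $\vec{G}[V(T)]$ alone; this is where the ``size at most $2$'' is genuinely used, the two elements of $\mathcal{U}$ corresponding to the presence or absence of the relevant external connection. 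In both cases arboresqueness is arranged by picking the single ``extra'' arc to be one of the interface arcs and checking the $d^+$ or $d^-$ degree-two condition at its chosen endpoint, which holds by construction since that endpoint is a freshly added block vertex of small degree.

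The main obstacle I anticipate is the verification of the simulation-set inequality, and in particular arguing that two mixed supergraphs always suffice. One has to carefully partition the pairs in $P(G)$ — pairs inside $V(T)$, pairs inside the external part, and crossing pairs — and show, using the acyclicity of $G$ and the dismembered structure (which forces all interaction with $T$ to pass through $r$, or through $r$ and $s$), that the contribution of each class to $R(\vec{G}\langle\vec{U}\rangle,w)-R(\vec{G},w)$ is faithfully lower-bounded by the corresponding contribution in the simulated instance, for the correctly chosen $U\in\mathcal{U}$. The bound $|w'|\le |w|$ and the $3$-supergraph property will fall out of the construction once the block decomposition is set up; the running time is clearly polynomial since we only inspect reachability relations in $\vec{G}$ and assemble a constant-size gadget.
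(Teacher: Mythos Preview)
Your high-level strategy---summarize the exterior by a few block vertices attached at the interface and verify the simulation inequality by partitioning $P(G)$---matches the paper's. But your case split and your explanation of why two elements of $\mathcal{U}$ are needed are both wrong, and a proof following your plan would not go through.

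The correct split is by the \emph{directions} of the interface arcs, not by the number of interface vertices. When all arcs into $V(T)$ have head $x_{in}$ and all arcs out have tail $x_{out}$ (this subsumes your single-interface case, with $x_{in}=x_{out}$), the paper does use two supergraphs $U_1,U_2$, but not for the reason you give. External paths \emph{avoiding} $T$ are unaffected by reorienting $E(T)$ and cause no difficulty; the real danger is a pair $(u,v)$ with $u,v\notin V(T)$ connected in $\vec{G}$ by a path passing \emph{through} $T$ from $x_{in}$ to $x_{out}$. Reorienting $E(T)$ can destroy this path and break the inequality for such $P_1$-pairs. The paper's $U_2$ is $U_1$ with the unique $x_{in}x_{out}$-path in $T$ pre-oriented as a directed path, and one selects $U=U_2$ exactly when $\kappa_{\vec{G}}(x_{in},x_{out})=1$, so that every orientation of $U$ is forced to preserve the through-route. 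Your plan contains no analogue of this mechanism. In the remaining case---two interface arcs pointing the same way, say both into $T$ at $x_1,x_2$---no branching is used at all, but the single $U$ needs \emph{three} new sources $z_1,z_2,z_3$ with arcs $z_1x_1$, $z_2x_2$, $z_3x_1$, $z_3x_2$, representing external vertices that (in $\vec{G}$) reach $x_1$ only, $x_2$ only, or both. This tripartition is essential: a vertex reaching both $x_i$ can enter $T$ regardless of how the internal $x_1x_2$-path is oriented, whereas one reaching only $x_1$ cannot, and your two-block in/out summary cannot encode this distinction.
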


In order to make use of \Cref{rdtfguh}, we need to be able to efficiently handle arboresque mixed graphs.
Arboresque mixed graphs are a rather minor extension of mixed graphs whose underlying graphs are trees. We use this fact to show that an optimal orientation for an instance $(U,w)$ of WAMMRO such that $U$ is arboresque can be computed in polynomial time, using a dynamic programming approach. \begin{restatable}  
{lemma}{tfzghik}\label{tfzghik}
    Let $(G,w)$ be an arboresque instance of WAMMRO.  Then  an optimal orientation $\vec{G}$ for $(G,w)$ can be computed in polynomial time.
\end{restatable}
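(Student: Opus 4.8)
The plan is to prove \Cref{tfzghik} by a dynamic programming argument that processes the (almost) tree-like structure of an arboresque instance $(G,w)$. Recall that $G$ arboresque means there is an arc $rs \in A(G)$ such that $G' := G \setminus \{rs\}$ has a tree as underlying graph, and moreover (say, by symmetry) $d_G^+(r) = d_{UG(G)}(r) = 2$. First I would observe that an orientation $\vec{G}$ of $G$ is determined by the orientations of the edges in $E(G)$, and that since the only arc of $G$ is $rs$ (in the tree-plus-one-arc case; if there are several arcs in the tree part, they are already oriented and present in $UG(G')$ as arcs, so the underlying graph being a tree still bounds the number of \emph{edges}), the reachability structure is governed by a tree together with the single extra arc $rs$. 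The key structural fact is that in a mixed graph whose underlying graph is a tree, reachability from $u$ to $v$ is possible only along the unique $uv$-path of the tree, and whether that path is a mixed $uv$-path is a purely local condition along the path; adding the one arc $rs$ only creates the possibility of one "shortcut", so reachability in $\vec{G}$ is either via the tree path or via a tree path to $r$, then $rs$, then a tree path from $s$.

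Next I would set up the dynamic program. Root $UG(G')$ at an arbitrary vertex, and for each vertex $x$ let $T_x$ denote the subtree hanging below $x$. The state at $x$ should record: the orientation of the edge joining $x$ to its parent (which of the two directions, or that it is a preoriented arc), together with a bounded amount of information describing which of $r$ and $s$ lie in $T_x$ and, in a summarized form, the total $w$-weight of vertices in $T_x$ that can reach $x$ within $T_x$ (and the weight of those reachable from $x$ within $T_x$), broken down by whether the relevant mixed path does or does not pass through $r$ or $s$. Since $r,s$ are two fixed vertices, there is only a constant number of relevant cases for their position relative to $T_x$, so the state space stays polynomial. I would then merge children of $x$ one at a time, updating the accumulated $w$-weights and adding the contribution to $R(\vec{G},w)$ coming from pairs $(u,v)$ whose tree path has its topmost vertex exactly at $x$; the term $2\sum_v \binom{w(v)}{2}$ is orientation-independent and can be added at the end. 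The single arc $rs$ is incorporated by a final bookkeeping step that, for each state, also counts the pairs $(u,v)$ for which $v$ becomes reachable from $u$ \emph{only} by using $rs$, i.e. $u$ reaches $r$ in $\vec{G'}$ and $v$ is reached from $s$ in $\vec{G'}$ but $u$ does not already reach $v$ in $\vec{G'}$; the degree condition $d_G^+(r)=d_{UG(G)}(r)=2$ limits how $r$ interacts with the tree and keeps this correction tractable.

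The main obstacle I anticipate is correctly and efficiently bookkeeping the interaction between the tree paths and the single shortcut arc $rs$ without double-counting: a pair $(u,v)$ might be connected both via a pure tree path and via the $r,s$-shortcut, and it may also be that the shortcut is only "half available" at a given point of the recursion (we know $u$ can reach $r$ but have not yet seen whether $s$ reaches $v$). Handling this cleanly requires carrying in the state, for the subtree $T_x$, the weight of vertices that reach $x$ \emph{and also} reach $r$ (respectively, are reachable from $x$ and from $s$), so that when the two sides finally meet the contribution can be computed by inclusion–exclusion. Because $r$ and $s$ are fixed single vertices, all of this is a constant number of extra coordinates in the DP state, so the running time stays polynomial in $n$; the remaining work is the routine but somewhat tedious verification that the recurrences are exhaustive and non-overlapping. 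I would conclude by noting that the DP computes $\max_{\vec{G}} R(\vec{G},w)$ together with an optimizing orientation via standard traceback, which gives the claimed polynomial-time algorithm.
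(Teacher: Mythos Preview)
Your proposal is correct and takes essentially the same approach as the paper: a bottom-up dynamic program on the tree $UG(G\setminus\{rs\})$, with extra numerical coordinates in the state to handle the single shortcut arc $rs$ and avoid double-counting. The paper makes two choices that streamline the bookkeeping you flag as the main obstacle: it roots the tree at $s$ (so the shortcut correction is applied once, at the very end), and it exploits $d_G^+(r)=d_{UG(G)}(r)=2$ to observe that $r$ is a source in $G\setminus\{rs\}$, so the only pairs gained by $rs$ are $(r,v)$ with $v\in Out(s)\setminus Out(r)$; consequently the only extra state needed beyond $(k_{in},k_{out},k_{reach})$ is a boolean ``is $v$ reachable from $r$'' and a single weight $k_{shared}=w(Out(v)\cap Out(r))$, rather than the symmetric $r$/$s$ bookkeeping you sketch.
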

 We wish to remark that \Cref{tfzghik} is already a strengthening of the main result of \cite{Hakimi1997OrientingGT}. \Cref{tfzghik} and \Cref{rdtfguh} imply \Cref{rdztghok}. This completes the overview of the proof \Cref{main2}.\medskip

We now give an overview of how to adapt these results to prove \Cref{main3}. We prove approximation analogues of \Cref{utgzhu} and \Cref{rdztghok}. The analogue of \Cref{utgzhu} is rather straight forward and can be proved in a similar way. For the analogue of \Cref{rdztghok}, instead of an optimal replacement set $\mathcal{T}$ for $((G,w),T)$ whose size is polynomial in $n$, we obtain a set $\mathcal{T}$ of orientations of $T$ such that replacing with an orientation from this set may only yield an orientation that is by a small additive constant worse than the first orientation in consideration. On the other hand, we have that $|\mathcal{T}|$ is bounded by $f(k,\epsilon)$. This result is obtained from Lemmas \ref{rdtfguh} and \ref{tfzghik} in a similar way as \Cref{rdztghok}.
 The only difference is that instead of trying all possible weight functions for the mixed graphs in the simulation sets, we restrict to a smaller set of functions, avoiding to run the algorithm several times for very similar weight functions. We then conclude that an orientation that is only a certain additive amount away from an optimal solution can be computed in the desired running time.

 However, we need to prove a small multiplicative loss rather than an additive one. To this end, we prove a result showing that for every connected, dismembered instance of WAMMRO, there exists an orientation $\vec{G}$ of $G$ such that $R(\vec{G},w)$ is above a certain minimum threshold. In order to so, we focus on a largest undirected component of $G$ and analyze the objective value of an optimal solution as described in \cite{Hakimi1997OrientingGT}. Together with the additive approximation result described above, \Cref{main3} follows readily.

The rest of \Cref{sec4} is structured as follows. First, in \Cref{sec41}, we prove \Cref{reddismem}. Next, in \Cref{sec43}, we prove \Cref{rdtfguh}. In Section \ref{sec42}, we give the algorithm proving \Cref{tfzghik}. After, in \Cref{secml}, we conclude our main lemmas, that is, we prove \Cref{rdztghok} and its approximate analogue we need for the proof of \Cref{main3}. In \Cref{sec44}, we prove \Cref{utgzhu} and conclude \Cref{main2}. Finally, in Section \ref{sec45}, we make the necessary adaptions to prove \Cref{main3}.
 
\subsection{Restriction to dismembered instances}\label{sec41}
This section is dedicated to justifying that it suffices to prove our main results for connected, dismembered instances of WAMMRO. 
More concretely, we prove \Cref{reddismem}.

The strategy for the proof of \Cref{reddismem} can roughly be described as follows. Given a connected instance $(G,w)$ of WAMMRO, we wish to find a sufficiently small collection of edges in $E(G)$ whose orientation we guess and that has the property that each of the thus obtained partial orientations of $G$ together with $w$ forms a dismembered instance of WAMMRO. Formally, given an undirected graph $G$ and some $X \subseteq V(G)$, we say that a set $F \subseteq E(G)$ is {\it dismembering} for $(G,X)$ if every component of $G\setminus F$ either contains at most one vertex of $X \cup V(F)$ or it contains no vertex of $X$ and exactly two vertices of $V(F)$ each of which is incident to exactly one edge in $F$. The following result is the main lemma for our proof.
\begin{restatable}{lemma}{edgedis}\label{edgedis}
    Let $G$ be an undirected graph and let $F_0 \subseteq E(G)$ be an edge set such that every component of $G\setminus F_0$ is a tree. Then there exists a dismembering set $F$ for $(G\setminus F_0,V(F_0))$ with $|F|\leq 10 |F_0|$. Moreover, we can compute $F$ in polynomial time.
\end{restatable}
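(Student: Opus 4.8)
The plan is to construct the dismembering set $F$ by a two-phase procedure that operates on each tree component of $G \setminus F_0$ separately. Let $T$ be such a component and let $X_T = V(F_0) \cap V(T)$ be the terminals living in $T$; note $\sum_T |X_T| \le 2|F_0|$ since every edge of $F_0$ contributes at most two endpoints. The goal for $T$ is to delete a small set of edges so that every resulting subtree either touches at most one ``special'' vertex (a terminal of $X_T$ or an endpoint of a deleted edge) or is an ``edge-like'' piece touching no terminal and exactly two deletion-endpoints, one at each end.

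First I would handle the terminals. Root $T$ arbitrarily and, for each terminal $t \in X_T$, delete the (at most) $d_T(t)$ edges incident to $t$, or more economically, process terminals to isolate them: deleting all edges incident to a terminal $t$ costs $d_T(t)$ edges, but this is not bounded by a constant. Instead I would use a standard tree-separation argument: it suffices to make sure that after deletion no component contains two special vertices unless it is an edge-like path between two deletion endpoints. The cleaner route is to first delete, for each terminal $t$, the edge of $T$ leading to its parent (one edge per terminal, total $\le |X_T|$), which separates the subtrees hanging below terminals; then within each resulting subtree one must still break up the tree so that the remaining special vertices (terminals that became roots, plus the newly created deletion-endpoints) are separated. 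The key combinatorial fact I would invoke is that in a tree with $p$ marked vertices, one can find a set of at most $p-1$ edges whose removal leaves each component with at most one marked vertex or a marked-vertex-free path — this follows by repeatedly cutting, at a marked vertex closest to a chosen root, the edge toward its parent, and cleaning up the inter-marked paths. Iterating, the number of edges deleted in $T$ is $O(|X_T| + (\text{number of deletion endpoints created}))$, and since each deletion creates two endpoints a careful accounting gives a linear bound. Summing over all components yields $|F| \le 10|F_0|$, with the constant $10$ absorbing the several rounds of cutting (separating subtrees below terminals, then separating terminals from each other, then trimming the paths between deletion endpoints so each such path becomes a single edge-like piece, then repeating once more for the endpoints created in that trimming step).

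The polynomial-time claim is immediate: each phase is a single DFS/BFS pass over $T$ that identifies the edges to cut, and there are $O(|V(G)|)$ components and edges in total, so the whole computation is linear in the size of $G$ plus $|F_0|$.

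The main obstacle I expect is the bookkeeping to get the constant right: every time we cut an edge to separate two special vertices we introduce two new special vertices (the endpoints of the cut edge), so a naive recursion does not obviously terminate with a linear bound. The resolution is to observe that the \emph{newly created} endpoints are always leaves of the post-cut components and lie on paths that we can immediately simplify, so they do not cascade — formally, one argues that after a bounded number of rounds (which is where the factor absorbed into $10$ comes from) every component is already in one of the two allowed shapes. Making this termination-with-linear-bound argument watertight, rather than the existence of \emph{some} dismembering set, is the technical heart of the proof; the rest is routine tree surgery.
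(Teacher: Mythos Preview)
Your reduction to individual tree components with terminal sets $X_T$ and the bound $\sum_T |X_T| \le 2|F_0|$ is exactly right and matches the paper. The gap is entirely in the single-tree step, and it is a real gap: the cascade you yourself flag is not resolved by the argument you sketch. Your claim that ``the newly created endpoints are always leaves of the post-cut components and lie on paths that we can immediately simplify'' is false in general---if you cut an edge $uv$ in a tree and $u$ had degree $3$ or more, then $u$ is not a leaf in its new component, and that component is not a path. So there is no reason the process terminates after a bounded number of rounds, and no way to extract the constant $10$ (or any constant) from what you have written.

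The paper sidesteps the cascade completely with a two-step trick you are missing. For a single tree $T$ with terminal set $X$: first prune leaves not in $X$ until every leaf is a terminal. Then let $F_1$ be the set of \emph{all} edges incident to a vertex of degree $\ge 3$; a short induction on the tree (Proposition~\ref{treeedge}) shows $|F_1| \le 3 \cdot |E_1(T)| \le 3|X|$ since every leaf is now in $X$. Removing $F_1$ leaves a disjoint union of paths, and in a path each vertex has degree at most $2$, so letting $F_2$ be the remaining edges incident to $X$ gives $|F_2| \le 2|X|$. Now every vertex of $X$ is isolated in $T \setminus (F_1 \cup F_2)$, and every other component is a subpath whose two ends are the only vertices of $V(F)$ it meets---exactly the dismembering condition. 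Total $|F_1|+|F_2| \le 5|X|$, and summing over components gives $10|F_0|$. The key idea you are missing is to kill all branching \emph{first} (globally, in one shot, bounded via the leaf count), so that the second round operates only on paths and cannot cascade.
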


With \Cref{edgedis} at hand, it is not difficult to prove \Cref{reddismem}. Given a connected instance $(G,w)$, we use \Cref{edgedis} to compute a dismembering set $F$ for $(UG(G),V(F_0))$, where $F_0$ is the set of edges in $E(UG(G))$ that correspond to arcs in $A(G)$. We then obtain a collection of dismembered instances of WAMMRO by trying all possible orientations of the edges in $F$ and solve all obtained instances by the algorithm whose existence we suppose. We output the best orientation of $G$ we find. As every orientation of one of the created mixed graphs is an orientation of $G$ and every orientation of $G$ is an orientation of one of the created mixed graphs, we find an optimal orientation for $(G,w)$ during this procedure. Finally, we show that the condition of $G$ being connected can also be omitted.
\medskip

In the following, we first aim to prove \Cref{edgedis}. We will prove a corresponding statement for trees and then conclude \Cref{edgedis}. We first need a simple observation on trees. Here, given a tree $T$, we use $E_1(T)$ for the set of edges in $E(T)$ which are incident to a leaf of $T$ and we use $E_{\geq 3}(T)$ for the set of edges in $E(T)$ which are incident to some $v \in V(T)$ with $d_T(v)\geq 3$.
\begin{proposition}\label{treeedge}
    Let $T$ be a tree. Then $|E_{\geq 3}(T)|\leq 3|E_{1}(T)|$.
\end{proposition}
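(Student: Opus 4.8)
\textbf{Proof plan for Proposition~\ref{treeedge}.}
The plan is to bound the number of ``high-degree'' edges in a tree $T$ by a constant times the number of leaf-edges, exploiting the elementary fact that a tree has few branch vertices relative to its leaves. First I would set up notation: let $L$ be the set of leaves of $T$, let $B=\{v\in V(T):d_T(v)\geq 3\}$ be the set of branch vertices, and recall the standard degree-sum identity $\sum_{v\in V(T)}d_T(v)=2|E(T)|=2(|V(T)|-1)$. Splitting the sum over leaves, degree-$2$ vertices, and branch vertices and using $d_T(v)\geq 3$ for $v\in B$ while $d_T(v)\geq 2$ for every internal vertex, a short manipulation gives $|B|\leq |L|-2<|L|$. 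This is the combinatorial heart of the argument; it is a completely routine counting step, so I expect no real obstacle here.

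Next I would bound $|E_{\geq 3}(T)|$ in terms of $|B|$. Every edge in $E_{\geq 3}(T)$ is incident to at least one vertex of $B$, so $|E_{\geq 3}(T)|\leq \sum_{v\in B} d_T(v)$. To control $\sum_{v\in B} d_T(v)$, I would again use the degree-sum identity: since the degree-$2$ vertices contribute at least $2$ each and the leaves contribute $1$ each, we get $\sum_{v\in B} d_T(v)\leq 2|E(T)|-(|V(T)|-|L|-|B|)\cdot 2-|L|$, and after substituting $|E(T)|=|V(T)|-1$ and simplifying, $\sum_{v\in B} d_T(v)$ collapses to roughly $2|B|+|L|-2$. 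Combining with $|B|<|L|$ then yields $|E_{\geq 3}(T)|\leq 2|B|+|L|\leq 3|L|$.

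Finally, I would observe that $|E_1(T)|\geq |L|/\text{(something small)}$ — in fact, since every leaf is incident to exactly one edge, distinct leaves that are not adjacent to a common vertex give distinct edges, and the only way two leaves share an edge is if $T$ is a single edge $K_2$, in which case $E_{\geq 3}(T)=\emptyset$ and the bound is trivial. Hence $|E_1(T)|=|L|$ whenever $|V(T)|\geq 3$ (a leaf-edge cannot join two leaves unless $T=K_2$), and so $|E_{\geq 3}(T)|\leq 3|L|=3|E_1(T)|$, as desired. The $K_2$ and single-vertex cases are handled separately and trivially. The only mild subtlety is making sure the edge-counting double-count is done cleanly (an edge incident to two branch vertices should not be counted twice when we only need an upper bound — but since we want an upper bound, $|E_{\geq 3}(T)|\leq \sum_{v\in B} d_T(v)$ is safe and the slack is absorbed by the constant $3$).
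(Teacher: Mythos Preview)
Your proposal is correct and takes a genuinely different route from the paper's own proof. The paper argues by minimum counterexample: it picks a leaf $v$, walks along the unique path to the nearest branch vertex $x$, deletes $V(P)\setminus\{x\}$, and tracks how $|E_1|$ and $|E_{\geq 3}|$ change under this deletion (the former drops by exactly $1$, the latter by at most $3$, with a small case split on whether $d_T(x)=3$ or $d_T(x)\geq 4$); induction then finishes. You instead do a single global degree count: the handshake identity gives $\sum_{v\in B}d_T(v)=|L|+2|B|-2$ exactly, and together with the standard $|B|\leq |L|-2$ and the identification $|E_1(T)|=|L|$ for $|V(T)|\geq 3$, the bound $|E_{\geq 3}(T)|\leq \sum_{v\in B}d_T(v)\leq 3|L|-6$ drops out. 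Your approach is shorter, avoids induction, and incidentally yields a slightly sharper constant; the paper's leaf-stripping argument is more structural and would generalise more readily if one later needed to trace \emph{which} edges witness the bound. One cosmetic remark: your displayed inequality for $\sum_{v\in B}d_T(v)$ is in fact an equality (since degree-$2$ vertices contribute exactly $2$), so the ``roughly'' can be dropped.
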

\begin{proof}
    Suppose otherwise and let $T$ be a minimum counterexample. Let $v$ be a leaf of $T$. If $T$ is a path, there is nothing to prove. We may hence suppose that $T$ is not a path and so there exists a unique vertex $x \in V(T)$ with $d_T(x)\geq 3$ that is linked to $v$ by a path $P$ in $T$ such that $d_T(u)=2$ holds for all $u \in V(P)\setminus\{v,x\}$. Let $T'=T-(V(P)\setminus\{x\})$.
    Observe that $T'$ is a tree by construction and hence, as $T'$ is smaller than $T$, we have $|E_{\geq 3}(T')|\leq 3|E_{1}(T')|$. Next, by the definition of $v,x$, and $P$, we have that $|E_{1}(T)|=|E_{1}(T')|+1$. Next, if $d_T(x)\geq 4$, we obtain that the only edge in $E_{\geq 3}(T)\setminus E_{\geq 3}(T')$ is the unique edge in $E(P)$ incident to $x$.
    This yields $|E_{\geq 3}(T)|=|E_{\geq 3}(T')|+1$. If $d_T(x)=3$, then observe that every edge in $E_{\geq 3}(T)\setminus E_{\geq 3}(T')$ is incident to $x$. This yields $|E_{\geq 3}(T)|=|E_{\geq 3}(T')|+3$. In either case, we have $|E_{\geq 3}(T)|\leq |E_{\geq 3}(T')|+3$. We obtain $|E_{\geq 3}(T)|\leq |E_{\geq 3}(T')|+3\leq 3|E_{1}(T')|+3=3(|E_{1}(T)|-1)+3=3|E_{1}(T)|$, a contradiction to $T$ being a counterexample.
\end{proof}

We next conclude an upper bound on the size of dismembering sets in trees.
\begin{lemma}\label{vertexdis}
    Let $T$ be a tree and $X \subseteq V(T)$. Then there exists a dismembering set $F$ for $(T,X)$ with $|F|\leq 5|X|$. Moreover, such a set can be computed in polynomial time.
\end{lemma}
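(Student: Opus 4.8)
The plan is to build the dismembering set $F$ greedily by repeatedly cutting off ``small'' pieces of the tree near vertices of $X$. First I would handle the trivial reductions: if $|X|\leq 1$ we may take $F=\emptyset$, since any tree has at most one component and the condition on components containing at most one vertex of $X\cup V(F)$ is then vacuous. So assume $|X|\geq 2$. I would root $T$ at an arbitrary vertex and process the elements of $X$ in an order of non-increasing depth, but the cleaner route is the following structural one: consider the minimal subtree $T_X$ of $T$ spanning $X$ (the Steiner tree of $X$ in $T$). Every leaf of $T_X$ lies in $X$, so $T_X$ has at most $|X|$ leaves, and hence $|E_1(T_X)|\leq |X|$ and, by \Cref{treeedge}, $|E_{\geq 3}(T_X)|\leq 3|X|$. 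The vertices of $T_X$ of degree $\geq 3$ together with $X$ itself form a set of ``special'' vertices; between consecutive special vertices, $T_X$ consists of internally-degree-$2$ paths.

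The core idea is to cut $T_X$ into pieces each containing at most one special vertex by deleting, for each such internal path $P$ of $T_X$, one edge in the ``middle'' of $P$ — one edge per path suffices to separate its two endpoints. The number of such paths is $O(|X|)$ (it is at most $|E_1(T_X)|+|E_{\geq 3}(T_X)|$, which is $\leq 4|X|$), so this uses at most $4|X|$ edges. After these deletions, each component of $T\setminus F$ restricted to $T_X$ contains at most one special vertex of $T_X$; but I still need to deal with (i) the ``dangling'' subtrees of $T$ hanging off $T_X$ (these contain no vertex of $X$, since $X\subseteq V(T_X)$, and each is attached to $T_X$ at exactly one vertex, so they do not create new violations by themselves), and (ii) ensuring that a component containing two vertices of $V(F)$ has each of those vertices incident to exactly one edge of $F$. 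Point (ii) is where care is needed: when I cut an edge $uv$ of an internal path $P$, the component on, say, the $u$-side may still contain another cut vertex from a different path; the remedy is to cut each internal path of $T_X$ so that the piece between two consecutive cuts (or between a cut and a special vertex) is itself a path with at most two cut-endpoints, each incident to exactly one cut edge — which is automatic if along any simple path of $T_X$ we never place two cuts without a special vertex between them, and conversely place a cut on every internal path. So the recipe is: delete exactly one edge from the interior of each maximal internally-degree-$2$ path of $T_X$. I would then verify directly that every component $C$ of $T\setminus F$ either (a) contains no special vertex and thus no vertex of $X$, and meets $V(F)$ in at most two vertices — and if in two, these are the two endpoints of a single subdivided edge so each is incident to exactly one edge of $F$ — or (b) contains exactly one special vertex and is attached to the rest only through cut edges incident to that analysis.

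Actually, to guarantee the ``exactly one vertex of $X\cup V(F)$'' alternative cleanly when a component does contain an $X$-vertex, I would in addition cut, around each special vertex $s$ (i.e.\ each $s\in X\cup E_{\geq3}(T_X)$-endpoint), all the $T_X$-edges incident to $s$ on the internal paths — but only the first edge along each such path. Since $T_X$ has at most $|X|$ leaves and at most $3|X|$ edges incident to degree-$\geq 3$ vertices by \Cref{treeedge}, and at most $|X|$ vertices of $X$ each of $T_X$-degree $\leq$ its $T$-degree, the total number of edges cut is bounded by a constant times $|X|$; I will arrange the constants so the bound is $5|X|$. Then every component of $T\setminus F$ containing a vertex $x\in X$ is ``isolated'' from $T_X$ in the sense that all $T_X$-edges at $x$ have been cut, so within that component $x$ is the only vertex of $X$, and the remaining vertices of $V(F)$ in that component, if any, are endpoints of cut edges lying in dangling subtrees or on a single path stub — I would check each such vertex is incident to exactly one cut edge. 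Components with no $X$-vertex are pieces of internal paths or dangling subtrees and meet $V(F)$ in at most two vertices, each clearly incident to exactly one edge of $F$ by construction. The polynomial running time is immediate: computing $T_X$, finding its internal paths and high-degree vertices, and selecting one edge per path are all linear-time operations.

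\textbf{Main obstacle.} The conceptual content is light — it is essentially \Cref{treeedge} plus a counting argument — but the genuinely fiddly part is the bookkeeping ensuring the \emph{second} alternative in the definition of ``dismembering'' (a component with no $X$-vertex but two $V(F)$-vertices, each incident to exactly one edge of $F$) is met: one must make sure that cutting edges to break paths does not accidentally create a component containing two cut-endpoints of \emph{different} cut edges sharing a vertex, or a component with three or more $V(F)$-vertices. This is handled by the rule ``at most one cut per maximal internally-degree-$2$ stretch, taken at a canonical position,'' but verifying it rigorously against all the cases (dangling subtrees, path interiors adjacent to degree-$\geq 3$ vertices, paths adjacent to $X$-vertices) is where the proof will spend most of its length. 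I expect the final constant to come out comfortably below $5|X|$, leaving slack for these case distinctions.
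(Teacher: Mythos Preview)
Your core idea — pass to the Steiner tree $T_X$ (equivalently, strip non-$X$ leaves), then invoke \Cref{treeedge} to bound the edges at branch vertices — is exactly what the paper does. The paper's reduction ``every leaf of $T$ lies in $X$'' via a minimal counterexample is just another way of saying ``work in $T_X$ and ignore the dangling subtrees,'' and your observation that dangling subtrees contribute no vertices to $X\cup V(F)$ (since both $X$ and $V(F)$ live in $T_X$) is precisely why that reduction is harmless.

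Where you drift from the paper is the ``one edge in the middle of each internal path'' idea. This step is unnecessary and is what makes your bookkeeping feel fiddly. The paper simply takes
\[
F \;=\; F_1\cup F_2,\qquad F_1=E_{\geq 3}(T),\quad F_2=\{\text{edges of }E(T)\setminus F_1\text{ incident to }X\},
\]
after the leaf reduction. Then $|F_1|\leq 3|E_1(T)|\leq 3|X|$ by \Cref{treeedge}, and $|F_2|\leq 2|X|$ since every vertex has degree at most $2$ in $T\setminus F_1$. The verification is one line: every $X$-vertex is isolated in $T\setminus F$, every degree-$\geq 3$ vertex is isolated already in $T\setminus F_1$, and every remaining component is a subpath of degree-$2$ vertices whose two endpoints are each incident to exactly one edge of $F$. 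No middle cuts, no case analysis over ``path stubs.''

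So your second pass (``cut all $T_X$-edges incident to each special vertex'') is the right construction and already gives the full result; drop the first pass entirely. Also, your expectation that the constant lands ``comfortably below $5|X|$'' is optimistic — the argument yields $3|X|+2|X|$ with no slack to spare, so you should not plan on absorbing extra cuts into the bound.
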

\begin{proof}
    Suppose otherwise and let $(T,X)$ be a counterexample minimizing $|V(T)|$.
    \begin{claim}\label{leafx}
        Every leaf of $T$ is contained in $X$.
    \end{claim}
    \begin{proof}
        Suppose otherwise, so there exists a leaf $v$ of $T$ that is contained in $V(T)\setminus X$. As $|V(T- v)|<|V(T)|$, we obtain that there exists a dismembering set $F$ for $(T-v,X)$ with $|F|\leq 5|X|$. It follows directly from the definition of dismembering sets that $F$ is also a dismembering set for $(T,X)$, a contradiction. 
    \end{proof}
    Next, let $F_1=E_{\geq 3}(T)$. By \Cref{treeedge} and  \Cref{leafx}, we have that $|F_1|\leq 3|E_1(T)|\leq 3|X|$. Observe that $T\setminus F_1$ is a collection of paths. Let $F_2$ contain all edges of $E(T)\setminus F_1$ that are incident to a vertex of $X$. As $d_{T\setminus F_1}(v)\leq 2$ holds for all $v \in V(T)$, we obtain that $|F_2|\leq 2|X|$. Let $F=F_1 \cup F_2$ and observe that $|F|=|F_1|+|F_2|\leq 3|X|+2|X|=5|X|$. Further observe that every vertex of $X$ is isolated in $G\setminus F$ and every component of $G\setminus F$ is either a single vertex or a path that only contains vertices that are of degree 2 in $T$ . It follows that $F$ is a dismembering set for $(G,X)$. 

For the algorithmic part, observe  that $F$ can indeed be computed in polynomial time. To this end, we recursively delete leaves in $V(T)\setminus X$ from $T$ and then follow the explicit description of $F_1$ and $F_2$.
\end{proof}
We are now ready to conclude \Cref{edgedis}, which we restate here for convenience.
\edgedis*
\begin{proof}
    Let $T_1,\ldots,T_q$ be the components of $G\setminus F_0$. By assumption, for $i \in [q]$, we have that $T_i$ is a tree. Next, for $i \in [q]$, let $X_i=V(F_0)\cap V(T_i)$. Observe that $\sum_{i \in [q]}|X_i|=|V(F_0)|\leq 2|F_0|.$ By \Cref{vertexdis}, for $i \in [q]$, there exists a dismembering set $F_i$ for $(T_i,X_i)$ with $|F_i|\leq 5|X_i|$. Let $F=F_1\cup \ldots \cup F_q$. It is easy to see that $F$ is a dismembering set for $(G\setminus  F_0,V(F_0))$. Further, we have $|F|=\sum_{i \in [q]}|F_i|\leq \sum_{i \in [q]}5|X_i|=5\sum_{i \in [q]}|X_i|\leq 10|F_0|.$ Hence $F$ has the desired properties. Further, it follows directly from the algorithmic part of \Cref{vertexdis} that $F$ can be computed in polynomial time.
\end{proof}

In the following, we conclude \Cref{reddismem} from \Cref{edgedis}.  Given an instance $(G,w)$ of WAMMRO, we first show how to obtain a small set of partial orientations corresponding to dismembered instances of WAMMRO which maintains all important properties of $(G,w)$.

\begin{lemma}\label{uhkijiop}
    Let $(G,w)$ be an instance of WAMMRO. Then, in time $2^{O(k)}n^{O(1)}$, we can compute a collection $\mathcal{G}$ of $2^{O(k)}$ partial orientations of $G$ such that $(G',w)$ is a dismembered instance of WAMMRO with $|A(G')|\leq 11 k$ for all $G' \in \mathcal{G}$ and every orientation of $G$ is also an orientation of $G'$ for some $G' \in \mathcal{G}$.
\end{lemma}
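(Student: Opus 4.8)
The goal is to turn an arbitrary instance $(G,w)$ of WAMMRO into a small family $\mathcal{G}$ of partial orientations, each of which is dismembered, such that every orientation of $G$ refines some member of $\mathcal{G}$. The plan is to apply \Cref{edgedis} to the underlying graph after stripping the arcs, then guess the orientation of the resulting dismembering set. Concretely, I would first set $G' = UG(G)$ and let $F_0 \subseteq E(G')$ be the set of edges of $G'$ that arise from arcs of $G$, so $|F_0| = k$. The components of $G' \setminus F_0$ are exactly the undirected components of $G$, and each is a tree because $G$ is acyclic (this is the observation already noted in the overview). Hence \Cref{edgedis} applies and yields, in polynomial time, a dismembering set $F$ for $(G' \setminus F_0, V(F_0))$ with $|F| \le 10|F_0| = 10k$.

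\textbf{Building the family.} Next I would enumerate all $2^{|F|} \le 2^{10k}$ orientations of the edge set $F$ (viewed as edges of $G$). For each such orientation, let $G^*$ be the partial orientation of $G$ obtained by orienting exactly the edges in $F$ accordingly and leaving every other edge of $G$ unoriented; collect all these $G^*$ into $\mathcal{G}$. Clearly $|\mathcal{G}| \le 2^{O(k)}$ and the whole enumeration takes $2^{O(k)} n^{O(1)}$ time. The arc count of each $G^* \in \mathcal{G}$ is $|A(G)| + |F| \le k + 10k = 11k$, giving the desired bound $|A(G')| \le 11k$. It is immediate that every orientation $\vec G$ of $G$ is an orientation of the member of $\mathcal{G}$ corresponding to the restriction of $\vec G$ to $F$, since $\vec G$ agrees with that partial orientation on $F$ and orients all remaining edges.

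\textbf{Dismemberedness.} The remaining point, and the one requiring a small argument, is that each $G^* \in \mathcal{G}$ is dismembered. The undirected components of $G^*$ are precisely the components of $G^* \setminus A(G^*) = (G' \setminus F_0) \setminus F$, since removing all arcs of $G^*$ removes $A(G)$ together with the now-oriented edges $F$. By the defining property of the dismembering set $F$ for $(G'\setminus F_0, V(F_0))$, each such component $C$ either (i) contains at most one vertex of $V(F_0) \cup V(F)$, or (ii) contains no vertex of $V(F_0)$ and exactly two vertices of $V(F)$, each incident to exactly one edge of $F$. The set $V(A(G^*))$ of endpoints of arcs of $G^*$ is contained in $V(F_0) \cup V(F)$ (endpoints of original arcs lie in $V(F_0)$, endpoints of newly oriented edges in $V(F)$). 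In case (i), $C$ therefore meets $V(A(G^*))$ in at most one vertex, as required. In case (i), $C$ meets $V(A(G^*))$ in at most two vertices drawn from $V(F)$; since each such vertex is incident in $G' \setminus F_0$ to exactly one edge of $F$, and that edge became an arc while no original arc of $A(G)$ is incident to it (as $C \cap V(F_0) = \emptyset$), each of these two vertices is incident to exactly one arc of $G^*$, matching the second alternative in the definition of dismembered. Hence $(G^*, w)$ is a dismembered instance of WAMMRO, completing the proof. The main (minor) obstacle is just carefully matching the combinatorial conditions of \Cref{edgedis} to the definition of ``dismembered'', in particular checking that the endpoints of the arcs of $G^*$ are exactly controlled by $V(F_0) \cup V(F)$ and that the incidence-exactly-once condition transfers from $F$-edges in $G' \setminus F_0$ to arcs in $G^*$.
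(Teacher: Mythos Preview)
Your proof is correct and follows essentially the same approach as the paper: apply \Cref{edgedis} with $F_0$ the edges of $UG(G)$ coming from $A(G)$, then branch over all $2^{|F|}$ orientations of the resulting dismembering set $F$, and verify dismemberedness of each partial orientation via the two cases in the definition. The only blemish is a typo in the dismemberedness paragraph where you write ``In case (i)'' twice; the second occurrence should be ``In case (ii)''.
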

\begin{proof}
    Let $F_0 \subseteq E(UG(G))$ be the set of edges corresponding to arcs in $A(G)$. By \Cref{edgedis}, in polynomial time, we can compute a set $F \subseteq E(G)$ with $|F|\leq 10 |F_0|$ that is dismembering for $(UG(G)\setminus F_0,V(F_0))$. Now for every possible orientation $\vec{F}$ of $F$, we let $G_{\vec{F}}$ be the unique partial orientation of $G$ in which all edges in $F$ are oriented as in $\vec{F}$ and all edges in $E(G)\setminus F$ remain unoriented. We let $\mathcal{G}$ contain $G_{\vec{F}}$ for all possible orientations $\vec{F}$ of $F$. Clearly, we have that $|\mathcal{G}|=2^{|F|}=2^{O(k)}$ and $\mathcal{G}$ can be computed in time $2^{O(k)}n^{O(1)}$.
    
    Now consider $G_{\vec{F}}$ for some orientation $\vec{F}$ of $F$. Clearly, we have $|A(G_{\vec{F}})|=|F_0|+|F|\leq 11k$. Observe that every component $C$ of $G_{\vec{F}}\setminus A(G_{\vec{F}})$ is also a component of $G\setminus (F \cup F_0)$. As $F$ is a dismembering set for $(G\setminus F_0,V(F_0))$, we obtain that $C$ either contains at most one vertex of $V(F \cup F_0)$ or $C$ contains no vertex in $V(F_0)$ and exactly two vertices in $V(F)$ both of which are incident to exactly one edge in $F$. In the first case, we obtain that $C$ contains at most one vertex of $V(A(G_{\vec{F}}))$ and in the latter case that $C$ contains exactly two vertices of $V(A(G_{\vec{F}}))$ and each of them is incident to exactly one arc of $A(G_{\vec{F}})$. It follows that $(G_{\vec{F}},w)$ is a dismembered instance of WAMMRO.

    Finally, let $\vec{G}$ be an orientation of $G$ and let $\vec{F}$ be the orientation of $F$ in $\vec{G}$. Then $\vec{G}$ is also an orientation of $G_{\vec{F}}$.
\end{proof}
We are now ready to prove \Cref{reddismem} for connected instances of WAMMRO.
\begin{lemma}\label{setrzdtufzguh}
    Let $g:\mathbb{Z}_{\geq 0}\times \mathbb{Z}_{\geq 0}\times [0,1]\rightarrow \mathbb{Z}_{\geq 0}$ be a fixed function and suppose that there exists an algorithm $A$ that computes a $(1-\epsilon)$-optimal orientation for every connected, dismembered instance of WAMMRO and every $\epsilon>0$ in time $g(n,k,\epsilon)$. Then, for every $\epsilon>0$, a $(1-\epsilon)$-optimal orientation for every connected instance of WAMMRO can be computed in time $f(k)g(n,k',\epsilon)n^{O(1)}$ for some $k'=O(k)$.
 \end{lemma}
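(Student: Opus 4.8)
The plan is to reduce the connected case to the connected, dismembered case by enumeration, using \Cref{uhkijiop}. Given a connected instance $(G,w)$ of WAMMRO, I would first invoke \Cref{uhkijiop} to obtain, in $2^{O(k)}n^{O(1)}$ time, a collection $\mathcal{G}$ of $2^{O(k)}$ partial orientations of $G$ such that each $(G',w)$ with $G'\in\mathcal{G}$ is a dismembered instance of WAMMRO with $|A(G')|\leq 11k$, and every orientation of $G$ is an orientation of some $G'\in\mathcal{G}$. Since every $G'\in\mathcal{G}$ is a partial orientation of $G$, we have $UG(G')=UG(G)$, so connectivity of $(G,w)$ yields that every $(G',w)$ is connected; moreover $|V(G')|+w(G')=|V(G)|+w(G)=n$. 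Hence each $(G',w)$ is a connected, dismembered instance of WAMMRO on which the hypothesized algorithm $A$ is applicable, with the same value of $n$ and with $k':=11k=O(k)$ an upper bound on its number of arcs.

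Next, for each $G'\in\mathcal{G}$ I would run $A$ on $(G',w)$ and $\epsilon$ to obtain a $(1-\epsilon)$-optimal orientation $\vec{G}'$ of $G'$; this takes at most $g(n,k',\epsilon)$ time per call (assuming, as we may, that $g$ is non-decreasing in its second argument). Each $\vec{G}'$ is in particular an orientation of $G$, and $R(\vec{G}',w)$ can be evaluated in polynomial time by computing the transitive closure of $\vec{G}'$ and summing the defining expression — recall that $w$ is given in unary, so all these quantities are polynomial in $n$. The algorithm outputs an orientation attaining $\max_{G'\in\mathcal{G}}R(\vec{G}',w)$.

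For correctness, fix an optimal orientation $\vec{G}_0$ for $(G,w)$. By \Cref{uhkijiop} it is an orientation of some $G^*\in\mathcal{G}$; since conversely every orientation of $G^*$ is an orientation of $G$, an optimal orientation for $(G^*,w)$ has value exactly $R(\vec{G}_0,w)$. Therefore the call of $A$ on $(G^*,w)$ returns some $\vec{G}^*$ with $R(\vec{G}^*,w)\geq (1-\epsilon)R(\vec{G}_0,w)$, and the output, being at least as good as $\vec{G}^*$ with respect to $R(\cdot,w)$, is $(1-\epsilon)$-optimal for $(G,w)$. The total running time is $2^{O(k)}n^{O(1)}$ for \Cref{uhkijiop}, plus $2^{O(k)}\cdot g(n,k',\epsilon)$ for the calls of $A$, plus $2^{O(k)}n^{O(1)}$ for the reachability computations and the final comparison, which is of the claimed form $f(k)g(n,k',\epsilon)n^{O(1)}$ with $f(k)=2^{O(k)}$.

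There is no genuine obstacle in this argument — everything of substance has already been absorbed into \Cref{uhkijiop} (and, through it, into \Cref{edgedis}). The only points needing a moment's attention are that passing to a partial orientation changes neither the connectivity of the instance nor the parameter $n$, so that $A$'s hypotheses are met and the running time is legitimately expressed in the same $n$, and that the $2^{O(k)}$ calls together with the polynomial-time postprocessing contribute only an $f(k)n^{O(1)}$ overhead.
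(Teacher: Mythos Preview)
Your argument is correct and essentially identical to the paper's proof: both invoke \Cref{uhkijiop} to enumerate $2^{O(k)}$ partial orientations, run $A$ on each resulting connected dismembered instance, and return the best orientation found. You are slightly more explicit than the paper in noting that $UG(G')=UG(G)$ preserves connectivity and that $n$ is unchanged, and in flagging the implicit monotonicity of $g$ in its second argument; otherwise the proofs coincide.
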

 \begin{proof}
     We first compute a collection $\mathcal{G}$ of $2^{O(k)}$ partial orientations of $G$ such that $(G',w)$ is a dismembered instance of WAMMRO with $|A(G')|\leq k'$ for all $G' \in \mathcal{G}$ for some $k' \leq 11k$ and every orientation of $G$ is also an orientation of $G'$ for some $G' \in \mathcal{G}$. Now, for every $G' \in \mathcal{G}$, we compute a $(1-\epsilon)$-optimal orientation $\vec{G'}$ for $(G',w)$ using $A$. Observe that for any $G' \in \mathcal{G}$, by assumption, we have that $\vec{G'}$ is also an orientation of $G$. We maintain the orientation $\vec{G}_0$ of $G$ found during this procedure that maximizes $R(\vec{G}_0,w)$.

     In order to see that $\vec{G}_0$ is a $(1-\epsilon)$-optimal orientation for $(\vec{G},w)$, let $\vec{G}$ be an arbitrary orientation of $G$. Then there exists some $G' \in \mathcal{G}$ such that $\vec{G}$ is an orientation of $G'$. Let $\vec{G'}$ be the $(1-\epsilon)$-optimal orientation for $(G',w)$ that we computed during our algorithm when considering $G'$. Then, by construction, we have $R(\vec{G}_0,w)\geq R(\vec{G'},w)\geq (1-\epsilon)R(\vec{G},w)$ and so $\vec{G}_0$ is a $(1-\epsilon)$-optimal orientation for $(\vec{G},w)$.

     For the running time, first observe that $\mathcal{G}$ can be computed in time $2^{O(k)}n^{O(1)}$ by \Cref{uhkijiop}. Next, for every $G' \in \mathcal{G}$, we have that $\vec{G'}$ can be computed in time $g(n,k',\epsilon)$ using $A$ by assumption. Finally, as $|\mathcal{G}|=2^{O(k)}$, we obtain that the whole algorithm runs in time $f(k)g(n,k',\epsilon)n^{O(1)}$.
 \end{proof}

 We are now ready to prove \Cref{reddismem}, which we restate here for convenience, in its full generality.
\reddismem*

 \begin{proof}
      Let $G^1,\ldots,G^q$ be the unique collection of vertex-disjoint mixed subgraphs of $G$ such that $UG[G^1],\ldots,UG[G^q]$ are the components of $UG[G]$.

     For every $i \in [q]$, observe that $(G^i,w|V(G^i))$ is a connected instance of WAMMRO. We now compute a $(1-\epsilon)$-optimal orientation $\vec{G}^{i}_0$ of $G^{i}$. We do this for every $i \in [q]$. We now let $\vec{G}_0$ be the unique orientation of $G$ in which $e$ has the same orientation in $\vec{G}_0$ and $\vec{G}_0^{i}$ for every $e \in E(G_i)$ and every $i \in [q]$. 

     Now let $\vec{G}$ be an arbitrary orientation of $G$ and for $i \in [q]$, let $\vec{G}^{i}$ be the orientation of $G^{i}$ inherited from $\vec{G}$. Clearly, we have $\kappa_{\vec{G}}(u,v)=0$ for all $u\in V(G^{i}),v \in V(G^{j})$ and all $i,j \in [q]$ with $i \neq j$.
     This yields 
     \begin{align*}
         R(\vec{G}_0,w)&=\sum_{i \in [q]}R(\vec{G}_0^{i},w|V(G_i))\\
         &\geq \sum_{i \in [q]}(1-\epsilon)R(\vec{G}^{i},w|V(G_i))\\
         &=(1-\epsilon)\sum_{i \in [q]}R(\vec{G}^{i},w|V(G_i))\\
         &=(1-\epsilon)R(\vec{G},w).
     \end{align*}

     For the running time, observe that by \Cref{setrzdtufzguh}, we can compute $\vec{G}_0^{i}$ in time $f(k)g(n,k',\epsilon)n^{O(1)}$ for some $k'=O(k)$ for all $i \in [q]$. As $q \leq n$, it follows that $\vec{G}_0$ can be computed in time $f(k)g(n,k',\epsilon)n^{O(1)}$.
 \end{proof}

\subsection{Simulation sets}\label{sec43}
We here prove \Cref{rdtfguh}, which is restated below. We do this by distinguishing two cases depending on the exact interaction of the undirected component $T$ with $A(G)$. In either case, we explicitly construct the simulation set and prove that it has the desired properties. 
\rdtfguh*
\begin{proof}
We use $\overline{V}$ for $V(G)\setminus V(T)$. Recall that $T$ is a tree as $T$ is an undirected graph by definition and a mixed subgraph of $G$, which is acyclic.
    Let $P_1$ contain all  $(u,v)\in P(G)$ with $\{u,v\}\subseteq \overline{V}$, let $P_2$ contain all $(u,v)\in P(G)$ with $\{u,v\}\subseteq V(T)$, let $P_3$ contain all $(u,v)\in P(G)$ with $u \in V(T)$ and $v \in \overline{V}$, and let $P_4$ contain all $(u,v)\in P(G)$ with $u \in \overline{V}$ and $v \in V(T)$.  Observe that $(P_1,P_2,P_3,P_4)$ is a partition of $P(G)$. We now distinguish two cases depending on the exact interaction of $T$ with $A(G)$. It follows from the fact that $(G,w)$ is a dismembered instance of WAMMRO that one of the following cases occurs.

  \begin{Case}\label{c1}
       $\delta_G^-(V(T))\subseteq \delta_G^-(x_{in})$ and $\delta_G^+(V(T))\subseteq \delta_G^+(x_{out})$ for some $x_{in},x_{out}\in V(T)$.
    \end{Case}
   We first construct a mixed graph $U_1$ from $T$ by adding two vertices $z_{in}$ and $z_{out}$ and the arcs $z_{in}x_{in},x_{out}z_{out}$, and $z_{in}z_{out}$. We further obtain $U_2$ from $U_1$ by orienting all edges on the unique $x_{in}x_{out}$-path in $T$ so that we obtain a directed $x_{in}x_{out}$-path and we set $\mathcal{U}=\{U_1,U_2\}$. An illustration of $\mathcal{U}$ can be found in \Cref{drftg2}. 
   \begin{figure}[h]
    \centering
        \includegraphics[width=.8\textwidth]{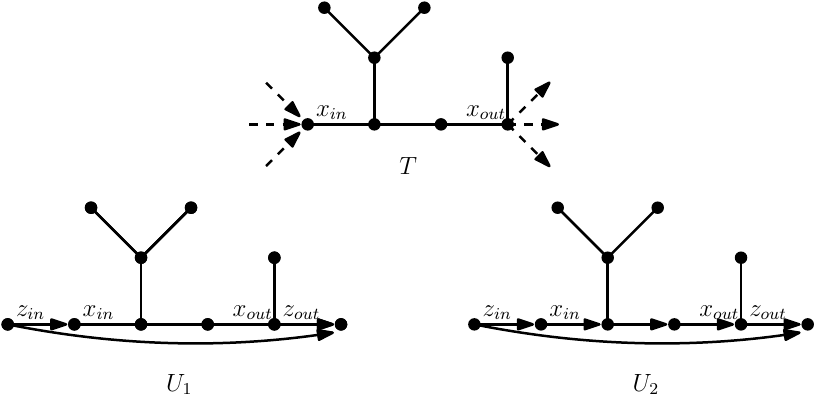}
        \caption{An example of the construction in the proof of \Cref{c1}. On top, there is an example of an undirected component $T$, where the dashed halfarcs mark the arcs in $A(G)$ incident to $V(T)$. The corresponding mixed graphs $U_1$ and $U_2$ are depicted below.}\label{drftg2}
\end{figure}Clearly, we have that $\mathcal{U}$ can be computed in polynomial time. We will show that $\mathcal{U}$ is an arboresque 3-simulation set for $((G,w),T)$. We clearly have that $U_1$ and $U_2$ are mixed 3-supergraphs of $T$. Next observe that $d_{UG(U_1)}(z_{in})=d_{U_1}^+(z_{in})=2$ and $UG(U_1\setminus \{z_{in}z_{out}\})$ is a tree. Hence $U_1$ is arboresque. A similar argument shows that $U_2$ is also arboresque. For the last property, we first need some preliminary results.

  \begin{claim}\label{summe}
       Let $\vec{G}$ be an orientation of $G$. Then 
       \begin{align*}
       \sum_{(u,v)\in P_3}\kappa_{\vec{G}}(u,v)w(u)w(v)&=\sum_{u \in V(T)}w(u)\kappa_{\vec{G}}(u,x_{out})\sum_{v \in \overline{V}}w(v)\kappa_{\vec{G}}(x_{out},v)\\
       &\text{ and } \\
       \sum_{(u,v)\in P_4}\kappa_{\vec{G}}(u,v)w(u)w(v)&=\sum_{u \in \overline{V}}w(u)\kappa_{\vec{G}}(u,x_{in})\sum_{v \in V(T)}w(v)\kappa_{\vec{G}}(x_{in},v).\end{align*}
   \end{claim}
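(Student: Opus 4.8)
The plan is to prove \Cref{summe} by a direct "bottleneck through $x_{out}$" (resp. $x_{in}$) argument, exploiting the structure imposed by \Cref{c1}. Recall that under \Cref{c1} every arc leaving $V(T)$ has its tail in $\{x_{in},x_{out}\}$ and, more importantly for $P_3$, every arc of $\delta_G^+(V(T))$ is in $\delta_G^+(x_{out})$; symmetrically every arc entering $V(T)$ is in $\delta_G^-(x_{in})$. Fix an orientation $\vec{G}$ of $G$. For a pair $(u,v)\in P_3$ with $u\in V(T)$ and $v\in\overline{V}$, any directed $uv$-walk in $\vec{G}$ must at some point leave $V(T)$, hence must use an arc of $\delta_G^+(V(T))\subseteq \delta_G^+(x_{out})$; taking the first such arc, the prefix of the walk is a directed $ux_{out}$-walk staying in $V(T)$ and the suffix is a directed $x_{out}v$-walk. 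Conversely, concatenating a directed $ux_{out}$-path with a directed $x_{out}v$-path yields a directed $uv$-walk. Therefore $\kappa_{\vec{G}}(u,v)=1$ if and only if $\kappa_{\vec{G}}(u,x_{out})=1$ and $\kappa_{\vec{G}}(x_{out},v)=1$, i.e.\ $\kappa_{\vec{G}}(u,v)=\kappa_{\vec{G}}(u,x_{out})\kappa_{\vec{G}}(x_{out},v)$.

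Given this pointwise identity, the first equation follows by substituting and splitting the double sum:
\begin{align*}
\sum_{(u,v)\in P_3}\kappa_{\vec{G}}(u,v)w(u)w(v)
&=\sum_{u\in V(T)}\sum_{v\in\overline{V}}\kappa_{\vec{G}}(u,x_{out})\kappa_{\vec{G}}(x_{out},v)w(u)w(v)\\
&=\left(\sum_{u\in V(T)}w(u)\kappa_{\vec{G}}(u,x_{out})\right)\left(\sum_{v\in\overline{V}}w(v)\kappa_{\vec{G}}(x_{out},v)\right).
\end{align*}
Here one uses $\kappa_{\vec{G}}(x_{out},x_{out})=1$, so the factor $\kappa_{\vec{G}}(u,x_{out})$ correctly counts $u=x_{out}$ as well. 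The second equation is entirely symmetric: for $(u,v)\in P_4$ with $u\in\overline{V}$, $v\in V(T)$, any directed $uv$-walk must enter $V(T)$, and by \Cref{c1} (applied to $V(G)\setminus V(T)$, i.e.\ $\delta_G^-(V(T))\subseteq\delta_G^-(x_{in})$) the last arc before entering can be taken to land at $x_{in}$, giving $\kappa_{\vec{G}}(u,v)=\kappa_{\vec{G}}(u,x_{in})\kappa_{\vec{G}}(x_{in},v)$; summing and factoring as above yields the claimed expression with the roles of the index sets swapped.

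The only genuine point requiring care — the main (mild) obstacle — is justifying that a directed $uv$-walk really can be decomposed at $x_{out}$: one must observe that after the walk first leaves $V(T)$ through an arc of $\delta_G^+(x_{out})$, the entire prefix lies inside $G[V(T)]$, so it is a directed $ux_{out}$-walk and hence witnesses $\kappa_{\vec{G}}(u,x_{out})=1$ by the walk/path equivalence stated in \Cref{sec2}. For the converse direction one simply concatenates paths. Everything else is bookkeeping with the distributive law, and the fact that $x_{out}\in V(T)$ (resp.\ $x_{in}\in V(T)$) guarantees the boundary terms are handled by the convention $\kappa_{\vec{G}}(x,x)=1$.
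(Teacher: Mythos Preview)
Your proof is correct and follows essentially the same approach as the paper: you establish the pointwise identity $\kappa_{\vec{G}}(u,v)=\kappa_{\vec{G}}(u,x_{out})\kappa_{\vec{G}}(x_{out},v)$ for $(u,v)\in P_3$ via the bottleneck argument through $x_{out}$, then substitute and factor the double sum, and handle $P_4$ by symmetry. The paper does exactly this (arguing with a path rather than a walk for the forward direction, but that is immaterial).
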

\begin{proof}
    By symmetry, it suffices to prove the first statement.

    Consider some $(u,v)\in P_3$. We show that $\kappa_{\vec{G}}(u,v)=\kappa_{\vec{G}}(u,x_{out})\kappa_{\vec{G}}(x_{out},v)$.
    
    First suppose that $\kappa_{\vec{G}}(u,v)=1$, so there exists a directed $uv$-path $P$ in $\vec{G}$. As $A(P)$ needs to contain an arc leaving $V(T)$ and by the assumption on $x_{out}$, we obtain that $x_{out}\in V(P)$. Hence $P$ contains a directed $ux_{out}$-path and a directed $x_{out}v$-path. This yields $\kappa_{\vec{G}}(u,x_{out})=\kappa_{\vec{G}}(x_{out},v)=1$ and hence $\kappa_{\vec{G}}(u,x_{out})\kappa_{\vec{G}}(x_{out},v)=1$.

    Now suppose that $\kappa_{\vec{G}}(u,x_{out})\kappa_{\vec{G}}(x_{out},v)=1$. Clearly, we obtain $\kappa_{\vec{G}}(u,x_{out})=\kappa_{\vec{G}}(x_{out},v)=1$, so $\vec{G}$ contains a directed $ux_{out}$-path $P_1$ and a directed $x_{out}v$-path $P_2$. Then the concatenation of $P_1$ and $P_2$ is a directed $uv$-walk in $\vec{G}$, so $\kappa_{\vec{G}}(u,v)=1$. 

    Hence $\kappa_{\vec{G}}(u,v)=\kappa_{\vec{G}}(u,x_{out})\kappa_{\vec{G}}(x_{out},v)$ holds for all $(u,v)\in P_3$. This yields 

    \begin{align*}
       \sum_{(u,v)\in P_3}\kappa_{\vec{G}}(u,v)w(u)w(v)&=\sum_{(u,v)\in P_3}\kappa_{\vec{G}}(u,x_{out})\kappa_{\vec{G}}(x_{out},v)w(u)w(v)\\&=\sum_{u \in V(T)}w(u)\kappa_{\vec{G}}(u,x_{out})\sum_{v \in \overline{V}}w(v)\kappa_{\vec{G}}(x_{out},v).
     \end{align*}
\end{proof}
\begin{claim}\label{estdrzftugziuhi}
    Let $\vec{G}$ be an orientation of $G$, $u,u'\in V(T)$ and $v \in \overline{V}$. Then $\kappa_{\vec{G}}(u,v)\kappa_{\vec{G}}(v,u')=0$.
\end{claim}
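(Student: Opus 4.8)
The plan is to argue by contradiction. Suppose $\kappa_{\vec{G}}(u,v)=\kappa_{\vec{G}}(v,u')=1$. Then $\vec{G}$ contains a directed $uv$-walk $W_1$ and a directed $vu'$-walk $W_2$, and I would work with their concatenation $W$, a directed $uu'$-walk passing through $v$. Since $u,u'\in V(T)$ while $v\in\overline{V}$, the walk $W_1$ must leave $V(T)$ and the walk $W_2$ must enter it; I would let $p$ be the last vertex of $W_1$ lying in $V(T)$ and $q$ the first vertex of $W_2$ lying in $V(T)$ (both exist, as $W_1$ starts in $V(T)$ but ends at $v\in\overline{V}$, and $W_2$ ends in $V(T)$ but starts at $v\in\overline{V}$). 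The arc of $W_1$ leaving $p$ then has its head in $\overline{V}$, so it lies in $\delta_G^+(V(T))$, and by the hypothesis of \Cref{c1} its tail is $x_{out}$, i.e. $p=x_{out}$; symmetrically, the arc of $W_2$ entering $q$ lies in $\delta_G^-(V(T))$, so by \Cref{c1} its head, namely $q$, equals $x_{in}$.

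Next I would isolate the subwalk $W'$ of $W$ running from $x_{out}$ through $v$ to $x_{in}$. By the choice of $p$ and $q$, every vertex of $W'$ other than its two endpoints lies in $\overline{V}$; in particular $v$ does, so $W'$ has length at least $2$. Let $z_2$ be the vertex of $W'$ directly after $x_{out}$ and $z_1$ the vertex directly before $x_{in}$; then $z_1,z_2\in\overline{V}$, the arcs $x_{out}z_2$ and $z_1x_{in}$ belong to $A(G)$, and the portion of $W'$ between $z_2$ and $z_1$ is a directed walk all of whose vertices lie in $\overline{V}$. Consequently $G[\overline{V}]$ contains a mixed $z_2z_1$-path $Q_0$ (which degenerates to the single vertex $v$ when $z_1=z_2=v$).

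The final step is to produce a mixed cycle in $G$, which contradicts the fact that $G$, being an instance of WAMMRO, is acyclic. If $x_{out}=x_{in}$, then $W'$ is itself a closed directed walk in $\vec{G}$ of length at least $2$, hence contains a directed cycle, and any directed cycle of $\vec{G}$ is, with its edges un-oriented, a mixed cycle of $G$. If $x_{out}\neq x_{in}$, I would take the unique $x_{in}x_{out}$-path $Q$ of the tree $T$; since $Q$ is undirected it can be oriented as a directed $x_{in}x_{out}$-path, so stitching together $Q$, the arc $x_{out}z_2$, the path $Q_0$, and the arc $z_1x_{in}$ gives a closed mixed walk that can be oriented as a directed closed walk. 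Because $V(Q)\subseteq V(T)$ while all vertices of $Q_0$ lie in $\overline{V}$ and $x_{in}\neq x_{out}$, this walk has no repeated vertex, hence is a mixed cycle of $G$ — again a contradiction. Thus $\kappa_{\vec{G}}(u,v)\kappa_{\vec{G}}(v,u')=0$.

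I expect the main obstacle to be the routine but slightly delicate bookkeeping in the case $x_{in}\neq x_{out}$: one has to check that the concatenation of $Q$, the two crossing arcs, and $Q_0$ is genuinely a cycle (no repeated vertex, which uses $V(T)\cap\overline{V}=\emptyset$ together with $x_{in}\neq x_{out}$) and that it admits an orientation as a directed cycle, and to dispatch the degenerate case in which $Q_0$ is the single vertex $v$, where the cycle is $Q$ together with the two arcs $x_{out}v$ and $vx_{in}$.
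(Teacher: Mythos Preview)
Your proof is correct and follows essentially the same approach as the paper: both argue by contradiction, extract from the hypothetical paths a mixed $x_{out}v$-path and a mixed $vx_{in}$-path using the Case~1 hypothesis, and close them up with the $x_{in}x_{out}$-path in the tree $T$ to obtain a mixed cycle contradicting acyclicity. The paper's version is terser---it simply concatenates the three mixed paths and declares the result a mixed cycle without your careful case split on $x_{in}=x_{out}$ or the verification that no vertex repeats---so your additional bookkeeping is sound but not strictly necessary, since any closed mixed walk of positive length that can be oriented as a directed closed walk already forces a mixed cycle in $G$.
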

\begin{proof}
Suppose otherwise, so $\vec{G}$ contains a directed $uv$-path $Q_1'$. As $u \in V(T), v\in \overline{V}$, and $\delta^+_G(V(T))\subseteq \delta^+_G(x_{out})$, we obtain that $Q_1'$ contains a directed $x_{out}v$-path. In particular, we have that $G$ contains a mixed $x_{out}v$-path $Q_1$. Similarly, we have that $G$ contains a mixed $vx_{in}$-path $Q_2$.
    Further, as $T$ is a tree, there exists a mixed $x_{in}x_{out}$-path $Q_3$ in $T$. The concatenation of $Q_1,Q_2$, and $Q_3$ is a mixed cycle in $G$, a contradiction to $G$ being acyclic.
\end{proof}
   \begin{claim}\label{dasdas}
       Let $\vec{G}$ be an orientation of $G$ and $\vec{U_1}$ an orientation of $U_1$ such that $\vec{G}$ and $\vec{U_1}$ agree on $E(T)$. Then $\kappa_{\vec{U_1}}(u,v)=\kappa_{\vec{G}}(u,v)$ holds for all $(u,v)\in P_2$.
   \end{claim}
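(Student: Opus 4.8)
The plan is to establish, for each $(u,v)\in P_2$, the two implications $\kappa_{\vec G}(u,v)=1\Rightarrow\kappa_{\vec{U_1}}(u,v)=1$ and $\kappa_{\vec{U_1}}(u,v)=1\Rightarrow\kappa_{\vec G}(u,v)=1$; since $\kappa$ is $\{0,1\}$-valued, this gives the asserted equality. First I would record one bookkeeping fact: because $(G,w)$ is dismembered and we are in the situation of \Cref{c1}, no arc of $A(G)$ has both endvertices in $V(T)$, so $G[V(T)]=T$; since moreover $z_{in},z_{out}\notin V(T)$, the only edges or arcs of $U_1$ with both endvertices in $V(T)$ are those of $E(T)$, so $U_1[V(T)]=T$ as well. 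As $\vec G$ and $\vec{U_1}$ agree on $E(T)$, it follows that the inherited orientations $\vec G[V(T)]$ and $\vec{U_1}[V(T)]$ are one and the same digraph $\vec T$, and $\vec T$ is a subdigraph of both $\vec G$ and $\vec{U_1}$.

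For the first implication I would take a directed $uv$-path $P$ in $\vec G$ and show $V(P)\subseteq V(T)$. If not, there is a vertex $p\in V(P)\cap\overline V$, which is necessarily an internal vertex of $P$ since $u,v\in V(T)$; then $P$ contains a directed $up$-path and a directed $pv$-path, so $\kappa_{\vec G}(u,p)=\kappa_{\vec G}(p,v)=1$. This contradicts \Cref{estdrzftugziuhi} applied to the two $V(T)$-vertices $u,v$ and the vertex $p\in\overline V$. Hence $P$ lies in $\vec T$, which is a subdigraph of $\vec{U_1}$, so $\kappa_{\vec{U_1}}(u,v)=1$.

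For the second implication I would take a directed $uv$-path $Q$ in $\vec{U_1}$ and argue that it avoids $z_{in}$ and $z_{out}$, whence $V(Q)\subseteq V(T)$ and $Q$ lies in $\vec T\subseteq\vec G$, giving $\kappa_{\vec G}(u,v)=1$. Here I would use that in $U_1$ the vertex $z_{in}$ is incident only to the arcs $z_{in}x_{in}$ and $z_{in}z_{out}$, both having tail $z_{in}$, so $d^-_{\vec{U_1}}(z_{in})=0$; since $u\neq z_{in}$, the vertex $z_{in}$ can be neither an internal nor the final vertex of the directed path $Q$, and it is not its initial vertex either, so $z_{in}\notin V(Q)$. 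Symmetrically, $z_{out}$ is incident only to $x_{out}z_{out}$ and $z_{in}z_{out}$, both having head $z_{out}$, so $d^+_{\vec{U_1}}(z_{out})=0$, and since $v\neq z_{out}$ we get $z_{out}\notin V(Q)$.

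I do not expect a genuine obstacle in this claim: the first implication is essentially just \Cref{estdrzftugziuhi}, and the second is the elementary remark that $z_{in}$ and $z_{out}$ are respectively a source and a sink of every orientation of $U_1$. The only point that needs a moment of care is the preliminary observation that $V(T)$ induces no arc of $G$ in this case, so that $\vec G$ and $\vec{U_1}$ really do induce the same orientation of $T$; once that is in place, everything else is routine.
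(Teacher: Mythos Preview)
Your proof is correct and follows essentially the same approach as the paper's: one direction uses that $z_{in}$ is a source and $z_{out}$ is a sink in every orientation of $U_1$, and the other direction uses \Cref{estdrzftugziuhi} to force a directed $uv$-path in $\vec{G}$ to stay inside $V(T)$. Your added bookkeeping that $G[V(T)]=U_1[V(T)]=T$ (so that $\vec G$ and $\vec{U_1}$ genuinely induce the same digraph on $V(T)$) makes explicit a step the paper leaves implicit.
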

\begin{proof}
    Let $(u,v)\in P_2$. First suppose that $\kappa_{\vec{U_1}}(u,v)=1$, so there exists a directed $uv$-path $Q$ in $\vec{U_1}$. As $z_{in}$ is a source and $z_{out}$ is a sink in $\vec{U_1}$, we obtain that $V(Q)\cap \{z_{in},z_{out}\}=\emptyset$ and so $P$ also exists in $\vec{G}$, yielding $\kappa_{\vec{G}}(u,v)=1$.
    
    Now suppose that $\kappa_{\vec{G}}(u,v)=1$, so there exists a directed $uv$-path $P$ in $\vec{G}$. If $V(P)\setminus V(T)$ contains a vertex $v'\in \overline{V}$, we obtain that $\kappa_{\vec{G}}(u,v')\kappa_{\vec{G}}(v',v)=1$, a contradiction to Claim \ref{estdrzftugziuhi}. This yields $V(P)\subseteq V(T)$ and hence $P$ also exists in $\vec{U_1}$, yielding $\kappa_{\vec{T}}(u,v)=1$. 
\end{proof}

We now give the main proof of the last property.
   To this end, let $\vec{G}_1$ be an orientation of $G$. We define the weight function $w':V(T')\rightarrow [n]_0$ by $w'(v)=w(v)$ for all $v \in V(T)$, $w'(z_{in})=\sum_{v \in \overline{V}}\kappa_{\vec{G}_1}(v,x_{in})w(v)$ and $w'(z_{out})=\sum_{v \in \overline{V}}\kappa_{\vec{G}_1}(x_{out},v)w(v)$. Observe that $w'$ is consistent with $w$. Further, by Claim \ref{estdrzftugziuhi}, we obtain that $|w'|\leq |w|$.
   
   We now choose $U=U_1$ if $\kappa_{\vec{G}_1}(x_{in},x_{out})=0$ and $U=U_2$ if $\kappa_{\vec{G}_1}(x_{in},x_{out})=1$.

   Now let $\vec{U}$ be an  orientation for $(U,w')$. We further define $\vec{G}_0=\vec{G}_1\langle \vec{U} \rangle$ and $\vec{U}_0=\vec{U}\langle \vec{G}_1 \rangle$. It follows directly by construction that $\vec{G}_0$ is an orientation of $G$ and by the definition of $U$ that $\vec{U}_0$ is an orientation of $U$.

   We need one more preliminary result relating the reachabilities in $\vec{G}_1$ and $\vec{G}_0$.
   \begin{claim}\label{p1g}
       Let $(u,v)\in P(G)$ such that one of the following holds:
       \begin{itemize}
       \item $(u,v)\in P_1$,
       \item $u=x_{out}$ and $v \in \overline{V}$,
       \item $u \in \overline{V}$ and $v=x_{in}$.
       \end{itemize}Then $\kappa_{\vec{G_1}}(u,v)\leq \kappa_{\vec{G_0}}(u,v)$.
   \end{claim}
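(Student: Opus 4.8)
The plan is to prove Claim~\ref{p1g} by showing that any directed path witnessing $\kappa_{\vec{G}}(u,v)=1$ can be "translated" into a directed walk in $\vec{G}_0$, using the fact that $\vec{G}_0=\vec{G}_1\langle\vec{U}\rangle$ differs from $\vec{G}_1$ only on the edges of $T$, and that on $T$ the orientation $\vec{U}$ (restricted to $E(T)$) is controlled by the structure of $U_1$ or $U_2$. Note first a subtlety: the statement as written says $\kappa_{\vec{G}}(u,v)\le\kappa_{\vec{G}_0}(u,v)$, so I read $\vec{G}$ here as $\vec{G}_1$; the content is that passing from $\vec{G}_1$ to $\vec{G}_0$ does not destroy any of the three listed kinds of reachability. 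The three bullet cases should be handled almost uniformly: in each case a witnessing path in $\vec{G}_1$ enters or leaves $V(T)$ only through $x_{in}$ or $x_{out}$ (by the Case~\ref{c1} hypothesis $\delta_G^+(V(T))\subseteq\delta_G^+(x_{in})\cap\delta_G^+(x_{out})$), so the path decomposes at those cut vertices into an "outside" part lying in $G[\overline V]$ and an "inside" part lying in $T$.

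First I would fix a directed $uv$-path $P$ in $\vec{G}_1$. For the first bullet, $(u,v)\in P_1$: if $P$ avoids $V(T)$ entirely then $P$ survives verbatim in $\vec{G}_0$ since no edge of $P$ lies in $E(T)$, and we are done. Otherwise $P$ meets $V(T)$; since every arc of $\vec{G}_1$ leaving $V(T)$ leaves through $x_{out}$ and (by the dual statement / by considering $\delta^-$) every arc entering $V(T)$ enters through $x_{in}$, the first vertex of $V(T)$ on $P$ is $x_{in}$ and the last is $x_{out}$, so $P$ has the form $(u\leadsto x_{in})\cdot(x_{in}\leadsto x_{out})\cdot(x_{out}\leadsto v)$ with the middle segment inside $T$ and the outer segments in $G[\overline V]$. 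Here I use Claim~\ref{estdrzftugziuhi}-type acyclicity to ensure $P$ does not re-enter and re-leave $T$ multiple times; more directly, acyclicity of $G$ forbids a second visit to $V(T)$. Now in $\vec{U}$ the arc $z_{in}z_{out}$ (present in both $U_1$ and $U_2$) together with $z_{in}x_{in}$ and $x_{out}z_{out}$ guarantees the existence of a directed $x_{in}x_{out}$-path in $\vec{U}$: indeed in $U_2$ the $x_{in}x_{out}$-path of $T$ is already oriented that way, and in $U_1$... the key point is that $U=U_2$ precisely when $\kappa_{\vec{G}_1}(x_{in},x_{out})=1$, i.e.\ when $\vec{G}_1$ itself already routes $x_{in}$ to $x_{out}$ through $T$, so $\vec{G}_0$, which agrees with $\vec{G}_1$ on edges of $T$ outside the $x_{in}x_{out}$-path and with $\vec{U}=\vec{U}_2$ on that path, still contains a directed $x_{in}x_{out}$-walk. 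When $U=U_1$ instead, $\kappa_{\vec{G}_1}(x_{in},x_{out})=0$, so the middle segment of $P$ cannot have existed — contradiction — hence $P$ avoided $V(T)$ after all. Concatenating $(u\leadsto x_{in})$ in $G[\overline V]$ (unchanged), the directed $x_{in}x_{out}$-walk in $\vec{G}_0$, and $(x_{out}\leadsto v)$ in $G[\overline V]$ (unchanged) gives a directed $uv$-walk in $\vec{G}_0$, so $\kappa_{\vec{G}_0}(u,v)=1$.

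For the second bullet, $u=x_{out}$, $v\in\overline V$: a directed $x_{out}v$-path $P$ in $\vec{G}_1$ must, after its first step, never re-enter $V(T)$ — if it did, it would enter through $x_{in}$, and then combined with a $T$-path $x_{in}\leadsto x_{out}$ (which exists as $T$ is a tree, as a mixed path) it would close a mixed cycle, contradicting acyclicity of $G$. Hence $P$ lies in $G[\overline V]$ after leaving $x_{out}$, its first arc is in $\delta_G^+(x_{out})$ hence not in $E(T)$, so $P$ is unchanged in $\vec{G}_0$ and $\kappa_{\vec{G}_0}(x_{out},v)=1$. The third bullet is symmetric, using $\delta_G^-(V(T))\subseteq\delta_G^-(x_{in})$ and a mixed $x_{in}x_{out}$-path in $T$.

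The main obstacle I anticipate is the first bullet and specifically the correct handling of the $U_1$ versus $U_2$ dichotomy: one must argue that when $U=U_1$ (the "no $x_{in}\leadsto x_{out}$ in $\vec{G}_1$" regime) no witnessing $P_1$-path passes through $T$ at all, so that the reachability is trivially preserved, while when $U=U_2$ the oriented $x_{in}x_{out}$-path of $U_2$ survives in $\vec{G}_0$ and provides the needed bridge; getting the interaction between "$\vec{G}_0$ agrees with $\vec{U}$ on $E(T)$" and "$\vec{U}$ is an orientation of $U_2$" exactly right — in particular that $\vec{G}_0$ restricted to $V(T)$ contains the directed $x_{in}x_{out}$-path — is where the bookkeeping is delicate, but it is purely a matter of unwinding the definition of $\vec{G}_1\langle\vec{U}\rangle$ and of the choice of $U$. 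Everything else is routine path concatenation plus invocations of acyclicity of $G$ and of Claim~\ref{estdrzftugziuhi}.
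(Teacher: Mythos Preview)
Your argument is correct and matches the paper's: in each case a witnessing path in $\vec{G}_1$ either avoids $V(T)$ entirely (and hence survives in $\vec{G}_0$), or, for $(u,v)\in P_1$, traverses the unique $x_{in}x_{out}$-path in $T$, which forces $\kappa_{\vec{G}_1}(x_{in},x_{out})=1$ and hence $U=U_2$, so the pre-orientation of that path in $U_2$ guarantees it is present in $\vec{G}_0$. The only minor deviations are that for the second and third bullets the paper invokes simplicity of the path (any exit from $V(T)$ goes through $x_{out}$, which has already been used) rather than acyclicity of $G$, and your aside about the arc $z_{in}z_{out}$ ``guaranteeing'' an $x_{in}x_{out}$-path in $\vec{U}$ is a false start (those arcs do no such thing) that you rightly abandon in favour of the pre-orientation in $U_2$.
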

   \begin{proof}

       Suppose that $\kappa_{\vec{G}_1}(u,v)=1$, so $\vec{G}_1$ contains a directed $uv$-path $Q$. We first consider the case that $u=x_{out}$ and $v \in \overline{V}$. As $Q$ is a directed path, by the assumption on $x_{out}$ and as $v \in \overline{V}$, we obtain that $V(Q)\cap V(T)=\{x_{out}\}$. Hence $Q$ also exists in $\vec{G_0}$, so $\kappa_{\vec{G_0}}(u,v)=1$. The statement similarly follows if $u \in \overline{V}$ and $v=x_{in}$.
       
       We may hence suppose that $(u,v)\in P_1$. If $V(Q)\cap V(T)=\emptyset$, we obtain again that $Q$ also exists in $\vec{G}_0$. We may hence suppose that $V(Q)\cap V(T)\neq \emptyset$. 
       By the assumptions on $x_{in}$ and $x_{out}$, we obtain that $Q[V(T)]$ is the unique orientation of the $x_{in}x_{out}$-path in $T$ as a directed $x_{in}x_{out}$-path. In particular, we obtain that $\kappa_{\vec{G}_1}(x_{in},x_{out})=1$, so by the choice of $U$, we have that $U=U_2$. As $\vec{U}$ is an orientation of $U$, we obtain that $Q[V(T)]$ also exists in $\vec{U}$ and hence $Q$ also exists in $\vec{G_0}$. This yields $\kappa_{\vec{G_0}}(u,v)=1$.
   \end{proof}
   We are now ready to show that $R(\vec{G_0},w)-R(\vec{G}_1,w)\geq R(\vec{U},w')-R(\vec{U}_0,w')$, which proves the statement. We first give the full calculation and then justify every single transformation.
  
   \begin{align*}
       &R(\vec{G_0},w)-R(\vec{G}_1,w)\\
       =&\left(2\sum_{v \in V(G)}{w(v)\choose 2}+\sum_{(u,v) \in P(G)}\kappa_{\vec{G_0}}(u,v)w(u)w(v)\right)\\&-\left(2\sum_{v \in V(G)}{w(v)\choose 2}+\sum_{(u,v) \in P(G)}\kappa_{\vec{G}_1}(u,v)w(u)w(v)\right)\\
       \geq &\sum_{(u,v) \in P_2\cup P_3 \cup P_4}\kappa_{\vec{G_0}}(u,v)w(u)w(v)-\sum_{(u,v) \in P_2\cup P_3 \cup P_4}\kappa_{\vec{G}_1}(u,v)w(u)w(v)\\
=&\sum_{(u,v) \in P_2}\kappa_{\vec{U}}(u,v)w(u)w(v)-\sum_{(u,v) \in P_2}\kappa_{\vec{U}_0}(u,v)w(u)w(v)\\
&+\sum_{u \in V(T)}\kappa_{\vec{G_0}}(u,x_{out})w(u)\sum_{v \in \overline{V}}\kappa_{\vec{G_0}}(x_{out},v)w(v)+\sum_{u \in \overline{V}}\kappa_{\vec{G_0}}(u,x_{in})w(u)\sum_{v \in V(T)}\kappa_{\vec{G_0}}(x_{in},v)w(v)\\
&-\sum_{u \in V(T)}\kappa_{\vec{G}_1}(u,x_{out})w(u)\sum_{v \in \overline{V}}\kappa_{\vec{G}_1}(x_{out},v)w(v)-\sum_{u \in \overline{V}}\kappa_{\vec{G}_1}(u,x_{in})w(u)\sum_{v \in V(T)}\kappa_{\vec{G}_1}(x_{in},v)w(v)\\ 
\geq&\sum_{(u,v) \in P_2}\kappa_{\vec{U}}(u,v)w'(u)w'(v)-\sum_{(u,v) \in P_2}\kappa_{\vec{U}_0}(u,v)w'(u)w'(v)\\
&+\sum_{u \in V(T)}\kappa_{\vec{G_0}}(u,x_{out})w'(u)\sum_{v \in \overline{V}}\kappa_{\vec{G}_1}(x_{out},v)w(v)+\sum_{u \in \overline{V}}\kappa_{\vec{G}_1}(u,x_{in})w(u)\sum_{v \in V(T)}\kappa_{\vec{G_0}}(x_{in},v)w'(v)\\
&-\sum_{u \in V(T)}\kappa_{\vec{G}_1}(u,x_{out})w'(u)\sum_{v \in \overline{V}}\kappa_{\vec{G}_1}(x_{out},v)w(v)-\sum_{u \in \overline{V}}\kappa_{\vec{G}_1}(u,x_{in})w(u)\sum_{v \in V(T)}\kappa_{\vec{G}_1}(x_{in},v)w'(v)\\
=&\sum_{(u,v) \in P_2}\kappa_{\vec{U}}(u,v)w'(u)w'(v)-\sum_{(u,v) \in P_2}\kappa_{\vec{U}_0}(u,v)w'(u)w'(v)\\
&+\sum_{u \in V(T)}\kappa_{\vec{U}}(u,z_{out})w'(u)w'(z_{out})+\sum_{v \in V(T)}\kappa_{\vec{U}}(z_{in},v)w'(z_{in})w'(v)\\
&-\sum_{u \in V(T)}\kappa_{\vec{U}_0}(u,z_{out})w'(u)w'(z_{out})-\sum_{v \in V(T)}\kappa_{\vec{U}_0}(z_{in},v)w'(z_{in})w'(v)\\
        =&R(\vec{U},w')-R(\vec{U}_0,w').
   \end{align*}

   The first equality is the definition of $R$.
   
The second inequality follows from \Cref{p1g}.

The third equality follows from \Cref{dasdas} and \Cref{summe}.

The fourth inequality follows from \Cref{p1g} and the fact $w'(v)=w(v)$ for all $v \in V(T)$.

The fifth equality follows from the definition of $w'$ and the fact that $\kappa_{\vec{U}_1}(z_{in},v)=\kappa_{\vec{U}_1}(x_{in},v)$ and $\kappa_{\vec{U}_1}(v,z_{out})=\kappa_{\vec{U}_1}(v,x_{out})$ holds for every $v \in V(T)$ and every orientation $\vec{U}_1$ of $U$.

The sixth equality follows from the definition of $R$ and the fact that $\kappa_{\vec{U}_1}(z_{in},z_{out})=1$ and $\kappa_{\vec{U}_1}(z_{out},z_{in})=0$ hold for every orientation $\vec{U}_1$ of $\vec{U}$.

 \begin{Case}\label{c2}
       One of $\delta_G^+(V(T))$ and $\delta_G^-(V(T))$ is empty and $|V(T)\cap V(A(G))|=2$.
    \end{Case}
    By symmetry, we may suppose that $\delta_G^+(V(T))=\emptyset$. Let $x_1,x_2$ be the two unique vertices in $V(T)$ incident to an arc of $\delta_G^-(V(T))$. 

   We now construct a mixed graph $U$ from $T$ by adding three vertices $z_{1},z_2$, and $z_3$  and the arcs $z_{1}x_{1},z_2x_2,z_3x_1$, and $z_{3}x_{2}$. We clearly have that $U$ is a mixed 3-supergraph of $T$. Next observe that $d_{UG(U)}(z_{3})=d_U^+(z_3)=2$ and $UG(U\setminus \{z_{3}x_{1}\})$ is a tree. Hence $U$ is arboresque. We set $\mathcal{U}=\{U\}$. An illustration of $\mathcal{U}$ can be found in \Cref{drftg3}. 
   \begin{figure}[h]
    \centering
        \includegraphics[width=.8\textwidth]{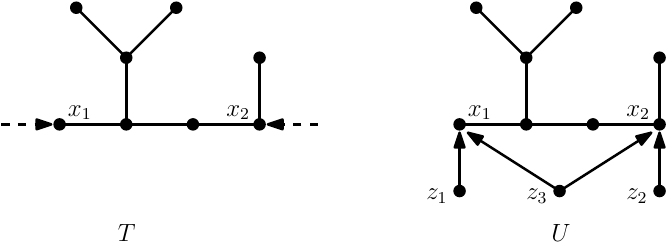}
        \caption{An example of the construction in the proof of \Cref{c2}. On the left, there is an example of an undirected component $T$, where the dashed halfarcs mark the arcs in $A(G)$ incident to $V(T)$. The corresponding mixed graph $U$ is depicted on the right.}\label{drftg3}
\end{figure}Clearly, we have that $\mathcal{U}$ can be computed in polynomial time.

   Again, before proving the last property of simulation sets, we give some preliminary results.
    \begin{claim}\label{dasdas2}
       Let $\vec{G}$ be an orientation of $G$ and $\vec{U}$ an orientation of $U$ such that $\vec{G}$ and $\vec{U}$ agree on $E(T)$. Then $\kappa_{\vec{U}}(u,v)=\kappa_{\vec{G}}(u,v)$ holds for all $(u,v)\in P_2$.
   \end{claim}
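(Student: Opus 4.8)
The plan is to prove Claim~\ref{dasdas2} by the same strategy used for Claim~\ref{dasdas} in \Cref{c1}, adapted to the structure of \Cref{c2}. The key structural observation is that each of the three added vertices $z_1,z_2,z_3$ is incident in $U$ only to arcs directed away from it, so in \emph{every} orientation $\vec{U}$ of $U$, each of $z_1,z_2,z_3$ is a source, i.e. has in-degree $0$. This is what plays the role that "$z_{in}$ is a source and $z_{out}$ is a sink" played in \Cref{c1}. Alongside this, I would record that, since $T$ is an undirected component of the acyclic mixed graph $G$, there is no edge of $G$ with exactly one endvertex in $V(T)$ and no arc of $G$ has both endvertices in $V(T)$ (such an arc together with a path in the tree $T$ would close a mixed cycle); hence $G[V(T)]=T=U[V(T)]$, and since $\vec{G}$ and $\vec{U}$ agree on $E(T)$, any directed path whose vertex set is contained in $V(T)$ is a directed path in $\vec{G}$ if and only if it is a directed path in $\vec{U}$.

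With these facts in hand, fix $(u,v)\in P_2$, so $u,v\in V(T)$. For the inequality $\kappa_{\vec{U}}(u,v)\leq\kappa_{\vec{G}}(u,v)$, I would take a directed $uv$-path $Q$ in $\vec{U}$ and argue that $V(Q)\cap\{z_1,z_2,z_3\}=\emptyset$: a vertex of in-degree $0$ in $\vec{U}$ can only occur as the first vertex of a directed path, and $Q$ starts at $u\in V(T)$. Thus $V(Q)\subseteq V(T)$, so $Q$ is a directed $uv$-path in $\vec{G}$ and $\kappa_{\vec{G}}(u,v)=1$. For the reverse inequality, I would take a directed $uv$-path $P$ in $\vec{G}$; since $\delta_G^+(V(T))=\emptyset$ and no edge of $G$ leaves $V(T)$, any arc of $P$ with tail in $V(T)$ must also have head in $V(T)$, so starting from $u\in V(T)$ the path $P$ never leaves $V(T)$. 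Hence $V(P)\subseteq V(T)$ and $P$ is a directed $uv$-path in $\vec{U}$, giving $\kappa_{\vec{U}}(u,v)=1$.

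I do not expect a genuine obstacle here: the content is essentially the acyclicity-plus-component argument already deployed for Claim~\ref{dasdas}, with the only point requiring care being the verification that the three newly added vertices are sources in every orientation of $U$ (which makes the values of $\kappa$ on $P_2$ insensitive to the reorientation of the edges of $T$), together with the observation that $\delta_G^+(V(T))=\emptyset$ forces directed paths of $\vec{G}$ that start inside $V(T)$ to stay there.
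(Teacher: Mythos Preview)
Your proposal is correct and follows essentially the same approach as the paper's own proof: both directions use that $z_1,z_2,z_3$ are sources in any orientation of $U$ to keep $\vec{U}$-paths inside $V(T)$, and that $\delta_G^+(V(T))=\emptyset$ to keep $\vec{G}$-paths starting in $V(T)$ inside $V(T)$. Your additional remark that $G[V(T)]=T=U[V(T)]$ (via acyclicity and the definition of undirected component) makes explicit what the paper leaves implicit, but the argument is otherwise identical.
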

\begin{proof}
    Let $(u,v)\in P_2$. First suppose that $\kappa_{\vec{U}}(u,v)=1$, so there exists a directed $uv$-path $Q$ in $\vec{U}$. As $z_{1},z_2$, and $z_3$ are sources in $\vec{U}$, we obtain that $V(Q)\cap \{z_{1},z_{2},z_3\}=\emptyset$ and so, as $\vec{G}$ and $\vec{U}$ agree on $E(T)$, we have that $Q$ also exists in $\vec{G}$, yielding $\kappa_{\vec{G}}(u,v)=1$.
    
    Now suppose that $\kappa_{\vec{G}}(u,v)=1$, so there exists a directed $uv$-path $Q$ in $\vec{G}$. As $\vec{G}$ does not contain an arc leaving $V(T)$, we obtain that $V(Q)\subseteq V(T)$. Hence, as $\vec{G}$ and $\vec{U}$ agree on $E(T)$, it follows that $Q$ also exists in $\vec{U}$, so  $\kappa_{\vec{U}}(u,v)=1$.
\end{proof}
   \begin{claim}\label{setdrzftugzhuji}
    Let $(u,v)\in P_4$ and $\vec{G}$ an orientation of $G$. Then $$\kappa_{\vec{G}}(u,v)=\max\{\kappa_{\vec{G}}(u,x_1)\kappa_{\vec{G}}(x_1,v), \kappa_{\vec{G}}(u,x_2)\kappa_{\vec{G}}(x_2,v)\}.$$
\end{claim}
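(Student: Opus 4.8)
The plan is to prove the two inequalities $\kappa_{\vec{G}}(u,v)\ge M$ and $\kappa_{\vec{G}}(u,v)\le M$ separately, where $M=\max\{\kappa_{\vec{G}}(u,x_1)\kappa_{\vec{G}}(x_1,v),\kappa_{\vec{G}}(u,x_2)\kappa_{\vec{G}}(x_2,v)\}$. Both directions are short; the only real content is a structural observation about how a directed path in $\vec{G}$ can first enter $V(T)$.

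For $\kappa_{\vec{G}}(u,v)\ge M$ I would note that there is nothing to prove when $M=0$, since $\kappa_{\vec{G}}$ is nonnegative, and that when $M=1$ we may assume by the symmetry of the two terms that $\kappa_{\vec{G}}(u,x_1)=\kappa_{\vec{G}}(x_1,v)=1$; then $\vec{G}$ contains a directed $ux_1$-path and a directed $x_1v$-path, and the concatenation of these is a directed $uv$-walk, so $\kappa_{\vec{G}}(u,v)=1$.

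For $\kappa_{\vec{G}}(u,v)\le M$ I would assume $\kappa_{\vec{G}}(u,v)=1$, fix a directed $uv$-path $P$ in $\vec{G}$, and let $y$ be the first vertex of $P$ (in the order from $u$ to $v$) that lies in $V(T)$; this exists since $v\in V(T)$, and $y\neq u$ since $u\notin V(T)$, so $y$ has a predecessor on $P$, which by the choice of $y$ lies in $\overline{V}$. Hence the arc of $P$ entering $y$ has tail in $\overline{V}$ and head $y\in V(T)$. I would then argue that this arc lies in $\delta_G^-(V(T))$: since $T$ is an entire undirected component of $G$ we have $\delta_G(V(T))=\emptyset$, so no oriented edge of $\vec{G}$ crosses $V(T)$, and the arc in question must therefore be an original arc of $G$, necessarily in $\delta_G^-(V(T))$. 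By the definition of $x_1$ and $x_2$ as the two vertices of $V(T)$ incident to an arc of $\delta_G^-(V(T))$, this forces $y\in\{x_1,x_2\}$; say $y=x_1$ by symmetry. Splitting $P$ at $x_1$ then gives a directed $ux_1$-subpath and a directed $x_1v$-subpath, so $\kappa_{\vec{G}}(u,x_1)=\kappa_{\vec{G}}(x_1,v)=1$ and the first term of $M$ equals $1$. Combining the two inequalities gives the claimed identity.

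I do not expect any genuine obstacle; the only step requiring care is identifying the arc of $P$ entering $y$ as an element of $\delta_G^-(V(T))$, which hinges on $T$ being a full component of $G\setminus A(G)$ so that neither an edge of $G$ nor an oriented edge of $\vec{G}$ can leave $V(T)$, and hence the first crossing of $P$ into $V(T)$ happens along an original arc whose head is one of the two distinguished vertices $x_1,x_2$.
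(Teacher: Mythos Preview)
Your proof is correct and follows essentially the same approach as the paper: both directions are handled identically, with the key observation being that any directed $uv$-path in $\vec{G}$ must enter $V(T)$ through $x_1$ or $x_2$ since $\delta_G(V(T))=\emptyset$ and the heads of the arcs in $\delta_G^-(V(T))$ are exactly $x_1,x_2$. You are slightly more explicit than the paper in justifying why the crossing arc must be an original arc of $G$, but the argument is the same.
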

\begin{proof}
    First suppose that $\max\{\kappa_{\vec{G}}(u,x_1)\kappa_{\vec{G}}(x_1,v), \kappa_{\vec{G}}(u,x_2)\kappa_{\vec{G}}(x_2,v)\}=1$. By symmetry, we may suppose that $\kappa_{\vec{G}}(u,x_1)\kappa_{\vec{G}}(x_1,v)=1$, so $\kappa_{\vec{G}}(u,x_1)=\kappa_{\vec{G}}(x_1,v)=1$. It follows that $\vec{G}$ contains a directed $ux_1$-path $Q_1$ and a directed $x_1v$-path $Q_2$. Hence the concatenation of $Q_1$ and $Q_2$ is a directed $uv$-walk in $\vec{G}$, so $\kappa_{\vec{G}}(u,v)=1$.

    Now suppose that $\kappa_{\vec{G}}(u,v)=1$, so there exists a directed $uv$-path $Q$ in $\vec{G}$. As $u \in V(G)\setminus V(T), v \in V(T)$, and the head of every arc in $A(\vec{G})$ entering $V(T)$ is $x_1$ or $x_2$, we obtain that $\{x_1,x_2\}\cap V(Q)\neq \emptyset$. By symmetry, we may suppose that $x_1 \in V(Q)$. It follows that $Q$ contains a directed $ux_1$-path and a directed $x_1v$-path. This yields $\max\{\kappa_{\vec{G}}(u,x_1)\kappa_{\vec{G}}(x_1,v), \kappa_{\vec{G}}(u,x_2)\kappa_{\vec{G}}(x_2,v)\}\geq \kappa_{\vec{G}}(u,x_1)\kappa_{\vec{G}}(x_1,v)=1$. 
\end{proof}

We now prove the last property of simulation sets. Let $\vec{G}_1$ be an orientation of $G$.  Before defining $w'$, we need some extra definitions and observations. We first define three subsets $\overline{V_1},\overline{V_2}$, and $\overline{V_3}$ in the following way: we let $\overline{V_1}$ consist of all $v \in \overline{V}$ that satisfy $\kappa_{\vec{G}_1-\{x_2\}}(v,x_1)=1$ and $\kappa_{\vec{G}_1-\{x_1\}}(v,x_2)=0$, we let $\overline{V_2}$ consist of all $v \in \overline{V}$ that satisfy $\kappa_{\vec{G}_1-\{x_1\}}(v,x_2)=1$ and $\kappa_{\vec{G}_1-\{x_2\}}(v,x_1)=0$, and we let $\overline{V_3}$ consist of all $v \in \overline{V}$ that satisfy $\kappa_{\vec{G}_1-\{x_1\}}(v,x_2)=1$ and $\kappa_{\vec{G}_1-\{x_2\}}(v,x_1)=1$. Observe that $(\overline{V_1},\overline{V_2},\overline{V_3})$ is a subpartition of $\overline{V}$.
For $i\in [3]$, we define $P_4^{i}$ to be the set of pairs $(u,v)\in P_4$ with $u \in \overline{V_i}$.
   We now define $w':V(U)\rightarrow [n]_0$ by $w'(z_i)=w(\overline{V_i})$ for $i \in [3]$ and $w'(v)=w(v)$ for all $v \in V(T)$. Observe that $w'$ is consistent with $w$. Further, as  $(\overline{V_1},\overline{V_2},\overline{V_3})$ is a subpartition of $\overline{V}$, it follows that $|w'|\leq |w|$.

    Now let $\vec{U}$ be an orientation of $U$. We further define $\vec{G}_0=\vec{G}_1\langle \vec{U} \rangle$ and $\vec{U}_0=\vec{U}\langle \vec{G}_1\rangle$. It follows directly by construction that $\vec{G}_0$ is an orientation of $G$ and $\vec{U}_0$ is an orientation of $U$.
Again, before giving the final property of simulation sets, we need to prove some preliminary results.
    \begin{claim}\label{p1g2}
    Let $(u,v)\in P(G)$ with $\kappa_{\vec{G}_1}(u,v)> \kappa_{\vec{G_0}}(u,v)$. Then $(u,v)\in P_2 \cup P_4^{1}\cup P_4^2 \cup P_4^3$.
\end{claim}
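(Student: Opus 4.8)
The approach is to use that $(P_1,P_2,P_3,P_4)$ partitions $P(G)$ and to exclude, one after another, the possibility that $(u,v)$ lies in $P_1$, in $P_3$, or in $P_4\setminus(P_4^1\cup P_4^2\cup P_4^3)$. Two observations drive everything. Since $T$ is a tree ($G$ being acyclic) and any arc of $G$ with both endvertices in $V(T)$ would close a mixed cycle together with the tree-path joining them, $G[V(T)]$ has no arcs; as the only vertices of $U$ lying in $V(G)$ are those of $V(T)$, it follows that $\vec{G}_0=\vec{G}_1\langle\vec{U}\rangle$ and $\vec{G}_1$ agree on every arc outside $E(T)$. Moreover, since $\delta_G^+(V(T))=\emptyset$, in any orientation of $G$ a directed path that reaches $V(T)$ can never leave it again, and its first vertex in $V(T)$, being the head of an arc of $\delta_G^-(V(T))$, must be $x_1$ or $x_2$.

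Now suppose $\kappa_{\vec{G}_1}(u,v)>\kappa_{\vec{G}_0}(u,v)$, so there is a directed $uv$-path $Q$ in $\vec{G}_1$ but none in $\vec{G}_0$. If $(u,v)\in P_1$, then both ends of $Q$ lie in $\overline{V}$, so $Q$ avoids $V(T)$ altogether and hence also exists in $\vec{G}_0$, a contradiction. If $(u,v)\in P_3$, then $Q$ starts in $V(T)$ and can never leave it, forcing $v\in V(T)$, which contradicts $v\in\overline{V}$; in fact $\kappa_{\vec{G}_1}(u,v)=0$ for every $(u,v)\in P_3$ in this case. Hence $(u,v)\in P_2\cup P_4$, and it only remains to treat $(u,v)\in P_4$.

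For $(u,v)\in P_4$ we have $u\in\overline{V}$, $v\in V(T)$, and $Q$ must enter $V(T)$; let $x\in\{x_1,x_2\}$ be the first vertex of $Q$ in $V(T)$ and $x'$ the other of $\{x_1,x_2\}$. The initial segment of $Q$ from $u$ to $x$ meets $V(T)$ only in $x$, hence avoids $x'$, so $\kappa_{\vec{G}_1-\{x'\}}(u,x)=1$. Inspecting the definitions of $\overline{V_1},\overline{V_2},\overline{V_3}$, this reachability condition puts $u$ into one of these sets (into $\overline{V_1}$ or $\overline{V_3}$ when $x=x_1$, and into $\overline{V_2}$ or $\overline{V_3}$ when $x=x_2$), so $(u,v)\in P_4^1\cup P_4^2\cup P_4^3$, which completes the argument. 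I do not anticipate a genuine obstacle; the only points requiring care are verifying precisely which edges are reoriented between $\vec{G}_1$ and $\vec{G}_0$ and checking that the first vertex of $Q$ in $V(T)$ is well defined and belongs to $\{x_1,x_2\}$, both of which follow immediately from the hypotheses of \Cref{c2} and the acyclicity of $G$.
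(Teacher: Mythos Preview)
Your argument is correct and follows essentially the same line as the paper: you exclude $P_1$ and $P_3$ by observing that no arc leaves $V(T)$ in any orientation, and you handle $P_4$ by looking at the first vertex of $Q$ in $V(T)$, which must be $x_1$ or $x_2$ and is reached while avoiding the other, placing $u$ in $\overline{V_1}\cup\overline{V_2}\cup\overline{V_3}$. If anything, your treatment is slightly more careful than the paper's, since you separate $P_1$ and $P_3$ (the paper lumps them together and asserts $V(Q)\cap V(T)=\emptyset$, which is vacuously fine for $P_3$ but literally false as stated) and you explicitly justify why $\vec{G}_0$ and $\vec{G}_1$ differ only on $E(T)$.
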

\begin{proof}
    First consider some $(u,v)\in P_1 \cup P_3$ with $\kappa_{\vec{G}_1}(u,v)=1$, so there exists a directed $uv$-path $Q$ in $\vec{G}_1$. As $d_{\vec{G}_1}^+(V(T))\leq d_{G}^+(V(T))+d_G(V(T))=0$, we obtain that $V(Q)\cap V(T)=\emptyset$. Hence $Q$ also exists in $\vec{G}_0$. This yields $ \kappa_{\vec{G_0}}(u,v)=1$. 

    Finally consider some $(u,v)\in P_4\setminus (P_4^{1}\cup P_4^2 \cup P_4^3)$. Suppose for the sake of a contradiction that $\vec{G}_1$  contains a directed $uv$-path $Q$. As $u \in \overline{V},v \in V(T)$ and $\delta_{\vec{G}_1}^-(V(T))\subseteq \delta_{\vec{G}_1}^-(\{x_1,x_2\})$, we obtain that $V(Q)\cap \{x_1,x_2\}\neq \emptyset$. By symmetry, we may suppose that $Q$ contains a directed $ux_1$-path $Q_1$ that does not contain $x_2$. This yields $\kappa_{\vec{G}_1-\{x_2\}}(u,x_1)=1$, a contradiction to $(u,v)\in P_4\setminus (P_4^{1}\cup P_4^2 \cup P_4^3)$.
\end{proof}
\begin{claim}\label{serdft}
    Let $i \in [2],(u,v)\in P_4^{i}$ and $\vec{G}$ an orientation of $G$ that agrees with $\vec{G}_1$ on $E(G)\setminus E(T)$. Then $\max\{\kappa_{\vec{G}}(u,x_1)\kappa_{\vec{G}}(x_1,v), \kappa_{\vec{G}}(u,x_2)\kappa_{\vec{G}}(x_2,v)\}=\kappa_{\vec{G}}(x_i,v)$.
\end{claim}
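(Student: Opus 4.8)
The plan is to reduce the claim to two reachability-transfer statements between $\vec{G}_1$ and the reoriented digraph $\vec{G}$, using that these two digraphs differ only on $E(T)$ and that $V(T)$ can be entered only at $x_1$ or $x_2$. By the symmetry exchanging $x_1$ with $x_2$ (which also exchanges $\overline{V_1}$ with $\overline{V_2}$, hence $P_4^1$ with $P_4^2$, and leaves the hypotheses of the claim invariant), I may assume $i=1$; thus $u \in \overline{V_1}$, i.e.\ $\kappa_{\vec{G}_1-\{x_2\}}(u,x_1)=1$ and $\kappa_{\vec{G}_1-\{x_1\}}(u,x_2)=0$.

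First I would record the relevant structure of $G$ near $T$: since $T$ is an undirected component, $\delta_G(V(T))=\emptyset$, and since $\delta_G^+(V(T))=\emptyset$ and $x_1,x_2$ are the only vertices of $V(T)$ incident to an arc, every arc of $G$ with an endvertex in $V(T)$ has its head in $\{x_1,x_2\}$. Consequently, any directed path in $\vec{G}$ (or in $\vec{G}_1$) that starts in $\overline{V}$ and avoids one of $x_1,x_2$ meets $V(T)$ only in its last vertex, and hence uses no edge of $E(T)$; since $\vec{G}$ and $\vec{G}_1$ agree on $A(G)$ and on $E(G)\setminus E(T)$, such a path exists in $\vec{G}$ if and only if it exists in $\vec{G}_1$. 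Applying this observation twice yields the two facts I need: (A) transferring a directed $ux_1$-path avoiding $x_2$ from $\vec{G}_1$ to $\vec{G}$ gives $\kappa_{\vec{G}}(u,x_1)=1$; and (B) $\kappa_{\vec{G}-\{x_1\}}(u,x_2)=0$, since otherwise a directed $ux_2$-path in $\vec{G}-\{x_1\}$ would transfer back to $\vec{G}_1$, contradicting $u \in \overline{V_1}$.

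Finally I would assemble the claim from (A) and (B). By (A) the term $\kappa_{\vec{G}}(u,x_1)\kappa_{\vec{G}}(x_1,v)$ equals $\kappa_{\vec{G}}(x_1,v)$, so the left-hand maximum is at least $\kappa_{\vec{G}}(x_1,v)$; for the reverse inequality — which is nontrivial only when $\kappa_{\vec{G}}(x_1,v)=0$ — I must rule out $\kappa_{\vec{G}}(u,x_2)=\kappa_{\vec{G}}(x_2,v)=1$, and this is exactly where (B) enters: by (B) every directed $ux_2$-path in $\vec{G}$ passes through $x_1$, so $\kappa_{\vec{G}}(u,x_2)=1$ forces $\kappa_{\vec{G}}(x_1,x_2)=1$, which together with $\kappa_{\vec{G}}(x_2,v)=1$ gives $\kappa_{\vec{G}}(x_1,v)=1$ by concatenation — a contradiction. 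I expect the only delicate point to be the bookkeeping in the ``first entry into $V(T)$'' step behind (A) and (B): one must check that the relevant path genuinely uses no $E(T)$-edge and so transfers verbatim between $\vec{G}_1$ and $\vec{G}$; after that the remaining steps are one-line concatenation arguments. (One could also invoke \Cref{setdrzftugzhuji} to restate the goal as $\kappa_{\vec{G}}(u,v)=\kappa_{\vec{G}}(x_1,v)$, but this is not needed above.)
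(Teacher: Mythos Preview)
Your proposal is correct and follows essentially the same approach as the paper: reduce by symmetry to $i=1$, transfer the $ux_1$-path avoiding $x_2$ from $\vec{G}_1$ to $\vec{G}$ via the ``first entry into $V(T)$'' argument (your (A)), transfer the non-existence of a $ux_2$-path avoiding $x_1$ the same way (your (B)), and finish by concatenation. The only cosmetic difference is that you isolate (A) and (B) up front and derive $\kappa_{\vec{G}}(u,x_1)=1$ unconditionally, whereas the paper interleaves these with the two directions of the equality; the underlying reasoning is identical.
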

\begin{proof}
By symmetry, it suffices to prove the statement for $i=1$.

 First suppose that $\kappa_{\vec{G}}(x_1,v)=1$. As $(u,v)\in P_4^1$, we have $u \in \overline{V_1}$ and so $\kappa_{\vec{G}_1-\{x_2\}}(u,x_1)=1$ holds. In particular, we have that $\vec{G}_1-\{x_2\}$ contains a directed $ux_1$-path $Q$. As $x_1$ and $x_2$ are the only vertices in $V(T)$ incident to edges entering $V(T)$ in $UG(G)$, we obtain that $V(Q)\subseteq \overline{V} \cup \{x_1\}$. As $\vec{G}_1$ and $\vec{G}$ agree on $E(G)\setminus E(T)$, it follows that $Q$ is also a directed subpath of $\vec{G}$. It follows that $\kappa_{\vec{G}}(u,x_1)=1$. We obtain $\max\{\kappa_{\vec{G}}(u,x_1)\kappa_{\vec{G}}(x_1,v), \kappa_{\vec{G}}(u,x_2)\kappa_{\vec{G}}(x_2,v)\}\geq \kappa_{\vec{G}}(u,x_1)\kappa_{\vec{G}}(x_1,v)=1$.

  We now suppose that $\max\{\kappa_{\vec{G}}(u,x_1)\kappa_{\vec{G}}(x_1,v), \kappa_{\vec{G}}(u,x_2)\kappa_{\vec{G}}(x_2,v)\}=1$. In the case that  $\kappa_{\vec{G}}(u,x_1)\kappa_{\vec{G}}(x_1,v)=1$, there is nothing to prove. It remains to consider the case that $\kappa_{\vec{G}}(u,x_2)\kappa_{\vec{G}}(x_2,v)=1$, so $\kappa_{\vec{G}}(u,x_2)=\kappa_{\vec{G}}(x_2,v)=1$. It follows that $\vec{G}$ contains a directed $ux_2$-path $Q_1$ and a directed $x_2v$-path $Q_2$. Let $x_0$ be the first vertex of $Q_1$ in $V(T)$. As $x_1$ and $x_2$ are the only vertices in $V(T)$ incident to edges entering $V(T)$ in $UG(G)$, we obtain that $x_0\in \{x_1,x_2\}$. Further, as $(u,v)\in P_4^1$, we have that $u \in \overline{V_1}$ and hence $\kappa_{\vec{G}_1-\{x_1\}}(u,x_2)=0$. As $\vec{G}$ and $\vec{G}_1$ agree on $E(G)\setminus E(T)$, we obtain that $\kappa_{\vec{G}-\{x_1\}}(u,x_2)=0$ and hence $x_0 \neq x_2$. It follows that $x_0=x_1$. In particular, we have that $Q_1$ contains a directed $x_1x_2$-path $Q_3$. Observe that the concatenation of $Q_3$ and $Q_2$ is a directed $x_1v$-walk in $\vec{G}$. It follows that $\kappa_{\vec{G}}(x_1,v)=1$.
\end{proof}
\begin{claim}\label{drztfzh}
    Let $u \in \overline{V_3}$ and $\vec{G}$ an orientation of $G$ that agrees with $\vec{G}_1$ on $E(G)\setminus E(T)$. Then $\kappa_{\vec{G}}(u,x_1)=\kappa_{\vec{G}}(u,x_2)=1$.
\end{claim}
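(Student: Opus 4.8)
The plan is to use the definition of $\overline{V_3}$ to exhibit explicit directed paths in $\vec{G}_1$ and then show that reorienting the edges of $E(T)$ does not destroy them. Since $u\in\overline{V_3}$, we have $\kappa_{\vec{G}_1-\{x_2\}}(u,x_1)=1$ and $\kappa_{\vec{G}_1-\{x_1\}}(u,x_2)=1$, so $\vec{G}_1$ contains a directed $ux_1$-path $Q_1$ with $x_2\notin V(Q_1)$ and a directed $ux_2$-path $Q_2$ with $x_1\notin V(Q_2)$. By symmetry it suffices to show that $Q_1$ is also a directed $ux_1$-path in $\vec{G}$, as this gives $\kappa_{\vec{G}}(u,x_1)=1$, and running the same argument on $Q_2$ with the roles of $x_1$ and $x_2$ interchanged gives $\kappa_{\vec{G}}(u,x_2)=1$.

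The first step I would carry out is to argue that $V(Q_1)\cap V(T)=\{x_1\}$. As $T$ is an undirected component of $G$, no edge of $G$ joins $V(T)$ to $\overline{V}$, and since $\delta_G^+(V(T))=\emptyset$ in the present case, the only edges of $UG(G)$ with exactly one endvertex in $V(T)$ are underlying edges of arcs of $\delta_G^-(V(T))$; hence $x_1$ and $x_2$ are the only vertices of $V(T)$ incident to such an edge, exactly as used in the proofs of \Cref{p1g2} and \Cref{serdft}. Now let $x_0$ be the first vertex of $Q_1$ in $V(T)$. The arc of $Q_1$ entering $x_0$ has its tail outside $V(T)$, so by the above $x_0\in\{x_1,x_2\}$; since $x_2\notin V(Q_1)$ we get $x_0=x_1$, and as $x_1$ is the last vertex of $Q_1$, no vertex of $Q_1$ comes after $x_0$, so $V(Q_1)\cap V(T)=\{x_1\}$.

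It follows that every arc of $Q_1$ has at least one endvertex in $\overline{V}$, hence is not an orientation of an edge of $E(T)$; each such arc is therefore either an arc of $A(G)$ or an orientation of an edge of $E(G)\setminus E(T)$, and in both cases it appears in $\vec{G}$ as well, because $\vec{G}$ and $\vec{G}_1$ share the arc set $A(G)$ and agree on $E(G)\setminus E(T)$. Consequently $Q_1$ is a directed $ux_1$-path in $\vec{G}$ and $\kappa_{\vec{G}}(u,x_1)=1$, which together with the symmetric treatment of $Q_2$ finishes the proof. The only delicate point is the ``first vertex in $V(T)$'' step: one must exclude the possibility that $Q_1$ enters $V(T)$, passes through $T$, leaves it again, and only later reaches $x_1$. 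This is ruled out precisely because $\delta_G^+(V(T))=\emptyset$ and no edge of $G$ crosses, so a directed path never leaves $V(T)$ once it enters it — and here $Q_1$ enters $V(T)$ only at its final vertex $x_1$.
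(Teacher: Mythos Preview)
Your argument is correct and follows essentially the same route as the paper: take a directed $ux_1$-path in $\vec{G}_1-\{x_2\}$, observe that its only vertex in $V(T)$ is $x_1$ (since the sole remaining entry point to $V(T)$ is $x_1$ and $x_1$ is the path's endpoint), and conclude that all its arcs survive in any $\vec{G}$ agreeing with $\vec{G}_1$ on $E(G)\setminus E(T)$. Your write-up is slightly more explicit than the paper's about the ``first vertex in $V(T)$'' step, but the content is the same.
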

\begin{proof}
By symmetry, it suffices to prove that $\kappa_{\vec{G}}(u,x_1)=1$. By the definition of $\overline{V_3}$, there exists a directed $ux_1$-path $Q$ in $\vec{G}_1-\{x_2\}$. As the head of every arc entering $V(T)$ in $\vec{G}_1-\{x_2\}$ is $x_1$ by the assumption on $x_1$ and $x_2$, we obtain $V(Q)\cap V(T)=\{x_1\}$. As $\vec{G}$ and $\vec{G}_1$ agree on $E(G)\setminus E(T)$, we obtain that $Q$ is also a directed subpath of $\vec{G}$. It follows that $\kappa_{\vec{G}}(u,x_1)=1$.
\end{proof}
  We are now ready to show that $R(\vec{G_0},w)-R(\vec{G}_1,w)\geq R(\vec{U},w')-R(\vec{U}_0,w')$, which proves the statement. We first give the full calculation and then justify every single transformation.
\newpage
\begin{align*}
       &R(\vec{G_0},w)-R(\vec{G}_1,w)\\
       =&\left(2\sum_{v \in V(G)}{w(v)\choose 2}+\sum_{(u,v) \in P(G)}\kappa_{\vec{G_0}}(u,v)w(u)w(v)\right)\\&-\left(2\sum_{v \in V(G)}{w(v)\choose 2}+\sum_{(u,v) \in P(G)}\kappa_{\vec{G}_1}(u,v)w(u)w(v)\right)\\
       \geq &\sum_{(u,v) \in P_2 \cup P_4}\kappa_{\vec{G_0}}(u,v)w(u)w(v)-\sum_{(u,v) \in P_2 \cup P_4}\kappa_{\vec{G}_1}(u,v)w(u)w(v)\\
=&\sum_{(u,v) \in P_2}\kappa_{\vec{G_0}}(u,v)w(u)w(v)-\sum_{(u,v) \in P_2}\kappa_{\vec{G}_1}(u,v)w(u)w(v)\\
&+\sum_{(u,v) \in P_4}\max\{\kappa_{\vec{G_0}}(u,x_1)\kappa_{\vec{G_0}}(x_1,v),\kappa_{\vec{G_0}}(u,x_2)\kappa_{\vec{G_0}}(x_2,v)\}w(u)w(v)\\ &-\sum_{(u,v) \in P_4}\max\{\kappa_{\vec{G}_1}(u,x_1)\kappa_{\vec{G}_1}(x_1,v),\kappa_{\vec{G}_1}(u,x_2)\kappa_{\vec{G}_1}(x_2,v)\}w(u)w(v)\\
=&\sum_{(u,v) \in P_2}\kappa_{\vec{G_0}}(u,v)w(u)w(v)-\sum_{(u,v) \in P_2}\kappa_{\vec{G}_1}(u,v)w(u)w(v)\\
&+\sum_{(u,v) \in P_4^1}\kappa_{\vec{G_0}}(x_1,v)w(u)w(v)+\sum_{(u,v) \in P_4^2}\kappa_{\vec{G_0}}(x_2,v)w(u)w(v)\\&+\sum_{(u,v) \in P_4^3}\max\{\kappa_{\vec{G_0}}(x_1,v),\kappa_{\vec{G_0}}(x_2,v)\}w(u)w(v)\\&-\sum_{(u,v) \in P_4^1}\kappa_{\vec{G}_1}(x_1,v)w(u)w(v)-\sum_{(u,v) \in P_4^2}\kappa_{\vec{G}_1}(x_2,v)w(u)w(v)\\&-\sum_{(u,v) \in P_4^3}\max\{\kappa_{\vec{G}_1}(x_1,v),\kappa_{\vec{G}_1}(x_2,v)\}w(u)w(v)\\
=&\sum_{(u,v) \in P_2}\kappa_{\vec{U}}(u,v)w'(u)w'(v)-\sum_{(u,v) \in P_2}\kappa_{\vec{U}_0}(u,v)w'(u)w'(v)\\
&+w'(z_1)\sum_{v \in V(T)}\kappa_{\vec{U}}(z_1,v)w'(v)+w'(z_2)\sum_{v \in V(T)}\kappa_{\vec{U}}(z_2,v)w'(v)+w'(z_3)\sum_{v \in V(T)}\kappa_{\vec{U}}(z_3,v)w'(v)\\&-w'(z_1)\sum_{v \in V(T)}\kappa_{\vec{U}_0}(z_1,v)w'(v)-w'(z_2)\sum_{v \in V(T)}\kappa_{\vec{U}_0}(z_2,v)w'(v)-w'(z_3)\sum_{v \in V(T)}\kappa_{\vec{U}_0}(z_3,v)w'(v)\\
        =&R(\vec{U},w')-R(\vec{U}_0,w').
   \end{align*}

 The first equality is the definition of $R$.
   
The second inequality follows from \Cref{p1g2}.

The third equality follows from \Cref{setdrzftugzhuji}.

The fourth equality follows from \Cref{p1g2}, \Cref{serdft}, and \Cref{drztfzh}.

The fifth equality follows from the definition of $w'$, \Cref{dasdas2} and the fact that $\kappa_{\vec{U}_1}(z_{i},v)=\kappa_{\vec{U}_1}(x_{i},v)$ for $i \in[2]$ and $\kappa_{\vec{U}_1}(z_{3},v)=\max\{\kappa_{\vec{U}_1}(x_{1},v),\kappa_{\vec{U}_1}(x_{2},v)\}$ hold for  every $v \in V(T)$ and every orientation $\vec{U}_1$ of $U$.

The sixth equality follows from the definition of $R$ and the fact that $\kappa_{\vec{U}_1}(z_i,z_j)=0$ holds for all distinct $i,j \in [3]$ and every orientation $\vec{U}_1$ of $U$.
\end{proof}
\subsection{Arboresque mixed graphs}\label{sec42}

The purpose of this section is to prove \Cref{tfzghik}. We first give a short overview of the proof.

For an instance of WAMMRO containing a mixed graph whose underlying graph is a tree, a rather simple dynamic programming algorithm is available. We root the tree at an arbitrary vertex and use a bottom-up approach, where for every vertex $v$, we store whether there exists an orientation of the mixed graph $T_v$ corresponding to the subtree rooted at $v$ meeting three conditions: the objective value inside the orientation of $T_v$ is above a certain threshold, the weight of the vertices reachable from $v$ is above a certain threshold, and the weight of the vertices from which $v$ is reachable is above a certain threshold. We store the information whether such an orientation of $T_v$ exists for all possible thresholds. Now we can recursively decide whether orientations with the desired properties exist.

For an arboresque mixed graph $G$, there exists the additional arc $rs$, which forces us to be a bit more careful about the dynamic program. First, we root the tree $UG(G \setminus \{rs\})$ at $s$. Next, we make the orientations of $T_v$ respect two additional constraints: one is represented by a binary variable indicating whether $v$ is reachable from $r$ in the orientation of $T_v$ and one upper bounds the weight of all vertices reachable from both $r$ and $v$ in the orientation of the subtree. This is necessary to avoid double counting reachabilities from $r$ when determining the objective value in the end of the algorithm. We  now restate \Cref{tfzghik} and give the full proof.
\tfzghik*

\begin{proof} As $(G,w)$ is arboresque and by symmetry, we may suppose that $A(G)$ contains an arc $rs$ such that $d_G^+(r)=d_{UG(G)}(r)=2$ and the underlying graph $T$ of $G\setminus \{rs\}$ is a tree. Observe that $r$ is a leaf of $T$. In the following, we consider $T$ to be rooted at $s$.
    For every $v \in V(G)$, we let $T_v$ be the unique mixed subgraph of $G$ whose underlying graph is the subtree of $T$ rooted at $v$. We say that some $v \in V(G)$ is {\it special} if it is on the unique path from $r$ to $s$ in $T$. We now start by running a dynamic program. In the following, for the sake of simplicity, we use $Q$ for $V(G)\times\{True, False\}\times [n]_0\times [n]_0 \times [n^2]_0\times [n]_0$. For $v \in V(G)$, we use $Q_{v}$ for the set of all $q \in Q$ whose first entry is $v$ and for $V'\subseteq V(G)$, we use $Q_{V'}$ for $\bigcup_{v \in V'}Q_v$.
    
    For every $q=(v,b,k_{in},k_{out},k_{reach},k_{shared})\in Q$, we wish to compute a boolean value $z[q]$ such that $z[q]=True$ if and only if there exists an orientation $\vec{T}_v$ of $T_v$ that satisfies the following conditions:
    \begin{itemize}
        \item $b=True$ if and only if $v$ is reachable from $r$ in $\vec{T}_v$,
        \item $w(In_{\vec{T}_v}(v))\geq k_{in}$,
        \item $w(Out_{\vec{T}_v}(v))\geq k_{out}$,
        \item $R(\vec{T}_v,w)\geq k_{reach}$,
        \item $w(Out_{\vec{T}_v}(v)\cap Out_{\vec{T_v}}(r))\leq k_{shared}$.
    \end{itemize}
    We say that the assignment of $z[q]$ is {\it correct} if this condition is satisfied. Further, if $z[q]=True$, we say that $\vec{T_v}$ {\it certifies} this assignment. Throughout this dynamic program, we show how to decide the value of $z[q]$ for $q \in Q$. While we do not explicitly construct the certifying orientations, it is not difficult to trace them through the program and we assume that they are available.
    
    The main technical part of our dynamic program is contained in the following result. \begin{claim}\label{asetdrzklö}
        Let $q=(v,b^*,k^*_{in},k^*_{out}, k^*_{reach},k^*_{shared}) \in Q_v$ for some $v \in V(G)$ and suppose that $z[q']$ has already been correctly computed for all $q' \in Q_{V(T_v)\setminus\{v\}}$. Then $z[q]$ can be computed in polynomial time.
    \end{claim}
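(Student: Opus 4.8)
The plan is to perform the standard combine\nobreakdash-the\nobreakdash-children transition of a rooted-tree dynamic program; the only real work is the bookkeeping forced by the extra vertex $r$ and by the fact that $R(\cdot,w)$ is not additive over subtrees.

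First I would dispose of the base case: if $v$ is a leaf of $T$, then $T_v$ is the single vertex $v$, has a unique orientation, and the five defining conditions of $z[q]$ are checked directly. Otherwise let $v_1,\dots,v_d$ be the children of $v$ in $T$, so that $T_{v_1},\dots,T_{v_d}$ are the maximal proper mixed subtrees of $T_v$, and note that an orientation $\vec T_v$ of $T_v$ is precisely the data of, for every $i\in[d]$, an orientation $o_i$ of the link $vv_i$ (forced if $vv_i\in A(G)$, a free binary choice if $vv_i\in E(G)$) together with an orientation $\vec T_{v_i}$ of $T_{v_i}$. Writing $\vec T_{v_i}$ for the orientation of $T_{v_i}$ inherited from $\vec T_v$, the following identities (valid because $UG(T_v)$ is a tree) drive the recursion:
\begin{align*}
Out_{\vec T_v}(v)&=\{v\}\cup\!\!\!\bigcup_{i:\,o_i=v\to v_i}\!\!\!Out_{\vec T_{v_i}}(v_i), & In_{\vec T_v}(v)&=\{v\}\cup\!\!\!\bigcup_{i:\,o_i=v_i\to v}\!\!\!In_{\vec T_{v_i}}(v_i),
\end{align*}
both unions being disjoint, and, by sorting the reachable ordered pairs of $\vec T_v$ according to which of $\{v\},V(T_{v_1}),\dots,V(T_{v_d})$ their two endpoints lie in,
\begin{align*}
R(\vec T_v,w)=w\bigl(In_{\vec T_v}(v)\bigr)\cdot w\bigl(Out_{\vec T_v}(v)\bigr)-w(v)+\sum_{i=1}^{d}R(\vec T_{v_i},w).
\end{align*}
Moreover $r\in V(T_v)$ iff $v$ is special; if $v=r$ then $v$ is reachable from $r$ in $\vec T_v$ (so $b=True$) and $Out_{\vec T_v}(r)=Out_{\vec T_v}(v)$; if $r$ lies in a unique child $v_j$'s subtree $T_{v_j}$ then $v$ is reachable from $r$ in $\vec T_v$ exactly when $o_j=v_j\to v$ and $r$ reaches $v_j$ in $\vec T_{v_j}$; and if $v$ is not special then $b=False$. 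From this one reads off $w\bigl(Out_{\vec T_v}(v)\cap Out_{\vec T_v}(r)\bigr)$ as $w(Out_{\vec T_v}(v))$ whenever $b=True$, as the shared quantity already stored for $v_j$ when $r\in V(T_{v_j})$ and $o_j=v\to v_j$, and as $0$ in all remaining cases.

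Given these identities, $z[q]$ for $q=(v,b^*,k^*_{in},k^*_{out},k^*_{reach},k^*_{shared})$ is computed by a secondary dynamic program that incorporates $v_1,\dots,v_d$ one at a time and maintains, as a boolean table, which tuples (partial value of $b$; accumulated In-weight; accumulated Out-weight; lower bound on the accumulated $\sum_i R(\vec T_{v_i},w)$; accumulated value of the shared quantity inside the child containing $r$) are attainable; since all entries are drawn from sets of size $O(n^2)$ and there are at most $n$ children, the table is of polynomial size and the whole computation runs in polynomial time. When the child $v_{t+1}$ is incorporated, one branches over its admissible link orientations $o_{t+1}$ and over the behaviour of $\vec T_{v_{t+1}}$ permitted by the already-computed values $z[q']$, $q'\in Q_{v_{t+1}}$; for an in-child that does not contain $r$ a one-sided (lower) bound on its In-weight and $R$-value is all that is consulted, an out-child requires control of its exact Out-contribution, and the child $v_j$ containing $r$ (if any) is treated separately so as to set $b$ and to pick up $v_j$'s own shared quantity. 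After all children have been incorporated one reconstructs $w(In_{\vec T_v}(v))$, $w(Out_{\vec T_v}(v))$ and $R(\vec T_v,w)$ from the final tuple via the displayed identities, evaluates $w\bigl(Out_{\vec T_v}(v)\cap Out_{\vec T_v}(r)\bigr)$ by the case formula above, and puts $z[q]=True$ precisely when some attainable final tuple yields $b=b^*$, In-weight $\ge k^*_{in}$, Out-weight $\ge k^*_{out}$, $R(\vec T_v,w)\ge k^*_{reach}$ and shared quantity $\le k^*_{shared}$. Correctness is a two-sided monotonicity argument: a genuine orientation of $T_v$ induces an attainable final tuple carrying its exact values, and conversely any attainable final tuple is realized by gluing certifying orientations of the $T_{v_i}$ along the chosen $o_i$, using that increasing any In-weight, Out-weight or subtree-$R$ can only increase $R(\vec T_v,w)$ and never spoils the gluing.

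The step I expect to be the main obstacle is the design and verification of this state, because two features of the problem resist a naive additive dynamic program. First, the cross-term $w(In_{\vec T_v}(v))\cdot w(Out_{\vec T_v}(v))$ in the formula for $R(\vec T_v,w)$ makes $R$ genuinely non-additive over children, which is why the In- and Out-weights must be accumulated rather than merely compared against the single targets and why $R$ is assembled only at the end. Second, and more delicate, when $b^*=True$ the quantity $w\bigl(Out_{\vec T_v}(v)\cap Out_{\vec T_v}(r)\bigr)$ equals $w(Out_{\vec T_v}(v))$, so the requirement that it be at most $k^*_{shared}$ becomes an \emph{upper} bound on the Out-weight; together with the lower bound $k^*_{out}$ this confines the Out-weight to an interval, and since the set of Out-weights realizable by the orientations of a subtree need not be an interval, one cannot get away with one-sided bounds there — the Out-weight (and the shared value inside the $r$-subtree) has to be tracked faithfully through the secondary dynamic program. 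Checking that the polynomially-bounded state nevertheless captures exactly what is needed — in particular that the interplay of this upper bound with the lower bounds on In-weight, Out-weight and $R(\vec T_v,w)$ is recorded — is where essentially all the content of \Cref{asetdrzklö} lies; everything else is a direct unfolding of the tree structure.
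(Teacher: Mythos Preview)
Your approach is essentially the same as the paper's: both run an inner dynamic program that absorbs the children of $v$ one at a time while maintaining a boolean table indexed by $(b,k_{in},k_{out},k_{reach},k_{shared})$ and consulting the already-computed values $z[q']$ at each child. The only difference is organizational --- the paper updates $R(\vec U_i,w)$ incrementally (adding $w(In_{\vec U_{i-1}}(v))\cdot w(Out_{\vec T_i}(x_i))$ or $w(Out_{\vec U_{i-1}}(v))\cdot w(In_{\vec T_i}(x_i))$ as each child $x_i$ is merged in), whereas you accumulate $\sum_i R(\vec T_{v_i},w)$ separately and reconstruct $R(\vec T_v,w)$ at the end via your closed-form identity $R(\vec T_v,w)=w(In_{\vec T_v}(v))\,w(Out_{\vec T_v}(v))-w(v)+\sum_i R(\vec T_{v_i},w)$ --- and your discussion of why the Out-weight must be controlled from both sides when $b^*=True$ is precisely the reason the paper carries both $k_{out}$ and $k_{shared}$ through the inner recursion.
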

    \begin{proof}

 We distinguish several cases depending on the properties of $v$.
 \setcounter{Case}{0}
\begin{Case}
    $v=r$.
\end{Case} We set $z[q]=True$ if all of the following conditions hold:

 \begin{itemize}
        \item $b^*=True$,
        \item $w(v)\geq k^*_{in}$,
        \item $w(v)\geq k^*_{out}$,
        \item $2{w(v) \choose 2}\geq k^*_{reach}$,
        \item $w(v)\leq k^*_{shared}$.
    \end{itemize}
Otherwise, we set $z[q]=False$.
    Observe that the only orientation of $T_v$ is $T_v$. It is hence easy to see that the assignment of $z[q]$ is correct. Further, the conditions can clearly be checked in polynomial time.
\begin{Case}
    $v$ is a leaf of $T$ distinct from $r$ and $s$.
\end{Case}
    We set $z[q]=True$ if all of the following conditions hold:

 \begin{itemize}
        \item $b^*=False$,
        \item $w(v)\geq k^*_{in}$,
        \item $w(v)\geq k^*_{out}$,
        \item $2{w(v) \choose 2}\geq k^*_{reach}$,
        \item $0\leq k^*_{shared}$.
    \end{itemize}
Otherwise, we set $z[q]=False$.
    Observe that the only orientation of $T_v$ is $T_v$. It is hence easy to see that the assignment of $z[q]$ is correct. Further, the conditions can clearly be checked in polynomial time.
\begin{Case}
    $v$ is not a leaf of $T$ or $v=s$.
\end{Case}
Let $x_1,\ldots,x_\mu$ be an ordering of the children of $v$ in $T$ such that, if $v$ is special, then $x_1$ is the unique special child of $v$ in $T$. 

In order to correctly assign $z[q]$, we need to run a second dynamic program. To this end, for $i \in [\mu]$, let $T_i=T_{x_i}$ and let $U_i=T[V(T_{1})\cup \ldots V(T_i)\cup \{v\}]$. Observe that $U_\mu=T_v$. 

We use $P$ for $[\mu]\times\{True, False\}\times [n]_0\times [n]_0 \times [n^2]_0\times [n]_0$. Further, for $i \in [\mu]$, we use $P_i$ for the elements of $P$ whose first entry is $i$. Now, for every $p=(i,b,k_{in},k_{out},k_{reach},k_{shared})$, we wish to compute a boolean value $y[p]$ such that $y[p]=True$ if and only if there exists an orientation $\vec{U}_i$ of $U_i$ that satisfies the following conditions:
    \begin{itemize}
        \item $b=True$ if and only if $v$ is reachable from $r$ in $\vec{U_i}$,
        \item $w(In_{\vec{U_i}}(v))\geq k_{in}$,
        \item $w(Out_{\vec{U_i}}(v))\geq k_{out}$,
        \item $R(\vec{U}_i,w)\geq k_{reach}$,
        \item $w(Out_{\vec{U_i}}(v)\cap Out_{\vec{U_i}}(r))\leq k_{shared}$.
    \end{itemize}

    We first show how to compute $y[p]$ for $p \in P_1$. We initialize $y[p]=False$. Consider some $q'=(x_1,b' , k'_{in}k'_{out},k'_{reach},k'_{shared})\in Q_{x_1}$.

    We first set $y[p]=True$ if all of the following conditions are satisfied:
    \begin{itemize}
    \item $z[q']=True$,
    \item $E(G)$ contains the edge $vx_1$ or $A(G)$ contains the arc $vx_1$,
    \item $b=False$,
        \item $w(v)\geq k_{in}$,
        \item $w(v)+k'_{out}\geq k_{out}$,
        \item $2{w(v)\choose 2}+w(v)k'_{out}+k'_{reach}\geq k_{reach}$,
        \item $k'_{shared}\leq k_{shared}$.
    \end{itemize}

    We further set $y[p]=True$ if all of the following conditions are satisfied:
    \begin{itemize}
    \item $z[q']=True$,
    \item $E(G)$ contains the edge $vx_1$ or $A(G)$ contains the arc $x_1v$,
    \item $b=b'$,
        \item $w(v)+k'_{in}\geq k_{in}$,
        \item $w(v)\geq k_{out}$,
        \item $2{w(v)\choose 2}+w(v)k'_{in}+k'_{reach}\geq k_{reach}$,
        \item $b=False$ or $w(v)\leq k_{shared}$.
    \end{itemize}

    We do this for every $q' \in Q_{x_1}$ and keep the assignment of $y[p]$ after the last one of these iterations.

    This is assignment is justified in the following way. Let $\vec{U}_1$ be an orientation of $U_1$ and $\vec{T}_1$ be the inherited orientation of $T_1$. If $vx_1\in A(\vec{U}_1)$, then all of the following hold:
    \begin{itemize}
        \item $v$ is not reachable from $r$ in $\vec{U}_1$,
        \item $w(In_{\vec{U}_1}(v))=w(v)$,
        \item $w(Out_{\vec{U}_1}(v))=w(v)+w(Out_{\vec{T}_1}(x_1))$,
        \item $R(\vec{U}_1,w)=R(\vec{T}_1,w)+w(v)w(Out_{\vec{T}_1}(x_1))+2{w(v)\choose 2}$,
        \item $w(Out_{\vec{U}_1}(v)\cap Out_{\vec{U}_1}(r))=w(Out_{\vec{T}_1}(x_1)\cap Out_{\vec{T}_1}(r))$.
    
    \end{itemize}

     If $x_1v\in A(\vec{U}_1)$, then all of the following hold:
    \begin{itemize}
        \item $v$ is reachable from $r$ in $\vec{U}_1$ if and only if  $x_1$ is reachable from $r$ in $\vec{T}_1$,
        \item $w(In_{\vec{U}_1}(v))=w(In_{\vec{T}_1}(x_1))+w(v)$,
        \item $w(Out_{\vec{U}_1}(v))=w(v)$,
        \item $R(\vec{U}_1,w)=R(\vec{T}_1,w)+w(v)w(In_{\vec{T}_1}(x_1))+2{w(v)\choose 2}$,
        \item $w(Out_{\vec{U}_1}(v)\cap Out_{\vec{U}_1}(r))=w(v)$ if $v$ is reachable from $r$ in $\vec{U}_1$, and $w(Out_{\vec{U}_1}(v)\cap Out_{\vec{U}_1}(r))=0$, otherwise.
    
    \end{itemize}
In order to compute $y[p]$, we need to make two checks for every $q'\in Q_{x_1}$ and each of these checks can clearly be carried out in polynomial time. As $|Q_{x_1}|=O(n^5)$ by definition, we obtain that $y[p]$ can be computed in polynomial time.

    In the following, we suppose that $p\in P_i$ for some $i \geq 2$. We may suppose that $y[p']$ has been correctly computed for all $p'\in P_{i-1}$.

     We initialize $y[p]=False$. Now consider some $p'=(i-1,b',k'_{in},k'_{out},k'_{reach},k'_{shared})\in P_{i-1}$ and some $q'=(x_i,b'',k''_{in},k''_{out},k''_{reach},k''_{shared})\in Q_{x_i}$.

    We first set $y[p]=True$ if all of the following conditions are satisfied:
    \begin{itemize}
    \item $y[p']=z[q']=True$,
    \item $E(G)$ contains the edge $vx_i$ or $A(G)$ contains the arc $vx_i$,
    \item $b=b'$,
          \item $k'_{in}\geq k_{in}$,
        \item $k'_{out}+k''_{out}\geq k_{out}$,
        \item $k'_{reach}+k'_{in}k''_{out}+k''_{reach}\geq k_{reach}$,
        \item  $k'_{shared}+bk''_{out}\leq k_{shared}$.
    \end{itemize}

 We next set $y[p]=True$ if all of the following conditions are satisfied:
    \begin{itemize}
    \item $y[p']=z[q']=True$,
    \item $E(G)$ contains the edge $x_iv$ or $A(G)$ contains the arc $x_iv$,
    \item $b=b'$,
          \item $k'_{in}+k''_{in}\geq k_{in}$,
        \item $k'_{out}\geq k_{out}$,
        \item $k'_{reach}+k'_{out}k''_{in}+k''_{reach}\geq k_{reach}$,
        \item  $k'_{shared}\leq k_{shared}$.
    \end{itemize}
 We do this for every $p'\in P_{i-1}$ and $q' \in Q_{x_i}$ and keep the assignment of $y[p]$ after the last one of these iterations.

    This assignment is justified in the following way. Let $\vec{U}_i$ be an orientation of $U_i$, let $\vec{U}_{i-1}$ be the inherited orientation of $U_{i-1}$, and let $\vec{T}_i$ be the inherited orientation of $T_i$. First observe that, as no vertex in $\{x_2,\ldots,x_\mu\}$ is special, we have that $v$ is reachable from $r$ in $\vec{U}_i$ if and only if $v$ is reachable from $r$ in $\vec{U}_{i-1}$. 

Next, if $vx_i\in A(\vec{U}_i)$, then all of the following hold:
    \begin{itemize}
       
        \item $w(In_{\vec{U}_i}(v))=w(In_{\vec{U}_{i-1}}(v))$,
        \item $w(Out_{\vec{U}_i}(v))=w(Out_{\vec{U}_{i-1}}(v))+w(Out_{\vec{T}_i}(x_i))$,
        \item $R(\vec{U}_i,w)=R(\vec{U}_{i-1},w)+w(In_{\vec{U}_{i-1}}(v))w(Out_{\vec{T}_i}(x_i))+R(\vec{T}_i,w)$,
        \item $w(Out_{\vec{U}_i}(v)\cap Out_{\vec{U}_i}(r))=w(Out_{\vec{U}_{i-1}}(v)\cap Out_{\vec{U}_{i-1}}(r))+w(Out_{\vec{T}_i}(x_i))$ if $v$ is reachble from $r$ in $\vec{U}_i$, and $w(Out_{\vec{U}_i}(v)\cap Out_{\vec{U}_i}(r))=w(Out_{\vec{U}_{i-1}}(v)\cap Out_{\vec{U}_{i-1}}(r))$, otherwise.
    
    \end{itemize}

Further, if $x_iv\in A(\vec{U}_i)$, then all of the following hold:
    \begin{itemize}
       
        \item $w(In_{\vec{U}_i}(v))=w(In_{\vec{U}_{i-1}}(v))+w(In_{\vec{T}_{i}}(x_i))$,
        \item $w(Out_{\vec{U}_i}(v))=w(Out_{\vec{U}_{i-1}}(v))$,
        \item $R(\vec{U}_i,w)=R(\vec{U}_{i-1},w)+w(Out_{\vec{U}_{i-1}}(v))w(In_{\vec{T}_i}(x_i))+R(\vec{T}_i,w)$,
        \item $w(Out_{\vec{U}_i}(v)\cap Out_{\vec{U}_i}(r))=w(Out_{\vec{U}_{i-1}}(v)\cap Out_{\vec{U}_{i-1}}(r))$.   
    \end{itemize}

In order to compute $y[p]$, we need to make two checks for every $p' \in P_{i-1}$ and every $q'\in Q_{x_1}$. Clearly,  each of these checks can be carried out in polynomial time. As $|P_{i-1}|=O(n^5)$ and $|Q_{x_1}|=O(n^5)$ by definition, we obtain that $y[p]$ can be computed in polynomial time. 

We now compute $y[p]$ for every $p \in P$. As $|P|=O(n^6)$ by definition and $y[p]$ can be computed in polynomial time for every $p \in P$, we can make $y[p]$ available for all $p \in P$ in polynomial time. Finally, we set $z[q]=y[\mu,b^*,k^*_{in},k^*_{out},k^*_{reach},k^*_{shared}]$. This assignment is correct by definition. This finishes the proof of the claim.
      \end{proof}

We now compute $z[q]$ for every $q \in Q$. As $|Q|=O(n^6)$ by construction and $z[q]$ can be computed in polynomial time for every $q \in Q$ by \Cref{asetdrzklö}, we can compute all these values in polynomial time. In particular, we may suppose that $z[q]$ is available for all $q \in Q_s$.

 We now compute the largest $K$ for which there exists some $q=(s,b,k_{in},k_{out},k_{reach},k_{shared})\in Q_s$ with $z[q]=True$ and $k_{reach}+w(r)(k_{out}-k_{shared})\geq K$. This is justified by the fact that for any orientation $\vec{G}$ of $G$, due to the choice of $r$ and $s$, we have 
\begin{align*}
R(\vec{G},w)&=R(\vec{G}\setminus \{rs\},w)+w(r)w(Out_{\vec{G}\setminus \{rs\}}(s)\setminus Out_{\vec{G}\setminus \{rs\}}(r))\\&=R(\vec{G}\setminus \{rs\},w)+w(r)(w(Out_{\vec{G}\setminus \{rs\}}(s))-w(Out_{\vec{G}\setminus \{rs\}}(s)\cap Out_{\vec{G}\setminus \{rs\}}(r))) .
   \end{align*} 
We further use the fact that $R(\vec{G},w)\leq n^2$ holds by definition for all orientations $\vec{G}$ of $G$. 
Clearly, these checks can be carried out in polynomial time as $|Q_s|=O(n^5)$. Moreover, an orientation certifying $z[q]$ for the corresponding $q$ has the desired properties.
\end{proof}
\subsection{Main lemmas}\label{secml}
This section is dedicated to proving \Cref{rdztghok} and its analogue taking into account approximate solutions that we need for the proof of \Cref{main3}. In order to state this second result, we need one more definition.  

Given some $\epsilon>0$, an instance $(G,w)$ of WAMMRO and an undirected component $T$ of $G$, we say that a set $\mathcal{T}$ of orientations of $\mathcal{T}$ is an {\it $\epsilon$-optimal replacement set} for $((G,w),T)$ if for every orientation $\vec{G}$ of $G$, there exists some $\vec{T} \in \mathcal{T}$ such that $R(\vec{G}\langle\vec{T}\rangle,w)\geq R(\vec{G},w)-\epsilon |w|^2$.
We are now ready to state this result.
\begin{restatable}
    {lemma}{rdztghgsfok}\label{rdztghgsfok}
    Let $\epsilon>0$, let $(G,w)$ be a dismembered instance of WAMMRO and let $T$ be an undirected component of $G$. Then, in polynomial time, we can compute an $\epsilon$-optimal replacement set $\mathcal{T}$ for $((G,w),T)$ of size $f(\epsilon)$. 
\end{restatable}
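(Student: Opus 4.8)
The plan is to imitate the derivation of \Cref{rdztghok} from \Cref{rdtfguh} and \Cref{tfzghik}, but to enumerate only a coarse, $\epsilon$-dependent family of weight functions on the simulating mixed graphs. First, if $|V(T)\cap V(A(G))|=0$, then no arc of $G$ is incident to $V(T)$ and no edge of $G$ leaves $V(T)$, so $T$ is a connected component of $G$ and the orientation of $E(T)$ only influences reachabilities inside $V(T)$; in this case it suffices to output a single optimal orientation of $(T,w|V(T))$, which can be computed in polynomial time (e.g.\ by attaching a weight-$0$ vertex to two vertices of $T$ by two out-arcs to obtain an arboresque instance, to which \Cref{tfzghik} applies, or directly when $T$ has at most one edge). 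Hence assume $|V(T)\cap V(A(G))|\geq 1$, so that by \Cref{rdtfguh} we can compute in polynomial time an arboresque $3$-simulation set $\mathcal{U}$ for $((G,w),T)$ with $|\mathcal{U}|\leq 2$.

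Recall that \Cref{rdztghok} is obtained from such a $\mathcal{U}$ as follows: for every $U\in\mathcal{U}$ and every weight function $w'\colon V(U)\to[n]_0$ agreeing with $w$ on $V(T)$, one computes an optimal orientation of $(U,w')$ via \Cref{tfzghik} (legitimate since $U$ is arboresque) and puts the orientation of $T$ it induces into $\mathcal{T}$; since $|V(U)\setminus V(T)|\leq 3$ there are $O(n^{3})$ such $w'$, and the defining property of simulation sets makes $\mathcal{T}$ an optimal replacement set. For the present lemma I would only change which weight functions are enumerated: fix a set $S_\epsilon\subseteq[n]_0$ of integers of size $O(1/\epsilon)$ that is $\eta$-dense in $\{0,1,\dots,|w|\}$, where $\eta:=\max\{1,\lfloor c\epsilon|w|\rfloor\}$ for a suitable absolute constant $c$ (so that when $|w|$ is small one simply takes $\eta=1$ and $S_\epsilon=\{0,\dots,|w|\}$). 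One then runs \Cref{tfzghik} only on the instances $(U,w')$ with $w'$ agreeing with $w$ on $V(T)$ and sending each vertex of $V(U)\setminus V(T)$ into $S_\epsilon$; this produces a set $\mathcal{T}$ of orientations of $T$ with $|\mathcal{T}|\leq 2|S_\epsilon|^{3}=O(\epsilon^{-3})=f(\epsilon)$ in total time $f(\epsilon)n^{O(1)}$, which is polynomial for fixed $\epsilon$.

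To see that $\mathcal{T}$ is an $\epsilon$-optimal replacement set, fix an orientation $\vec{G}$ of $G$ and let $U\in\mathcal{U}$ and $w'\colon V(U)\to[n]_0$ be as provided by the simulation-set property; note that $w'$ agrees with $w$ on $V(T)$, that $w'(z)\leq|w'|\leq|w|$ for each $z\in V(U)\setminus V(T)$, and that $\vec{U}\langle\vec{G}\rangle$ is an orientation of $U$ for every orientation $\vec{U}$ of $U$. Let $w''$ be the enumerated weight function that agrees with $w$ on $V(T)$ and rounds each $w'(z)$ to a nearest element of $S_\epsilon$, so $|w'(z)-w''(z)|\leq\eta$ for all $z\in V(U)\setminus V(T)$; let $\vec{U}^{\ast}$ be the optimal orientation of $(U,w'')$ we computed and $\vec{T}^{\ast}\in\mathcal{T}$ the orientation of $T$ it induces, so that $\vec{G}\langle\vec{T}^{\ast}\rangle=\vec{G}\langle\vec{U}^{\ast}\rangle$. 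Combining the inequality from the definition of simulation sets, the optimality of $\vec{U}^{\ast}$ for $(U,w'')$ applied to the orientation $\vec{U}^{\ast}\langle\vec{G}\rangle$ of $U$, and a perturbation estimate $|R(\vec{D},w')-R(\vec{D},w'')|\leq C\eta|w|$ valid for every digraph $\vec{D}$ on $V(U)$, one obtains
\begin{align*}
R(\vec{G}\langle\vec{T}^{\ast}\rangle,w)-R(\vec{G},w)&\geq R(\vec{U}^{\ast},w')-R(\vec{U}^{\ast}\langle\vec{G}\rangle,w')\\
&\geq R(\vec{U}^{\ast},w'')-R(\vec{U}^{\ast}\langle\vec{G}\rangle,w'')-2C\eta|w|\\
&\geq -2C\eta|w|\geq -\epsilon|w|^{2},
\end{align*}
where the last step uses the choice of $\eta$ (and incurs no loss at all when $\eta=1$, since then $w''=w'$). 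This is exactly what is required.

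The chaining in the displayed inequality is essentially identical to the argument underlying \Cref{rdztghok}, so the only genuinely new ingredient is the perturbation estimate, which I regard as the main (though routine) obstacle. I would prove $|R(\vec{D},w')-R(\vec{D},w'')|\leq C\eta|w|$ by treating separately the contribution of the at most three terms $2{w'(z)\choose 2}$ with $z\in V(U)\setminus V(T)$ (each changes by $O(\eta|w|)$, since all these weights are $O(|w|)$), the products $w'(u)w'(v)$ with both $u,v\in V(U)\setminus V(T)$ (only $O(1)$ pairs, each changing by $O(\eta|w|)$), and the products with exactly one endpoint $z\in V(U)\setminus V(T)$ (for fixed $z$ these sum to at most $\eta\,w(V(T))\leq\eta|w|$, and there are $O(1)$ choices of $z$). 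Pinning down the constant $C$ is what fixes the admissible value of $c$ above, and hence the size bound $f(\epsilon)=O(\epsilon^{-3})$ on $\mathcal{T}$.
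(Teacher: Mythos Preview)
Your proposal is correct and follows essentially the same route as the paper: use \Cref{rdtfguh} to obtain the arboresque $3$-simulation set, enumerate an $O(\epsilon^{-1})$-sized grid of weights on the at most three new vertices, solve each resulting instance via \Cref{tfzghik}, and chain the simulation-set inequality with a perturbation bound on $R$ to obtain the $\epsilon|w|^{2}$ loss. The only cosmetic differences are that the paper packages the grid construction and the perturbation estimate into standalone auxiliary statements (\Cref{drzftugziuhi} and \Cref{tcvzbun}, the latter giving $R(D,w)-R(D,w')\leq 2|w+w'||w-w'|$) rather than arguing them inline, and does not spell out the trivial case $|V(T)\cap V(A(G))|=0$ that you handle explicitly.
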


We now give the proofs for \Cref{rdztghok}
and \Cref{rdztghgsfok}. We start by proving \Cref{rdztghok}, which we restate here for convenience.
\rdztghok*
\begin{proof}
We initialize  $\mathcal{T}=\emptyset$.
    By \Cref{rdtfguh}, in polynomial time, we can compute a 3-simulation set $\mathcal{U}$ for $((G,w),T)$ of size at most 2. Now consider some $U \in \mathcal{U}$ and let $w':V(U)\rightarrow [n]_0$ be a weight function with $|w'|\leq |w|$ that is consistent with $w$. As $U \in \mathcal{U}$ is arboresque, by \Cref{tfzghik}, in polynomial time, we can compute an orientation $\vec{U}_{w'}$ of $U$ that maximizes $R(\vec{U}_{w'},w')$. Let $\vec{T}_{U,w'}$ be the inherited orientation of $T$. We now add $\vec{T}_{U,w'}$ to $\mathcal{T}$. We do this for every $U \in \mathcal{U}$ and every weight function $w':V(U)\rightarrow [n]_0$ with $|w'|\leq |w|$ that is consistent with $w$. The final assignment of $\mathcal{T}$ is obtained after the last one of these operations.
    
    Observe that after the last of these operations, we have that $|\mathcal{T}|=O(n^3)$ as $ |\mathcal{U}|\leq 2$ and there are $O(n^3)$ choices for $w'$ as $w'$ needs to be consistent with $w$ and $|w'|\leq |w|$ needs to be satisfied. Moreover, as $\vec{T}_{U,w'}$ can be computed in polynomial time for every $U \in \mathcal{U}$ and every considered function $w'$,  $|\mathcal{U}|\leq 2$, and there are only $O(n^3)$ choices for $w'$, we obtain that $\mathcal{T}$ can be computed in polynomial time.

   We now show that $\mathcal{T}$ is an optimal replacement set. Consider an orientation $\vec{G}$ of $G$. As $\mathcal{U}$ is a simulation set for $((G,w),T)$, there exists some $U \in \mathcal{U}$ and a function $w':V(U)\rightarrow [n]_0$ that is consistent with $w$ with $|w'|\leq |w|$ such that for every orientation $\vec{U}$ for $(U,w')$, we have that $\vec{G}\langle \vec{U}\rangle$ is an orientation of $G$, $\vec{U}\langle \vec{G}\rangle$ is an orientation of $U$ and $R(\vec{G}\langle \vec{U}\rangle,w)-R(\vec{G},w) \geq R(\vec{U},w')-R(\vec{U}\langle \vec{G}\rangle,w')$. By the optimality of $\vec{U}_{w'}$, we obtain $R(\vec{G}\langle \vec{T}_{U,w'}\rangle,w)=R(\vec{G}\langle \vec{U}_{w'}\rangle,w)\geq R(\vec{G},w)+R(\vec{U}_{w'},w')-R(\vec{U}_{w'}\langle \vec{G}\rangle,w')\geq R(\vec{G},w)$. As $\vec{T}_{U,w'}\in \mathcal{T}$ by construction, the statement follows.
\end{proof}
We still need to prove \Cref{rdztghgsfok}. The idea is, in comparison to the the proof of \Cref{rdztghok}, not to try all possible weight functions, but only a small set of weight functions such that every possible weight function is close to one in the set. In order to make this work, we need two preliminary results. The first one shows that we can indeed choose a small set of functions that is sufficiently close to all possible functions.
 \begin{proposition}\label{drzftugziuhi}
     Let $\epsilon>0,X$ a finite set and $\omega$ a positive integer. Then, in time $f(\epsilon,|X|)\omega^{O(1)}$, we can compute a set $\mathcal{C}$ of $f(\epsilon,|X|)$ mappings $c:X\rightarrow [\omega]_0$ such that for every mapping $c':X\rightarrow [\omega]_0$, there exists some $c \in \mathcal{C}$ with $|c-c'|\leq \epsilon \omega$.
 \end{proposition}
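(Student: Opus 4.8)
The plan is to prove the statement by a straightforward grid-rounding (covering) argument: we discretise $[\omega]_0$ to a coarse grid whose width is proportional to $\epsilon\omega/|X|$, take $\mathcal{C}$ to consist of all mappings into this grid, and observe that rounding any mapping coordinatewise moves it by at most $\epsilon\omega$ in total. Here, consistently with the paper's convention that $|w|=w(V(G))$, we write $|c-c'|$ for $\sum_{x\in X}|c(x)-c'(x)|$.

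First I would fix the grid width $g:=\max\{1,\lfloor\epsilon\omega/|X|\rfloor\}$ and set $R:=\{\,jg: j\in\mathbb{Z}_{\geq 0},\ jg\leq\omega\,\}\subseteq[\omega]_0$, the multiples of $g$ not exceeding $\omega$. A short case analysis (according to whether $\epsilon\omega/|X|$ is at least $2$, lies in $[1,2)$, or is less than $1$) shows that $|R|=\lfloor\omega/g\rfloor+1\leq 2|X|/\epsilon+1$. I then let $\mathcal{C}$ consist of all mappings $c:X\to R$; this is a set of mappings $X\to[\omega]_0$ of size $|R|^{|X|}\leq(2|X|/\epsilon+1)^{|X|}=:f(\epsilon,|X|)$, and it can be enumerated in time $f(\epsilon,|X|)\,\omega^{O(1)}$, since each of the $f(\epsilon,|X|)$ mappings is specified by $|X|$ values, each of bit-size $O(\log\omega)$.

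Next I would verify the covering property. Given an arbitrary $c':X\to[\omega]_0$, I define $c\in\mathcal{C}$ coordinatewise: for each $x\in X$, if $c'(x)>\lfloor\omega/g\rfloor g$ then set $c(x):=\lfloor\omega/g\rfloor g$ (the largest element of $R$), and otherwise let $c(x)$ be an element of $R$ nearest to $c'(x)$. In the first case $|c(x)-c'(x)|\leq\omega-\lfloor\omega/g\rfloor g<g$, and in the second case $|c(x)-c'(x)|\leq g/2<g$; hence $|c(x)-c'(x)|<g$ for every $x\in X$, and summing yields $|c-c'|<|X|\cdot g$. To finish, either $\lfloor\epsilon\omega/|X|\rfloor\geq 1$, so that $g=\lfloor\epsilon\omega/|X|\rfloor\leq\epsilon\omega/|X|$ and therefore $|X|\cdot g\leq\epsilon\omega$; or $\lfloor\epsilon\omega/|X|\rfloor=0$, which forces $\epsilon\omega<|X|$ and $g=1$, so that $R=\{0,1,\ldots,\omega\}=[\omega]_0$, the nearest element of $R$ to $c'(x)$ is $c'(x)$ itself, and hence $c=c'$ with $|c-c'|=0\leq\epsilon\omega$. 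In all cases $|c-c'|\leq\epsilon\omega$, as required.

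There is no substantial obstacle here; the argument is a textbook net construction. The one thing to be careful about is the degenerate regime $\epsilon\omega<|X|$, where the target distance $\epsilon\omega$ is too small to permit any rounding at all — but there $\omega$ is bounded in terms of $\epsilon$ and $|X|$, so taking $\mathcal{C}$ to be all mappings $X\to[\omega]_0$ is affordable, and the uniform choice $g=\max\{1,\lfloor\epsilon\omega/|X|\rfloor\}$ folds this special case into the same formula. A second minor fussy point is the top of the grid: since $\omega$ need not be divisible by $g$, coordinates of $c'$ lying in $(\lfloor\omega/g\rfloor g,\omega]$ must be rounded down to $\lfloor\omega/g\rfloor g$ (with error $<g$) rather than to the nearest multiple of $g$, which is why I treat that interval separately above.
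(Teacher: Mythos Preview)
Your proof is correct and takes essentially the same approach as the paper: both build a coarse grid in $[\omega]_0$ with spacing proportional to $\epsilon\omega/|X|$, take $\mathcal{C}$ to be all mappings into that grid, and round coordinatewise. The paper uses the grid $\{\lceil i\epsilon\omega/|X|\rceil:0\le i\le k\}$ while you use multiples of $g=\max\{1,\lfloor\epsilon\omega/|X|\rfloor\}$, and you handle the degenerate regime $\epsilon\omega<|X|$ explicitly, but these are cosmetic differences within the same standard net construction.
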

 \begin{proof}
     Let $k$ be the largest integer such that $k \epsilon \leq |X|$. Now we construct the set $S \subseteq [\omega]_0$ that contains for every $i \in \{0,\ldots,k\}$ the integer $\lceil \frac{i \epsilon \omega}{|X|} \rceil$. Now for every $z \in S^X$, we let $\mathcal{C}$ contain the mapping $c_z:X \rightarrow [\omega]_0$ where $c_z(x)=z_x$ for all $x \in X$. 

     We now show that $\mathcal{C}$ has the desired properties. First, it is easy to see that $\mathcal{C}$ can be computed in time $f(\epsilon,|X|)\omega^{O(1)}$. Next, observe that $k\leq \frac{|X|}{\epsilon}$ and hence $|\mathcal{C}|=|S^X|\leq (k+1)^{|X|}=f(\epsilon,|X|)$.

     Finally, consider some mapping $c':X \rightarrow [\omega]_0$. For every $x \in X$, let $i_x$ be the largest integer such that $\frac{i_x \epsilon \omega}{|X|} \leq c'(x)$ and let $c:X \rightarrow [\omega]_0$ be defined by $c(x)=\lceil \frac{i_x \epsilon \omega}{|X|} \rceil$ for all $x \in X$. It follows by construction that $c \in \mathcal{C}$. 

     As $c'(x)$ is an integer for all $x \in X$ and by construction, we obtain
     \begin{align*}
         |c-c'|&=\sum_{x \in X}|c'(x)-c(x)|\\
         &=\sum_{x \in X}c'(x)-c(x)\\
         &\leq \sum_{x \in X}\frac{(i_x+1)\epsilon \omega}{|X|}-\frac{i_x\epsilon \omega}{|X|}\\
         &\leq \sum_{x \in X} \frac{\epsilon \omega}{|X|}\\
         &=\epsilon \omega.
     \end{align*}
 \end{proof}
 The next result is needed to show that slightly altering the weight function of an instance $(G,w)$ of WAMMRO indeed only has a small effect on the quality of any particular orientation of $G$.
 \begin{lemma}\label{tcvzbun}
     Let $D$ be a digraph and $w,w':V(D)\rightarrow [\omega]_0$ be functions for some positive integer $\omega$. Then $R(D,w)-R(D,w')\leq 2|w+w'||w-w'|$.
 \end{lemma}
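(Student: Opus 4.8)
The plan is to expand both $R(D,w)$ and $R(D,w')$ according to their definition and bound the difference term by term. Recall $R(D,w) = 2\sum_{v\in V(D)}\binom{w(v)}{2} + \sum_{(u,v)\in P(D)} w(u)w(v)\kappa_D(u,v)$, so
\[
R(D,w) = \sum_{v\in V(D)} w(v)(w(v)-1) + \sum_{(u,v)\in P(D)} w(u)w(v)\kappa_D(u,v).
\]
The first sum can be folded into the second by observing that $\sum_v w(v)(w(v)-1) + \sum_{(u,v)\in P(D)} w(u)w(v)\kappa_D(u,v) = \sum_{(u,v)\in P(D)\cup \Delta} w(u)w(v)\kappa_D(u,v)$ where $\Delta$ denotes the diagonal pairs $(v,v)$ counted appropriately; more simply, I would just treat the two pieces separately. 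For the diagonal piece, $\left|w(v)(w(v)-1) - w'(v)(w'(v)-1)\right| = |w(v)-w'(v)|\cdot|w(v)+w'(v)-1| \le |w(v)+w'(v)|\cdot|w(v)-w'(v)|$, and summing over $v$ gives at most $|w+w'|_\infty$-free bound — but to get the clean product form I would instead sum and use that $\sum_v (w(v)+w'(v))\cdot|w(v)-w'(v)| \le |w+w'|\cdot|w-w'|$ is false in general for sums of products, so I need to be careful: actually $\sum_v a_v b_v \le (\sum_v a_v)(\sum_v b_v) = |w+w'|\,|w-w'|$ holds because all $a_v,b_v\ge 0$. So the diagonal contribution to $|R(D,w)-R(D,w')|$ is at most $|w+w'||w-w'|$.

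For the off-diagonal piece, I write $w(u)w(v) - w'(u)w'(v) = w(u)(w(v)-w'(v)) + w'(v)(w(u)-w'(u))$, so since $0\le\kappa_D(u,v)\le 1$,
\[
\Bigl|\sum_{(u,v)} (w(u)w(v)-w'(u)w'(v))\kappa_D(u,v)\Bigr| \le \sum_{(u,v)} w(u)|w(v)-w'(v)| + \sum_{(u,v)} w'(v)|w(u)-w'(u)|.
\]
Dropping the restriction to ordered pairs of distinct vertices only increases the right-hand side, so the first double sum is at most $\bigl(\sum_u w(u)\bigr)\bigl(\sum_v |w(v)-w'(v)|\bigr) = |w|\,|w-w'|$ and the second is at most $|w'|\,|w-w'|$. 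Adding, the off-diagonal contribution is at most $(|w|+|w'|)|w-w'| = |w+w'|\,|w-w'|$.

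Combining the two pieces gives $R(D,w)-R(D,w') \le |R(D,w)-R(D,w')| \le 2|w+w'|\,|w-w'|$, as claimed. I do not anticipate any real obstacle here; the only mild subtlety is making sure the symmetric split $w(u)w(v)-w'(u)w'(v) = w(u)(w(v)-w'(v)) + w'(v)(w(u)-w'(u))$ (note the asymmetric choice of $w$ on one factor and $w'$ on the other) is used so that the two resulting sums factor cleanly into $|w|\,|w-w'|$ and $|w'|\,|w-w'|$ respectively, rather than producing cross terms. Everything else is just nonnegativity of the weights and the bound $\kappa_D \le 1$.
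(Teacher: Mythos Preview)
Your proof is correct and follows essentially the same strategy as the paper: split $R(D,w)-R(D,w')$ into the diagonal $\binom{w(v)}{2}$-contribution and the off-diagonal $\kappa_D$-weighted sum, and bound each piece by $|w+w'|\,|w-w'|$ using nonnegativity. The paper routes through $w^+=\max(w,w')$ and $w^-=\min(w,w')$ instead of your direct telescoping identity $w(u)w(v)-w'(u)w'(v)=w(u)(w(v)-w'(v))+w'(v)(w(u)-w'(u))$, but the underlying argument is the same (and your version is arguably a little cleaner).
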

 \begin{proof}
 Throughout the proof, we use the mappings $w^+,w^-:V(D)\rightarrow [\omega]_0$ defined by $w^+(u)=\max\{w(u),w'(u)\}$ and $w^-(u)=\min\{w(u),w'(u)\}$ for all $u \in V(D)$. Observe that $|w^++w^-|=|w+w'|$ and $|w^+-w^-|=|w-w'|$. 
     As $w(v)\geq 0,w'(v)\geq 0$ and $w^+(v)-w^-(v)\geq 0$ hold for all $v \in V(D)$, we next have    \begin{align*}
         &\sum_{v \in V(D)}{w(v)\choose 2}-\sum_{v \in V(D)}{w'(v)\choose 2}\\=&
         \frac{1}{2}\sum_{v \in V(D)}w(v)(w(v)-1)-w'(v)(w'(v)-1)\\
         \leq &  \frac{1}{2}\sum_{v \in V(D)}w^+(v)(w^+(v)-1)-w^-(v)(w^-(v)-1)\\\leq &   \frac{1}{2}\sum_{v \in V(D)}w^+(v)w^+(v)-w^-(v)w^-(v)
         \\ \leq &   \frac{1}{2}\sum_{v \in V(D)}w^+(v)(w^+(v)-w^-(v))+\frac{1}{2}\sum_{v \in V(D)}w^-(v)(w^+(v)-w^-(v))\\
         \leq & \frac{1}{2}\sum_{v \in V(D)}w^+(v)\sum_{v \in V(D)}(w^+(v)-w^-(v))+\frac{1}{2}\sum_{v \in V(D)}w^-(v)\sum_{v \in V(D)}(w^+(v)-w^-(v))\\
         \leq & \frac{1}{2}|w^+||w^+-w^-|+\frac{1}{2}|w^-||w^+-w^-|\\
         \leq & |w+w'||w-w'|.
     \end{align*}
     We further have 
     \begin{align*}
        & \sum_{(u,v) \in P(D)}\kappa_D(u,v)w(u)w(v)- \sum_{(u,v) \in P(D)}\kappa_D(u,v)w'(u)w'(v)\\=&\sum_{(u,v) \in P(D)}\kappa_D(u,v)(w(u)w(v)-w'(u)w'(v))\\
         \leq &\sum_{(u,v) \in P(D)}\kappa_D(u,v)(w^+(u)w^+(v)-w^-(u)w^-(v))\\
         \leq &\sum_{(u,v) \in P(D)}w^+(u)w^+(v)-w^-(u)w^-(v)\\
         =& \sum_{(u,v) \in P(D)}w^+(u)(w^+(v)-w^-(v))+\sum_{(u,v) \in P(D)}w^-(v)(w^+(u)-w^-(u))\\
         \leq& \sum_{u\in V(D)}w^+(u)\sum_{v\in V(D)-u}(w^+(v)-w^-(v))+\sum_{v \in V(D)}w^-(v)\sum_{u\in V(D)\setminus \{v\}}(w^+(u)-w^-(u))\\
         \leq& |w^+||w^+-w^-|+|w^-||w^+-w^-|\\
        = &  |w+w'||w-w'|.
     \end{align*}
     By the definition of $R$, this yields 
     \begin{align*}
         &R(D,w)-R(D,w')\\=&2\sum_{v \in V(D)}{w(v)\choose 2}+ \sum_{(u,v) \in P(D)}\kappa_D(u,v)w(u)w(v)\\&-(2\sum_{v \in V(D)}{w'(v)\choose 2}+ \sum_{(u,v) \in P(D)}\kappa_D(u,v)w'(u)w'(v))\\
         \leq& 2 |w+w'||w-w'|.
     \end{align*}
 \end{proof}
We are now ready to conclude \Cref{rdztghgsfok}, which we restate here for convenience.
\rdztghgsfok*
\begin{proof}
    By \Cref{rdtfguh}, in polynomial time, we can compute an arboresque 3-simulation set of size at most 2 for $((G,w),T)$. Now consider some $U \in \mathcal{U}$ and let $X=V(U)\setminus V(T)$. Observe that $|X|\leq 3$ as $U$ is a mixed 3-supergraph of $T$. Let $\omega=|w|$. By \Cref{drzftugziuhi}, in time $f(\epsilon)\omega^{O(1)}$, we can compute a  set $\mathcal{C}_U$ of $f(\epsilon)$ mappings $c:X\rightarrow [\omega]_0$ such that for any mapping $c':X\rightarrow [\omega]_0$, there exists come $c \in \mathcal{C}_U$ with $|c-c'|\leq \frac{1}{12}\epsilon \omega$. For some $c \in \mathcal{C}_U$, let $w_c:V(U)\rightarrow [\omega]_0$ be the unique function that is consistent with $w$ and $c$. As $U \in \mathcal{U}$ is arboresque, by \Cref{tfzghik}, in polynomial time, we can compute an orientation $\vec{U}_{c}$ of $U$ that maximizes $R(\vec{U}_{c},w_c)$. Let $\vec{T}_{c}$ be the orientation of $T$ inherited from $\vec{U}_{c}$. We now add $\vec{T}_{c}$ to $\mathcal{T}$. We do this for every $U \in \mathcal{U}$ and every weight function $c \in \mathcal{C}_U$. The final assignment of $\mathcal{T}$ is obtained after the last one of these operations. We will show that $\mathcal{T}$ has the desired properties.
    
    Observe that after the last of these operations, we have that $|\mathcal{T}|=f(\epsilon)$ as $ |\mathcal{U}|\leq 2$ and $|\mathcal{C}_U|=f(\epsilon)$ for all $U \in \mathcal{U}$. Moreover, as $\vec{T}_{c}$ can be computed in polynomial time for every $U \in \mathcal{U}$ and $c \in \mathcal{C}_U$, $ |\mathcal{U}|\leq 2$ and $|\mathcal{C}_U|=O(n^3)$ for all $U \in \mathcal{U}$, we obtain that $\mathcal{T}$ can be computed in polynomial time.

   It remains to show that $\mathcal{T}$ is an $\epsilon$-optimal replacement set. Consider an orientation $\vec{G}$ of $G$. As $\mathcal{U}$ is a simulation set for $((G,w),T)$, there exists some $U \in \mathcal{U}$ and a mapping $w':V(U)\rightarrow [n]_0$ with $|w'|\leq |w|$ that is consistent with $w$ such that for every orientation $\vec{U}$ of $U$, we have that $\vec{G}\langle\vec{U}\rangle$ is an orientation of $G$, $\vec{U}\langle\vec{G}\rangle$ is an orientation of $U$, and $R(\vec{G}\langle\vec{U}\rangle,w)- R(\vec{G},w)\geq R(\vec{U},w')- R(\vec{U}\langle\vec{G}\rangle,w')$ holds. Let $c'=w'|X$. By the definition of $\mathcal{C}_U$, there exists some $c \in \mathcal{C}_U$ with $|c-c'|\leq \epsilon \omega$. Observe that $|w_c+w'|\leq 2|w'|+|w_c-w'|=2|w'|+|c-c'|\leq (2+\epsilon)\omega\leq 3\omega $.
    As $\vec{U}_{c}-X=\vec{T}_{c}$, by  the definition of $w'$, by \Cref{tcvzbun}, by the optimality of $\vec{U}_{c}$ as $|w_c-w'|=|c-c'|$, and by the choice of $c$, we obtain
    \begin{align*}
    R(\vec{G},w)-R(\vec{G}\langle\vec{T}_{c}\rangle,w)
    &=R(\vec{G},w)-R(\vec{G}\langle\vec{U}_{c}\rangle,w)\\
    &\geq R(\vec{U}_{c}\langle \vec{G}\rangle,w')-R(\vec{U}_{c},w')\\
     &=R(\vec{U}_{c}\langle \vec{G}\rangle,w')-R(\vec{U}_{c}\langle \vec{G}\rangle,w_c)+R(\vec{U}_{c}\langle \vec{G}\rangle,w_c)\\&-R(\vec{U}_{c},w_c)+R(\vec{U}_{c},w_c)-R(\vec{U}_{c},w')\\
     &\leq 4 |w_c-w'||w_c+w'|\\
     &=12|c-c'|\omega\\
     &\leq \epsilon \omega^2.
    \end{align*}

As $\vec{T}_{c}\in \mathcal{T}$, we obtain that $\mathcal{T}$ has the desired properties.
\end{proof}
 
\subsection{XP Algorithm}\label{sec44}
This section is dedicated to concluding \Cref{main2}. We first prove \Cref{utgzhu}. After, we conclude a version of \Cref{main2} for conneccted, dismembered instances of WAMMRO, which directly implies \Cref{main2}.

We now prove \Cref{utgzhu}, which we restate here for convenience.
\utgzhu*
 \begin{proof}
     Let $\vec{G}$ be an optimal orientation for $(G,w)$. We now define a sequence $\vec{G}_0,\vec{G}_1,\ldots,\vec{G}_{q}$ of orientations of $G$. First, we set $\vec{G}_0=\vec{G}$. We now recursively define $\vec{G}_{i}$ for $i \in [q]$. Let $i \in [q]$ and suppose that $\vec{G}_{i-1}$ has already been defined. As $\mathcal{T}_i$ is an optimal replacement set for $T_i$, there exists some $\vec{T}_i\in \mathcal{T}_i$ such that $R(\vec{G}_{i-1}\langle\vec{T}_i\rangle,w) \geq R(\vec{G}_{i-1},w)$. Let $\vec{G}_{i}=\vec{G}_{i-1}\langle\vec{T}_i\rangle$.  This finishes the definition of $\vec{G}_0,\vec{G}_1,\ldots,\vec{G}_{q}$. Observe that by the definition of $\vec{G}_0,\vec{G}_1,\ldots,\vec{G}_{q}$ and as $\vec{G}$ is an optimal orientation for $(G,w)$, we have that $\vec{G}_q$ is an optimal orientation for $(G,w)$. Further observe that $\vec{G_q}[V(T_{i})]=\vec{G_i}[V(T_{i})]=\vec{T_i}$ for $i \in [q]$. It follows that $\vec{G_q}$ has the desired properties.
 \end{proof}
Before we prove a version of \Cref{main2} for connected, dismembered instances, we need the following simple observation, which will be reused in \Cref{sec45}.
\begin{proposition}\label{srdzftuzguh}
    Let $(G,w)$ be a connected instance of WAMMRO and let $T_1,\ldots,T_q$ be the undirected components of $G$. Then $q \leq \max\{1,2k\}$. 
\end{proposition}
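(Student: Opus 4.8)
The plan is to bound the number of undirected components of a connected instance $(G,w)$ by relating them to the arc set $A(G)$, whose size is $k$. First I would dispose of the degenerate case: if $A(G)=\emptyset$, then $G=G\setminus A(G)$ is connected, so it has exactly one undirected component, and $q=1\leq \max\{1,2k\}$ holds. So assume $k\geq 1$.

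Now suppose $k\geq 1$ and let $T_1,\ldots,T_q$ be the undirected components of $G$, i.e.\ the components of $G\setminus A(G)$. The key observation is that since $UG(G)$ is connected, contracting each $T_i$ to a single vertex yields a connected (multi)graph $H$ on $q$ vertices whose edges are exactly the (underlying edges of the) arcs of $A(G)$; hence $H$ has $q$ vertices and $k$ edges, and connectivity of $H$ forces $k\geq q-1$, i.e.\ $q\leq k+1$. To make this rigorous without invoking contraction machinery, I would argue directly: build an auxiliary graph on vertex set $\{T_1,\dots,T_q\}$ with one edge $T_iT_j$ for each arc of $A(G)$ joining a vertex of $T_i$ to a vertex of $T_j$ (discarding arcs internal to some $T_i$, which only helps the bound). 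Since $UG(G)$ is connected and every $T_i$ is internally connected, this auxiliary graph is connected, so it has at least $q-1$ edges, and it has at most $k$ edges, giving $q-1\leq k$, hence $q\leq k+1$.

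Finally I would combine this with $k\geq 1$: when $k\geq 1$ we have $q\leq k+1\leq 2k$, so $q\leq\max\{1,2k\}$ in all cases.

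I do not expect any serious obstacle here; the only mild care needed is the bookkeeping when an arc has both endpoints in the same undirected component (such arcs contribute loops in the auxiliary graph and should simply be ignored, which can only decrease the edge count and thus still yields $q-1\le k$), and making sure the auxiliary graph's connectivity genuinely follows from connectivity of $UG(G)$ together with the fact that each $T_i$ is a connected subgraph of $G\setminus A(G)$. This is a short, essentially combinatorial counting argument.
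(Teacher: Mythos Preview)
Your argument is correct. It differs from the paper's in the counting step: after disposing of the case $k=0$ exactly as you do, the paper simply observes that, since $UG(G)$ is connected, every undirected component $T_i$ must contain at least one endpoint of some arc, i.e.\ $V(A(G))\cap V(T_i)\neq\emptyset$; since the $T_i$ are pairwise disjoint this gives $q\le |V(A(G))|\le 2k$. Your auxiliary-graph / contraction argument instead exploits that a connected multigraph on $q$ vertices needs at least $q-1$ edges, yielding the sharper bound $q\le k+1$ (and then $k+1\le 2k$ for $k\ge 1$). Both are short and elementary; the paper's route avoids building any auxiliary object, while yours gives a tighter inequality at the cost of one extra sentence justifying connectivity of the quotient.
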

\begin{proof}
If $k=0$, we obtain that $q=1=\max\{1,2k\}$ as $UG(G)$ is connected and $UG(G)=G$. 

Otherwise, observe that, as $UG(G)$ is connected, for $i \in [q]$, we have that $V(A(G))\cap V(T_i)\neq \emptyset$. This yields $q \leq |V(A(G))|\leq 2k=\max\{1,2k\}$.
\end{proof}
We are now ready to conclude the following version of \Cref{main2} for connected, dismembered instances of WAMMRO.
 \begin{lemma}\label{estdrzftugzuh}
     Let $(G,w)$ be a connected, dismembered instance of WAMMRO. Then an optimal orientation for $(G,w)$ can be computed in time $n^{O(k)}$.
 \end{lemma}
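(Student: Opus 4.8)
The plan is to assemble this lemma from the three results already in hand: Lemma \ref{rdztghok} on optimal replacement sets, Lemma \ref{utgzhu} on how to combine such sets into a global optimum, and Proposition \ref{srdzftuzguh} bounding the number of undirected components. First I would let $T_1,\ldots,T_q$ be the undirected components of $G$. Since $(G,w)$ is connected, Proposition \ref{srdzftuzguh} gives $q\leq \max\{1,2k\}$. Since $(G,w)$ is moreover connected and dismembered, for each $i\in[q]$ Lemma \ref{rdztghok} lets me compute, in polynomial time, an optimal replacement set $\mathcal{T}_i$ for $((G,w),T_i)$ with $|\mathcal{T}_i|=O(n^3)$.

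Next I would exploit the fact that, by definition, the undirected components are the components of $G\setminus A(G)$, so the edge sets $E(T_1),\ldots,E(T_q)$ partition $E(G)$. Consequently every tuple $(\vec{T}_1,\ldots,\vec{T}_q)$ with $\vec{T}_i\in\mathcal{T}_i$ for all $i$ determines a unique orientation $\vec{G}$ of $G$ whose inherited orientation of $T_i$ equals $\vec{T}_i$ for each $i$. The algorithm then enumerates all such tuples, builds the corresponding orientation $\vec{G}$, computes $R(\vec{G},w)$ (for instance via the transitive closure of $\vec{G}$, which is polynomial in $n$ since $w$ is given in unary), and outputs an orientation $\vec{G}_0$ maximizing $R(\vec{G}_0,w)$ over all tuples.

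For correctness I would apply Lemma \ref{utgzhu} with the sets $\mathcal{T}_1,\ldots,\mathcal{T}_q$: it yields an optimal orientation $\vec{G}$ for $(G,w)$ with $\vec{G}[V(T_i)]\in\mathcal{T}_i$ for all $i\in[q]$. Such a $\vec{G}$ is exactly the orientation generated by the tuple $(\vec{G}[V(T_1)],\ldots,\vec{G}[V(T_q)])$, which the algorithm enumerates, so $R(\vec{G}_0,w)\geq R(\vec{G},w)$; combined with optimality of $\vec{G}$ this shows $\vec{G}_0$ is optimal. For the running time, computing all $\mathcal{T}_i$ is polynomial, the number of enumerated tuples is $\prod_{i\in[q]}|\mathcal{T}_i|\leq (O(n^3))^q=n^{O(q)}=n^{O(k)}$ using $q\leq\max\{1,2k\}$, and evaluating each tuple is polynomial, giving the claimed $n^{O(k)}$ bound.

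I do not expect a genuine obstacle here: the content is essentially a bookkeeping combination of the preceding lemmas, and the only point needing (routine) care is the observation that the per-component orientations are chosen independently and jointly reconstitute an orientation of all of $G$, which is immediate from the partition of $E(G)$ by the undirected components. Finally, Lemma \ref{estdrzftugzuh} together with \Cref{reddismem} and \Cref{addwa} will yield \Cref{main2}.
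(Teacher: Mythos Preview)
Your proposal is correct and follows essentially the same approach as the paper: compute optimal replacement sets via Lemma~\ref{rdztghok}, invoke Lemma~\ref{utgzhu} to guarantee an optimal orientation assembled from these sets, enumerate all tuples, and bound the number of tuples using Proposition~\ref{srdzftuzguh}. The only cosmetic difference is that the paper singles out the case $k=0$ and appeals directly to Lemma~\ref{tfzghik} there (so that the final bound $n^{O(k)}$ is unambiguous when $k=0$), whereas you fold this into the general argument via $q\le\max\{1,2k\}$; this is not a genuine divergence.
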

 \begin{proof}Let $T_1,\ldots,T_q$ be the undirected components of $G$.
  If $k=0$, it follows directly from Lemma \ref{tfzghik} that an  optimal orientation for $(G,w)$ can be computed in polynomial time. We may hence suppose that $k \geq 1$.
     It follows by Proposition \ref{srdzftuzguh} that $q \leq 2k$. By \Cref{rdztghok}, in polynomial time, for $i \in [q]$, we can compute an optimal replacement set $\mathcal{T}_i$ for $((G,w),T_i)$ of size $O(n^3)$. By \Cref{utgzhu}, there exists an optimal orientation $\vec{G}$ for $(G,w)$ such that $\vec{G}[V(T_i)]\in \mathcal{T}_i$ for $i \in [q]$. For $i \in [q]$, we let $\vec{T}^1_{i},\ldots,\vec{T}^{\mu_i}_{i}$ be an arbitrary enumeration of $\mathcal{T}_{i}$. By the assumption on $\mathcal{T}_{i}$, we obtain that $\mu_i\leq \alpha n^3$ for some fixed constant $\alpha$. Now consider a vector $x \in [\mu_1]\times\ldots \times [\mu_{q}]$. We define the orientation $\vec{G}_x$ of $G$ by assigning every $e \in E(T_{i})$ the orientation it has in $\vec{T^{x_i}_{i}}$ for $i \in [q]$. Observe that this indeed defines an orientation of $G$. Further, by assumption, we have that there exists some $x^* \in [\mu_1]\times\ldots \times [\mu_{q}]$ such that $\vec{G_{x^*}}=\vec{G}$.

     Our algorithm consists of computing $\vec{G_x}$ for all $x \in [\mu_1]\times\ldots \times [\mu_{q}]$ and maintaining an orientation $\vec{G}_0$ that maximizes $R(\vec{G_0},w)$ among all those found during this procedure. As there exists some $x^* \in [\mu_1]\times\ldots \times [\mu_{q}]$ such that $\vec{G_{x^*}}=\vec{G}$, we obtain that $R(\vec{G_0},w)\geq R(\vec{G},w)$ and hence by the choice of $\vec{G}$, we obtain that $\vec{G_0}$ is an optimal orientation for $(G,w)$. 

     For the running time of the algorithm, observe that $|[\mu_1]\times\ldots \times [\mu_{q}]|=\prod_{i \in [q]}\mu_i\leq (\alpha n^3)^{q}\leq (\alpha n^3)^{2k}=n^{O(k)}$. Further, for every $x \in [\mu_1]\times\ldots \times [\mu_{q}]$, we clearly can compute $\vec{G}_x$ and $R(\vec{G_x},w)$ in polynomial time. Hence the desired running time follows.
 \end{proof}

Now \Cref{main2} follows directly from \Cref{estdrzftugzuh}, \Cref{reddismem}, and \Cref{addwa}.
 \subsection{FPT approximation scheme}\label{sec45}
 This section is dedicated to proving \Cref{main3}. While the remaining parts of the proof are similar to the proof of \Cref{main2}, the main additional difficulty is that applying these techniques, first a solution is obtained whose objective value is an additive amount away from an optimal solution rather than a multiplicative one. In order to overcome this problem, we need to show that all connected instances of WAMMRO admit orientations whose objective value is above a certain threshold. Then we can conclude that a good additive approximation is also a good multiplicative approximation.
 This can be subsumed in the following result.
  \begin{restatable}{lemma}{reciht}\label{reciht}
 Let $\epsilon>0$, let $(G,w)$ be a connected instance of WAMMRO with  $k \geq 1$ and let $\vec{G}_0$ be an orientation of $G$ such that $R(\vec{G}_0,w)\geq R(\vec{G},w)-\frac{\epsilon}{196k^2}|w|^2$ holds for all orientations $\vec{G}$ of $G$. Then $\vec{G}_0$ is a $(1-\epsilon)$-optimal orientation for $(G,w)$.
  \end{restatable}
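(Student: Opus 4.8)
The plan is to lower-bound $R(\vec{G}_0,w)$ relative to the optimum $R(\vec{G}^*,w)$ by exhibiting an orientation of $G$ whose objective value is at least a constant fraction of $|w|^2$, and then combine this bound with the hypothesis that $\vec{G}_0$ is only an additive $\frac{\epsilon}{196k^2}|w|^2$ away from optimal. Concretely, let $\vec{G}^*$ be an optimal orientation. We have $R(\vec{G}_0,w)\geq R(\vec{G}^*,w)-\frac{\epsilon}{196k^2}|w|^2$, so it suffices to show that $R(\vec{G}^*,w)\geq \frac{1}{196k^2}|w|^2$, because then $\frac{\epsilon}{196k^2}|w|^2\leq \epsilon R(\vec{G}^*,w)$ and hence $R(\vec{G}_0,w)\geq (1-\epsilon)R(\vec{G}^*,w)$, as desired.

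To prove the threshold bound $R(\vec{G}^*,w)\geq\frac{1}{196k^2}|w|^2$, I would argue via a largest undirected component. Since $(G,w)$ is connected with $k\geq 1$, Proposition~\ref{srdzftuzguh} gives that $G$ has $q\leq 2k$ undirected components $T_1,\dots,T_q$, and since the vertices of positive weight are distributed among these components, some component $T_j$ satisfies $w(T_j)\geq \frac{1}{2k}|w|$. Now $T_j$ is a tree (as $G$ is acyclic), so I would invoke the analysis of optimal orientations of trees as in \cite{Hakimi1997OrientingGT} — concretely, the dynamic program behind Lemma~\ref{tfzghik}, or a direct argument: any tree on a weighted vertex set admits an orientation in which the total weighted reachability $R(\cdot,w|V(T_j))$ is at least a constant fraction (say $\tfrac14$, coming from orienting the tree as an out-arborescence from a centroid-type vertex so that many ordered pairs become reachable) of $w(T_j)^2$. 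Taking such an orientation of $T_j$ and completing it arbitrarily on the rest of $G$ yields an orientation $\vec{G}$ with $R(\vec{G},w)\geq \frac14 w(T_j)^2\geq \frac14\cdot\frac{1}{4k^2}|w|^2=\frac{1}{16k^2}|w|^2$, which is more than enough; the constant $196=4\cdot 49$ in the statement suggests a slightly lossier but essentially identical estimate, presumably using a factor like $\tfrac{1}{7}$ in place of $\tfrac14$ to keep the tree argument elementary. Since $R(\vec{G}^*,w)\geq R(\vec{G},w)$, the threshold bound follows.

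The main obstacle is the tree estimate: one must verify that every weighted tree has an orientation whose weighted reachability is at least a fixed constant times the square of the total weight, uniformly in the shape of the tree and in how the weight is spread out (including possibly concentrated on a leaf, or spread over a long path). The cleanest route is to root the tree at a vertex $v$ splitting the weight so that no subtree hanging off $v$ carries more than half the total weight, then orient roughly half the incident subtrees away from $v$ and half toward $v$; every vertex in an "out" subtree is reachable from every vertex in an "in" subtree through $v$, giving on the order of (weight below)·(weight above) reachable ordered pairs, which is a constant fraction of $w(T_j)^2$. One must also be careful that the bound is stated with $|w|^2$ where $|w|$ denotes $\sum_{v}w(v)$ for the whole instance, so the reduction to a single component must account for the factor $(2k)^2=4k^2$ loss; this is exactly where the $196k^2=49\cdot 4k^2$ arises, leaving a clean constant $\tfrac17$-type bound at the tree level. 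Once the tree bound is in hand, the remaining manipulation — substituting into the additive hypothesis and rearranging — is a one-line calculation, so I do not anticipate difficulty there.
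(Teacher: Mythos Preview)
Your approach is exactly the paper's: reduce to showing $R(\vec{G}^*,w)\geq \tfrac{1}{196k^2}|w|^2$ by picking a heaviest undirected component (at most $2k$ of them by Proposition~\ref{srdzftuzguh}) and proving a constant lower bound for weighted trees via a centroid-based in/out split. Two small points need attention, however.

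First, the degenerate case: if $R(\vec{G}^*,w)=0$ the threshold bound can fail (take $|w|=1$, i.e.\ a single vertex of weight~$1$ and everything else weight~$0$; then $R^*=0<\tfrac{1}{196k^2}$). The paper disposes of this separately by noting that then every orientation is optimal. Relatedly, when $|w|$ is small (specifically $|w|\leq 2k$), the heaviest component may have weight $\leq 1$, so the tree argument does not apply; the paper covers this by observing that if $R^*\geq 1$ and $|w|\leq 2k$ then already $1\geq (|w|/2k)^2\geq \tfrac{1}{196k^2}|w|^2$.

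Second, your tentative constant $\tfrac14$ for the tree bound is too optimistic: on a star with a weight-$0$ centre and three weight-$1$ leaves, the best orientation gives $R=2<\tfrac14\cdot 9$. The paper proves $\tfrac{1}{49}$ (whence $196=4\cdot 49$): the centroid guarantees every subtree has at most half the weight, and then a subset-sum balancing of the subtrees into two groups of weight $\geq \tfrac13$ of the remaining mass yields the product bound. Your sketch of the in/out split is the right mechanism; just be aware that the balancing step loses more than a factor~$2$, and that the case where the centroid itself carries a large fraction of the weight must be handled via the $2\binom{w(v)}{2}$ term.
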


 The remainder of \Cref{sec45} is structured as follows. In \Cref{sec451}, we prove \Cref{reciht}. After, in \Cref{sec452}, we complete the proof of \Cref{main3}.
 \subsubsection{Lower bound}\label{sec451}

  This section is dedicated to the proof of \Cref{reciht}. We analyze the objective values of some orientations closely related to the ones found in \cite{Hakimi1997OrientingGT} when solving the problem for undirected graphs.

 We first need some more definitions.
 Given an undirected graph $G$, we use $\mathcal{K}(G)$ to denote the set of components of $G$.
 Given a tree together with a weight function $w:V(T)\rightarrow\mathbb{Z}_{\geq 0}$, a {\it centroid} is a vertex $v \in V(T)$ such that $\max\{w(K):K \in \mathcal{K}(T\setminus \{v\})\}$ is minimized. The following result was first proved by Kang and Ault \cite{Kang1975SomePO} for unweighted trees. The proof generalizes verbatim to the weighted version.
 \begin{proposition}\label{tfgzhi}
     Let $(T,w)$ be a weighted tree and $v$ a centroid of $(T,w)$. Then $w(K)\leq \frac{1}{2}w(T)$ for every $K \in \mathcal{K}(T\setminus \{v\})$.
 \end{proposition}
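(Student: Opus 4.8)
\textbf{Proof proposal for \Cref{tfgzhi}.}

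The plan is to prove this by a standard exchange argument on a longest path through a centroid. Suppose $v$ is a centroid of $(T,w)$ but there is some component $K_0 \in \mathcal{K}(T\setminus\{v\})$ with $w(K_0) > \frac{1}{2}w(T)$. Let $u$ be the unique neighbour of $v$ lying in $K_0$. I would then consider the vertex $u$ as a candidate to beat the centroid, and compare $\max\{w(K):K\in\mathcal{K}(T\setminus\{u\})\}$ with $\max\{w(K):K\in\mathcal{K}(T\setminus\{v\})\}$.

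The key observation is that the components of $T\setminus\{u\}$ are: the component containing $v$ (call it $K_v$), and all the components of $T\setminus\{v\}$ that are contained in $K_0$ other than the "branch through $u$" — more precisely, $T\setminus\{u\}$ splits $K_0 - u$ into several pieces, and attaches $v$ together with everything in $V(T)\setminus V(K_0)$ into the single component $K_v$. First I would observe that $w(K_v) = w(T) - w(K_0) < \frac{1}{2}w(T) < w(K_0)$, using the hypothesis $w(K_0) > \frac12 w(T)$ and $w(K_v) = w(T)-w(K_0)$ (note $w(u)$ is already excluded since $u\in K_0$, so $w(K_v)+w(K_0) = w(T)$ would need $w(u)=0$; I should be careful here — actually $w(K_v) + w(K_0-u) + w(u) = w(T)$, so $w(K_v) = w(T) - w(K_0) \le \frac12 w(T)$, still fine). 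Second, every other component $K'$ of $T\setminus\{u\}$ is a subtree of $K_0-u$, hence $w(K') \le w(K_0) - w(u) \le w(K_0)$, and in fact $w(K') < w(K_0)$ unless $K_0 - u$ is a single component equal to all of $K_0-u$; but even then $w(K') = w(K_0)-w(u) \le w(K_0)$, and we must rule out equality. Combining, $\max\{w(K):K\in\mathcal{K}(T\setminus\{u\})\} \le w(K_0) = \max\{w(K):K\in\mathcal{K}(T\setminus\{v\})\}$, with the inequality being strict whenever $w(u)>0$ or $K_0-u$ is disconnected. This contradicts minimality of the centroid $v$ — or at worst shows $u$ is also a centroid with the same max, in which case I would iterate: replace $v$ by $u$ and repeat. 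Since $w(K_v)$ strictly drops at each step (the "heavy side" shrinks), or since the path from the original $v$ toward the heavy component has finite length, this process terminates, eventually producing a genuine contradiction or a centroid $v'$ for which no component exceeds $\frac12 w(T)$; but the max is invariant along such moves, so the original $v$ already satisfies the bound.

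The main obstacle I anticipate is handling the degenerate cases cleanly, namely when $w(u)=0$ (so removing $u$ doesn't decrease total weight and the "heavy side" might not strictly shrink in weight) and when $K_0 - u$ stays connected (so no splitting gain). The clean way around this is to argue by a \emph{minimal} choice: among all vertices $x$ achieving $\max\{w(K):K\in\mathcal{K}(T\setminus\{x\})\} = M$ where $M$ is the centroid value, pick one, say $v$, minimizing the weight $w(K_v^{\max})$ of its heaviest component, and among those minimizing the number of vertices in that heaviest component; then the exchange move to the neighbour $u$ inside the heaviest component $K_0$ either strictly decreases the heaviest weight or keeps it equal while strictly decreasing vertex count, contradicting minimality — unless $w(K_0) \le \frac12 w(T)$ already, which is what we want. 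I would write this up carefully as a single clean contradiction argument rather than an explicit iteration, since that avoids tracking a termination measure in prose.
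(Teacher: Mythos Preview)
The paper does not actually prove this proposition: it cites Kang and Ault \cite{Kang1975SomePO} for the unweighted case and simply asserts that the argument carries over verbatim to the weighted setting. So there is no in-paper proof to compare against.

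Your argument is the standard exchange proof and is correct. A couple of remarks to tighten the write-up. First, your computation $w(K_v)=w(T)-w(K_0)$ is right on the nose and needs no hedging: removing $u$ leaves the component $K_v$ with vertex set exactly $V(T)\setminus V(K_0)$, so $w(K_v)+w(K_0)=w(T)$ without any condition on $w(u)$; the parenthetical worry can be deleted. Second, since weights are nonnegative and $w(K_0)>\tfrac12 w(T)$, the heavy component is unique at every step, which makes the ``minimise vertex count of the heavy component'' tiebreak clean. Finally, note that it suffices to derive the bound for \emph{one} centroid: the centroid value $M$ is by definition the same for every centroid, so once $M\le\tfrac12 w(T)$ is established at a single centroid it holds at all of them. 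With these cosmetic fixes your minimal-counterexample version goes through directly and matches what the cited reference does.
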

 We next need one more simple result related to the Subset Sum Problem. It is certainly folklore, but proven here for the sake of completeness.
 \begin{proposition}\label{7ftgz8ihuiojo}
     Let $S$ be a multiset of nonnegative integers such that $s \leq \frac{2}{3}\sum S$ holds for all $s \in S$. Then there exists a partition $(S_1,S_2)$ of $S$ such that $\min \{\sum S_1,\sum S_2\}\geq \frac{1}{3}\sum S$.
 \end{proposition}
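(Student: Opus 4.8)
This is Proposition \ref{7ftgz8ihuiojo}: Given a multiset $S$ of nonnegative integers with $s \le \frac{2}{3}\sum S$ for all $s \in S$, find a partition $(S_1, S_2)$ with $\min\{\sum S_1, \sum S_2\} \ge \frac{1}{3}\sum S$.

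Let me think about how to prove this.

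The plan is to first dispose of the trivial case $\sum S = 0$ by taking $S_1 = \emptyset$ and $S_2 = S$, so I may assume $\sigma := \sum S > 0$. I would then enumerate the elements of the multiset $S$ in non-increasing order as $s_1 \geq s_2 \geq \dots \geq s_n$ and let $j$ be the least index with $s_1 + \dots + s_j \geq \frac{1}{3}\sigma$; such a $j$ exists since $s_1 + \dots + s_n = \sigma \geq \frac{1}{3}\sigma$. The partition will be $S_1 = \{s_1, \dots, s_j\}$ and $S_2 = \{s_{j+1}, \dots, s_n\}$. It then suffices to show $\sum S_1 \in [\frac{1}{3}\sigma, \frac{2}{3}\sigma]$, because this forces $\sum S_2 = \sigma - \sum S_1 \in [\frac{1}{3}\sigma, \frac{2}{3}\sigma]$ as well, giving $\min\{\sum S_1, \sum S_2\} \geq \frac{1}{3}\sigma$. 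The lower bound $\sum S_1 \geq \frac{1}{3}\sigma$ is immediate from the choice of $j$.

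For the upper bound $\sum S_1 \leq \frac{2}{3}\sigma$ I would split on the value of $j$. If $j = 1$, then $\sum S_1 = s_1 \leq \frac{2}{3}\sigma$ directly by hypothesis. If $j \geq 2$, then by minimality of $j$ we have $s_1 + \dots + s_{j-1} < \frac{1}{3}\sigma$; in particular $s_1 < \frac{1}{3}\sigma$, and since the sequence is non-increasing also $s_j \leq s_1 < \frac{1}{3}\sigma$, so $\sum S_1 = (s_1 + \dots + s_{j-1}) + s_j < \frac{1}{3}\sigma + \frac{1}{3}\sigma = \frac{2}{3}\sigma$. This completes the argument.

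The one point worth flagging is the role of the ordering. A naive greedy that adds arbitrary elements of $S$ to $S_1$ until the running sum first reaches $\frac{1}{3}\sigma$ only guarantees $\sum S_1 < \frac{1}{3}\sigma + \max_{s \in S} s \leq \sigma$, which is too weak. Processing the elements in non-increasing order is precisely what lets me conclude that the last element added, $s_j$, is small (bounded by $\frac{1}{3}\sigma$) in the case $j \geq 2$, where one cannot simply invoke the hypothesis $s \leq \frac{2}{3}\sigma$ on a single element. No deeper obstacle arises; everything else is a direct computation.
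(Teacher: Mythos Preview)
Your argument is correct. The sorted greedy is clean: the key point is that in the case $j\ge 2$ the minimality of $j$ forces $s_1<\tfrac{1}{3}\sigma$, and then the non-increasing order pushes this bound down to $s_j$, capping the overshoot. The trivial case $\sigma=0$ is handled, and the rest is arithmetic.

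The paper proves the same proposition by a different route. It first observes that if some $s\in S$ already lies in $[\tfrac{1}{3}\sum S,\tfrac{2}{3}\sum S]$, then $(\{s\},S\setminus\{s\})$ works; otherwise every element is strictly below $\tfrac{1}{3}\sum S$, and one takes a partition minimizing $|\sum S_1-\sum S_2|$ and shows by a single-element swap that the smaller side cannot fall below $\tfrac{1}{3}\sum S$. So the paper uses an extremal/exchange argument where you use a constructive greedy with ordering. Your approach is slightly more explicit (it literally builds the partition in one pass), while the paper's swap argument is the classical balancing trick; both are equally elementary and neither needs anything beyond the stated hypothesis.
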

\begin{proof}
    If there exists some $s \in S$ with $s\geq \frac{1}{3}\sum S$, then $(\{s\},S\setminus \{s\})$ is a partition with the desired properties. We may hence suppose that $s< \frac{1}{3}\sum S$ holds for all $s \in S$. Now let $(S_1,S_2)$ be a partition of $S$ that minimizes $|\sum S_1- \sum S_2|$. Without loss of generality, we may suppose that $\sum S_1\leq \sum S_2$. Suppose for the sake of a contradiction that $\sum S_1<\frac{1}{3}\sum S$. Further, let $s \in S_2$ with $s>0$, let $S_1'=S_1\cup \{s\}$ and $S_2'=S_2\setminus \{s\}$. Clearly, we have $\sum S_2'-\sum S_1'=\sum S_2-\sum S_1-2s<\sum S_2-\sum S_1=|\sum S_1- \sum S_2|$. Further, as $\sum S_1\leq \frac{1}{3}\sum S$ and $s\leq \frac{1}{3}\sum S$, we have 
    \begin{align*}
        \sum S_1'-\sum S_2'&\leq \sum S_1-\sum S_2+2s\\
        &=2 \sum S_1+2s-\sum S\\
        &< \frac{1}{3}\sum |S|\\
        &= (\sum S-\frac{1}{3}\sum S)-\frac{1}{3}\sum S\\
        &< \sum S_2-\sum S_1\\
        &=|\sum S_1- \sum S_2|.
    \end{align*}
    We hence obtain a contradiction to the choice of $(S_1,S_2)$. We obtain that $\min \{\sum S_1,\sum S_2\}=\sum S_1\geq \frac{1}{3}\sum S$.
\end{proof}
We are now ready to prove that every instance $(G,w)$ of WAMMRO in which $G$ is undirected admits an orientation whose objective value is above a certain threshold.
 \begin{lemma}\label{esrzdtufziguohi}
     Let $(G,w)$ be a connected instance of WAMMRO such that $G$ is undirected with $|w|\geq 2$. Then there exists an orientation $\vec{G}$ of $G$ such that $R(\vec{G},w)\geq \frac{1}{49}|w|^2$.
 \end{lemma}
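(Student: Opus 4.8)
The plan is to reduce to a spanning tree and then exploit a centroid together with the Subset-Sum-type splitting of \Cref{7ftgz8ihuiojo}. Write $W=|w|$. First I would dispose of small instances: if $W\le 5$ then $\tfrac1{49}W^2<1$, so it suffices to exhibit some orientation with $R(\vec G,w)\ge 1$, which is immediate --- if some vertex has weight at least $2$ then $2{w(v)\choose 2}\ge 2$ already, and otherwise at least two vertices have positive weight and, $G$ being connected, one can orient a path between them (and all remaining edges arbitrarily) so that one reaches the other. So I may assume $W\ge 6$. Fix a spanning tree $T$ of $G$; since orienting the edges of $E(G)\setminus E(T)$ arbitrarily can only increase reachability, it is enough to orient $T$ well. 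Let $v$ be a centroid of $(T,w)$, so by \Cref{tfgzhi} every component of $T\setminus\{v\}$ has weight at most $\tfrac12 W$.

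Now I would split into two cases on $w(v)$. If $w(v)>\tfrac14 W$, then (as $W\ge 6$) $w(v)\ge 2$ and $2{w(v)\choose 2}=w(v)(w(v)-1)\ge \tfrac{W}{4}\bigl(\tfrac{W}{4}-1\bigr)$, which is at least $\tfrac1{49}W^2$ once $W\ge 6$; since this term appears in $R(\vec G,w)$ for every orientation, any orientation suffices. (This case also absorbs the situation where some vertex has weight exceeding $\tfrac12 W$, as such a vertex must be the centroid.) If instead $w(v)\le\tfrac14 W$, let $T_1,\dots,T_m$ be the components of $T\setminus\{v\}$ with weights $w_1,\dots,w_m$; they sum to $W-w(v)\ge\tfrac34 W>0$, and each satisfies $w_i\le\tfrac12 W\le\tfrac23(W-w(v))$ by the centroid property and $w(v)\le\tfrac14 W$. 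Hence \Cref{7ftgz8ihuiojo} applies to the multiset $\{w_1,\dots,w_m\}$ and yields a partition of $[m]$ into $I_1,I_2$ with both $\sum_{i\in I_1}w_i$ and $\sum_{i\in I_2}w_i$ at least $\tfrac13(W-w(v))$; taking $I_1$ to be a part of smaller total weight and setting $a=\sum_{i\in I_1}w_i$, we get $\tfrac14 W\le a\le\tfrac12(W-w(v))\le\tfrac12 W$.

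Finally I would orient as follows: for $i\in I_1$, orient the edges of $T_i$ together with the unique edge joining $T_i$ to $v$ so that every vertex of $T_i$ reaches $v$ (root $T_i$ at its vertex adjacent to $v$, orient towards that root, then towards $v$); for $j\in I_2$, symmetrically orient so that $v$ reaches every vertex of $T_j$; orient $E(G)\setminus E(T)$ arbitrarily. Then every vertex of $\bigcup_{i\in I_1}V(T_i)$ reaches every vertex of $\{v\}\cup\bigcup_{j\in I_2}V(T_j)$, and these two vertex sets have weights $a$ and $W-a$, so $R(\vec G,w)\ge a(W-a)$. Since $a\mapsto a(W-a)$ is concave and $a\in[\tfrac14 W,\tfrac12 W]$, this is at least $\tfrac14 W\cdot\tfrac34 W=\tfrac{3}{16}W^2\ge\tfrac1{49}W^2$, as required.

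The only step that needs real care is the applicability of \Cref{7ftgz8ihuiojo}: its hypothesis that no element exceeds two thirds of the total is precisely what the dichotomy $w(v)\le\tfrac14 W$ versus $w(v)>\tfrac14 W$ secures, once combined with the centroid bound of \Cref{tfgzhi}. The remaining bookkeeping --- degenerate trees, a single very heavy vertex, the leaf-centroid configuration (which \Cref{tfgzhi} forces to satisfy $w(v)\ge\tfrac12 W$, hence lands in the first case), and small $W$ --- is routine and handled as above; I expect no obstacle beyond keeping the constants straight, and in fact the argument yields the stronger bound $\tfrac{3}{16}|w|^2$.
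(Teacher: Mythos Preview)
Your proof is correct and follows essentially the same route as the paper: handle tiny total weight directly, take a centroid, split on whether $w(v)$ exceeds $\tfrac14|w|$, and in the light-centroid case apply \Cref{tfgzhi} together with \Cref{7ftgz8ihuiojo} to balance the components and orient them toward/away from $v$. Two cosmetic remarks: since $(G,w)$ is a WAMMRO instance, $G$ is already acyclic, hence a tree, so passing to a spanning tree is unnecessary (though harmless); and the paper uses the cutoff $|w|\le 7$ versus $|w|\ge 8$ (giving itself the cruder estimate $\tfrac{|w|}{4}-1\ge\tfrac{|w|}{8}$), whereas your tighter check that $\tfrac{W}{4}(\tfrac{W}{4}-1)\ge\tfrac{1}{49}W^2$ for $W\ge 6$ is fine.
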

 \begin{proof}Observe that $G$ is a tree as $(G,w)$ is a connected instance of WAMMRO.
     First suppose that $|w| \leq 7$. If there exists some $v \in V(G)$ with $w(v)\geq 2$, then for an arbitrary orientation $\vec{G}$ of $G$, we have $R(\vec{G},w)\geq 2{w(v)\choose 2}\geq 2\geq \frac{1}{49}|w|^2$. Otherwise, as $|w|\geq 2$, there exist two vertices $u,v \in V(G)$ with $\min\{w(u),w(v)\}\geq 1$. As $G$ is connected, we can choose an orientation $\vec{G}$ of $G$ that satisfies $\kappa_{\vec{G}}(u,v)=1$. We obtain $R(\vec{G},w)\geq \kappa_{\vec{G}}(u,v)w(u)w(v)\geq 1\geq \frac{1}{49}|w|^2$.

     We may hence suppose that $|w| \geq 8$. Let $v$ be a centroid of $(G,w)$. If $w(v)\geq \frac{1}{4}|w|$, then, as $|w| \geq 8$, for an arbitrary orientation $\vec{G}$ of $G$, we have $R(\vec{G},w)\geq 2{w(v)\choose 2}\geq \frac{1}{4}|w|(\frac{1}{4}|w|-1)\geq \frac{1}{4}|w|\frac{1}{8}|w|\geq\frac{1}{49}|w|^2$.

     We may hence suppose that $w(v)\leq \frac{1}{4}|w|$. We then have $w(\mathcal{K}(G-\{v\}))=|w|-w(v)\geq \frac{3}{4}|w|$. Further, by Proposition \ref{tfgzhi}, we have $w(K)\leq \frac{1}{2}|w|\leq \frac{2}{3}w(\mathcal{K}(G-\{v\}))$ for every $K \in \mathcal{K}(G-\{v\})$. It hence follows by Proposition \ref{7ftgz8ihuiojo} that there exists a partition $(\mathcal{K}_1,\mathcal{K}_2)$ of $\mathcal{K}(G-\{v\})$ such that $\min \{w(\mathcal{K}_1),w(\mathcal{K}_2)\}\geq \frac{1}{3}w(\mathcal{K}(G-\{v\}))\geq \frac{1}{4}|w|$. We now choose an orientation $\vec{G}$ of $G$ such that for every $K \in \mathcal{K}_1$ and every $u \in V(K)$, there exists a directed path from $u$ to $v$ in $\vec{G}$ and for every $K \in \mathcal{K}_2$ and every $u \in V(K)$, there exists a directed path from $v$ to $u$ in $\vec{G}$. Observe that $\kappa_{\vec{G}}(u,u')=1$ for all $u \in V(\mathcal{K}_1)$ and $u' \in V(\mathcal{K}_2)$. This yields \begin{align*}        
    R(\vec{G},w)&\geq \sum_{(u,u')\in P(G)}\kappa_{\vec{G}}(u,u')w(u)w(u')\\
    &\geq \sum_{u \in V(\mathcal{K}_1)}w(u)\sum_{u' \in V(\mathcal{K}_2)}w(u')\kappa_{\vec{G}}(u,u')\\
    &=\sum_{u \in V(\mathcal{K}_1)}w(u)\sum_{u' \in V(\mathcal{K}_2)}w(u')\\
    &=w(\mathcal{K}_1)w(\mathcal{K}_2)\\
    &\geq \frac{1}{16}|w|^2\\
    &\geq \frac{1}{49}|w|^2.
    \end{align*}
 \end{proof}
We are now able to generalize this result to arbitrary connected instances $(G,w)$ of WAMMRO by considering a largest undirected component of $G$. 
 \begin{lemma}\label{asefhjikj}
     Let $(G,w)$ be a connected instance of WAMMRO with  $k \geq 1$ that admits an orientation $\vec{G}$ with $R(\vec{G},w)\geq 1$. Then there exists an orientation $\vec{G}$ of $G$ with $R(\vec{G},w)\geq \frac{1}{196k^2}|w|^2$.
 \end{lemma}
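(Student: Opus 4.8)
The plan is to reduce to the unweighted-undirected-component bound of \Cref{esrzdtufziguohi} by passing to a largest undirected component of $G$. Let $T_1,\ldots,T_q$ be the undirected components of $G$. Since these are exactly the components of $G\setminus A(G)$, their vertex sets partition $V(G)$, so $\sum_{i\in[q]}w(T_i)=|w|$. By \Cref{srdzftuzguh} together with $k\geq 1$ we have $q\leq 2k$, hence a largest undirected component $T$, i.e. one maximizing $w(T)$, satisfies $w(T)\geq \frac{1}{2k}|w|$. I would then split into two cases according to the size of $w(T)$.

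The first case is $w(T)\geq 2$. Here I would apply \Cref{esrzdtufziguohi} to the connected instance $(T,w|V(T))$ of WAMMRO, which is undirected and has total weight $w(T)\geq 2$, to obtain an orientation $\vec{T}$ of $T$ with $R(\vec{T},w|V(T))\geq \frac{1}{49}w(T)^2$. Next I would extend $\vec{T}$ to an orientation $\vec{G}$ of $G$ by orienting all remaining edges arbitrarily. Since $\vec{T}$ is a sub-digraph of $\vec{G}$ carrying the same orientation on $E(T)$, every pair of vertices of $T$ that is reachable in $\vec{T}$ is reachable in $\vec{G}$, and $V(T)\subseteq V(G)$, so comparing the defining sums of $R$ term by term gives $R(\vec{G},w)\geq R(\vec{T},w|V(T))$. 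Combining the inequalities yields $R(\vec{G},w)\geq \frac{1}{49}w(T)^2\geq \frac{1}{49}\cdot\frac{|w|^2}{4k^2}=\frac{1}{196k^2}|w|^2$, as desired.

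The second case is $w(T)\leq 1$. Since $T$ is a largest undirected component, $w(T_i)\leq 1$ for all $i\in[q]$, so $|w|=\sum_{i\in[q]}w(T_i)\leq q\leq 2k$; therefore $\frac{1}{196k^2}|w|^2\leq \frac{4k^2}{196k^2}=\frac{1}{49}<1$. By hypothesis $G$ admits an orientation $\vec{G}$ with $R(\vec{G},w)\geq 1>\frac{1}{196k^2}|w|^2$, which settles this case.

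I do not expect a genuine obstacle here; the only points needing a little care are checking that the undirected components partition $V(G)$, so that their weights sum exactly to $|w|$, and verifying the monotonicity inequality $R(\vec{G},w)\geq R(\vec{T},w|V(T))$ when $\vec{G}$ extends $\vec{T}$, both routine. The one delicate design choice, already reflected in the statement, is to take the constant small enough that the trivial lower bound $R\geq 1$ dominates $\frac{1}{196k^2}|w|^2$ precisely in the regime $w(T)\leq 1$ where \Cref{esrzdtufziguohi} cannot be invoked.
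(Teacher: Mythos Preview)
Your proposal is correct and follows essentially the same approach as the paper: bound the number of undirected components by $2k$ via \Cref{srdzftuzguh}, pick a heaviest one, apply \Cref{esrzdtufziguohi} to it, and extend arbitrarily. The only cosmetic difference is the case split---the paper splits on $|w|\leq 2k$ versus $|w|\geq 2k+1$, whereas you split on $w(T)\leq 1$ versus $w(T)\geq 2$---but since $|w|\geq 2k+1$ forces $w(T)\geq\lceil|w|/(2k)\rceil\geq 2$ and $w(T)\leq 1$ forces $|w|\leq q\leq 2k$, the two decompositions cover the same ground and the arithmetic is identical.
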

 \begin{proof}
First suppose that $|w|\leq 2k$. Then we set $\vec{G}$ to be an orientation of $G$ that satisfies $R(\vec{G},w)\geq 1$. This yields $R(\vec{G},w)\geq 1 \geq (\frac{|w|}{2k})^2\geq \frac{1}{196k^2}|w|^2$.

We may hence suppose that $|w| \geq 2k+1$. Let $T_1,\ldots,T_q$ be the undirected components of $G$. By Proposition \ref{srdzftuzguh}, we have $q \leq 2k$. Observe that $\sum_{i \in [q]}w(T_i)=|w|$. By symmetry, we may hence suppose that $w(T_1)\geq \lceil\frac{1}{q}|w|\rceil\geq \lceil\frac{1}{2k}|w|\rceil$. In particular, it follows that $w(T_1)\geq 2$. We hence obtain by Lemma \ref{esrzdtufziguohi} that there exists an orientation $\vec{T_1}$ of $T_1$ with $R(\vec{T_1},w|V(T_1))\geq \frac{1}{49}(w(T_1))^2\geq \frac{1}{196k^2}|w|^2$. Now let $\vec{G}$ be an arbitrary orientation of $G$ in which all edges of $E(T_1)$ have the same orientation as in $\vec{T_1}$. Then we have $R(\vec{G},w)\geq R(\vec{T_1},w|V(T_1))\geq \frac{1}{196k^2}|w|^2$.
 \end{proof}

We now ready to give the proof of \Cref{reciht}, which we restate here for convenience.
\reciht*
\begin{proof}
     Let $\vec{G}$ be an orientation of $G$. We may suppose that $\vec{G}$ is optimal. If $R(\vec{G},w)=0$, then every orientation for $(G,w)$ is optimal, so the statement trivially holds. We may hence suppose that $R(\vec{G},w)\geq 1$. By Lemma \ref{asefhjikj}, we obtain that $R(\vec{G},w)\geq \frac{1}{196k^2}|w|^2$. By assumption, this yields
     \begin{align*}
         R(\vec{G_0},w)&\geq R(\vec{G},w)-\frac{\epsilon}{196k^2}|w|^2\\
         &\geq R(\vec{G},w)-\epsilon R(\vec{G},w)\\
         &=(1-\epsilon)R(\vec{G},w).
     \end{align*}
 \end{proof}
 \subsubsection{Main proof of Theorem \ref{main3}}\label{sec452}
This section is dedicated to completing the proof of \Cref{main3}. We make use of \Cref{reciht} and otherwise use similar arguments as in \Cref{sec44} for the proof of \Cref{main2}. The following result is the desired analogue of \Cref{utgzhu}.
 \begin{lemma}\label{gvfziuog}
     Let $\epsilon>0$ and let $(G,w)$ be a connected, dismembered instance of WAMMRO with $k\geq 1$. Further, let $\{T^1,\ldots,T^{q}\}$ be the set of  undirected components of $G$ and for $i \in [q]$, let $\mathcal{T}_i$ be an $\frac{\epsilon}{392k^3}$-optimal replacement set for $T_i$. Then there exists a $(1-\epsilon)$-optimal orientation $\vec{G}_1$ for $(G,w)$ with $\vec{G_1}[V(T^{i})]\in \mathcal{T}^{i}$ for $i \in [q]$.
 \end{lemma}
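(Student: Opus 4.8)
The plan is to follow essentially the same argument as the proof of \Cref{utgzhu}, replacing the exact replacement sets by $\frac{\epsilon}{392k^3}$-optimal ones and tracking the accumulated additive loss over the $q$ undirected components. First I would take an optimal orientation $\vec{G}$ for $(G,w)$ and define a sequence $\vec{G}_0,\vec{G}_1,\ldots,\vec{G}_q$ of orientations of $G$ by setting $\vec{G}_0=\vec{G}$ and, for $i \in [q]$, choosing $\vec{T}_i \in \mathcal{T}_i$ with $R(\vec{G}_{i-1}\langle\vec{T}_i\rangle,w)\geq R(\vec{G}_{i-1},w)-\frac{\epsilon}{392k^3}|w|^2$, which exists because $\mathcal{T}_i$ is an $\frac{\epsilon}{392k^3}$-optimal replacement set for $T_i$; then set $\vec{G}_i=\vec{G}_{i-1}\langle\vec{T}_i\rangle$. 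Since replacing the edges of one undirected component leaves the inherited orientations of the others untouched, we have $\vec{G}_q[V(T^i)]=\vec{T}_i\in\mathcal{T}_i$ for all $i\in[q]$, exactly as in the proof of \Cref{utgzhu}.

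Next I would bound the total loss. Telescoping the chain of inequalities gives $R(\vec{G}_q,w)\geq R(\vec{G},w)-q\cdot\frac{\epsilon}{392k^3}|w|^2$. By \Cref{srdzftuzguh}, since $(G,w)$ is connected and $k\geq 1$, we have $q\leq 2k$, hence $q\cdot\frac{\epsilon}{392k^3}|w|^2\leq 2k\cdot\frac{\epsilon}{392k^3}|w|^2=\frac{\epsilon}{196k^2}|w|^2$. Therefore $R(\vec{G}_q,w)\geq R(\vec{G},w)-\frac{\epsilon}{196k^2}|w|^2$, and the same bound holds with $\vec{G}$ replaced by any orientation of $G$ since $\vec{G}$ was chosen optimal. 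At this point I would invoke \Cref{reciht} with $\vec{G}_0=\vec{G}_q$ to conclude that $\vec{G}_q$ is a $(1-\epsilon)$-optimal orientation for $(G,w)$. Setting $\vec{G}_1=\vec{G}_q$ then yields an orientation with the two required properties.

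The argument is essentially bookkeeping, so there is no deep obstacle; the one place requiring a little care is making sure the constant is chosen so that the accumulated loss over all $q\leq 2k$ components lands at exactly $\frac{\epsilon}{196k^2}|w|^2$, which is the threshold \Cref{reciht} needs — this is why the per-component slack is $\frac{\epsilon}{392k^3}$ rather than something simpler. I would also note explicitly that the reachability contribution between distinct undirected components is unaffected by reorienting within a single component (so that $R(\vec{G}_{i-1}\langle\vec{T}_i\rangle,w)$ is the correctly compared quantity), which is immediate from the definition of $\vec{G}\langle\vec{T}\rangle$ but worth stating. Here is the proof.

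\begin{proof}
    Let $\vec{G}$ be an optimal orientation for $(G,w)$. We define a sequence $\vec{G}_0,\vec{G}_1,\ldots,\vec{G}_{q}$ of orientations of $G$ by setting $\vec{G}_0=\vec{G}$ and, recursively, for $i \in [q]$, using that $\mathcal{T}_i$ is an $\frac{\epsilon}{392k^3}$-optimal replacement set for $T^{i}$ to choose some $\vec{T}_i\in \mathcal{T}_i$ with $R(\vec{G}_{i-1}\langle\vec{T}_i\rangle,w)\geq R(\vec{G}_{i-1},w)-\frac{\epsilon}{392k^3}|w|^2$, and setting $\vec{G}_i=\vec{G}_{i-1}\langle\vec{T}_i\rangle$.

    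By construction and as $q \leq 2k$ by \Cref{srdzftuzguh}, we obtain
    \begin{align*}
        R(\vec{G}_q,w)&\geq R(\vec{G},w)-\sum_{i \in [q]}\frac{\epsilon}{392k^3}|w|^2\\
        &\geq R(\vec{G},w)-2k\cdot\frac{\epsilon}{392k^3}|w|^2\\
        &=R(\vec{G},w)-\frac{\epsilon}{196k^2}|w|^2.
    \end{align*}
    As $\vec{G}$ is an optimal orientation for $(G,w)$, this yields $R(\vec{G}_q,w)\geq R(\vec{G}_1',w)-\frac{\epsilon}{196k^2}|w|^2$ for every orientation $\vec{G}_1'$ of $G$. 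By \Cref{reciht}, we obtain that $\vec{G}_q$ is a $(1-\epsilon)$-optimal orientation for $(G,w)$.

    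Finally, observe that $\vec{G}_q[V(T^{i})]=\vec{G}_i[V(T^{i})]=\vec{T}_i\in \mathcal{T}_i$ for $i \in [q]$, since for $j>i$ the edges of $E(T^{i})$ are not reoriented when passing from $\vec{G}_{j-1}$ to $\vec{G}_j=\vec{G}_{j-1}\langle\vec{T}_j\rangle$. Hence $\vec{G}_1=\vec{G}_q$ has the desired properties.
\end{proof}
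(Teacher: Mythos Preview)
Your proof is correct and follows essentially the same approach as the paper: build the sequence $\vec{G}_0,\ldots,\vec{G}_q$ by successively replacing each undirected component using its $\frac{\epsilon}{392k^3}$-optimal replacement set, telescope the additive losses, use $q\leq 2k$ from \Cref{srdzftuzguh}, and then invoke \Cref{reciht}. The only difference is cosmetic---you spell out slightly more explicitly why later replacements do not disturb earlier ones.
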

 \begin{proof}

     Let $\vec{G}$ be an orientation of $G$. We may suppose that $\vec{G}$ is optimal for $(G,w)$. We define a sequence $\vec{G}_0,\vec{G}_1,\ldots,\vec{G}_{q}$ of orientations of $G$. First, we set $\vec{G}_0=\vec{G}$. We now recursively define $\vec{G}_{i}$ for $i \in [q]$. Let $i \geq 1$ and suppose that $\vec{G}_{i-1}$ has already been constructed. By the definition of $\mathcal{T}_{i}$, there exists some $\vec{T}_{i}\in \mathcal{T}_{i}$ such that $R(\vec{G}_{i-1}\langle \vec{T}_{i}\rangle,w) \geq R(\vec{G}_{i-1},w)-\frac{\epsilon}{392k^3}|w|^2$. Let $\vec{G}_{i}=\vec{G}_{i-1}\langle \vec{T}_{i}\rangle$.  This finishes the definition of $\vec{G}_0,\vec{G}_1,\ldots,\vec{G}_{q}$.
     
     We will show that $\vec{G}_q$ has the desired properties. 
     Observe that by the definition of $\vec{G}_0,\vec{G}_1,\ldots,\vec{G}_{q}$, we have that 
     $R(\vec{G}_{i},w)\geq R(\vec{G}_{i-1},w)-\frac{\epsilon}{392k^3}|w|^2$ for $i \in [q]$. As $q \leq 2k$ by Proposition \ref{srdzftuzguh} and $k\geq 1$, we obtain that $R(\vec{G}_q,w)\geq R(\vec{G},w)-\frac{\epsilon}{196k^2}|w|^2$. It hence follows from \Cref{reciht} that $\vec{G}_q$ is a $(1-\epsilon)$-optimal orientation for $(G,w)$. Further observe that $\vec{G}_q[V(T^{i})]=\vec{G}_{i}[V(T^{i})]=\vec{T}_{i}$ for $i \in [q]$. 
 \end{proof}
We are now ready to prove a version of \Cref{main3} for connected, dismembered instances of WAMMRO. This result is an analogue of \Cref{estdrzftugzuh}. 
 \begin{lemma}\label{dwqdeqwd}
     Let $\epsilon>0$ and let $(G,w)$ be a connected, dismembered instance of WAMMRO. Then a $(1-\epsilon)$-optimal orientation $\vec{G}$ for $(G,w)$ can be computed in time $f(k,\epsilon)n^{O(1)}$.
 \end{lemma}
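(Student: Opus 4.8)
The plan is to combine the machinery developed so far: reduce to connected dismembered instances, compute small $\epsilon'$-optimal replacement sets for each undirected component using \Cref{rdztghgsfok}, exhaustively combine them as in the proof of \Cref{estdrzftugzuh}, and finally invoke \Cref{reciht} to turn the resulting additive guarantee into the desired multiplicative $(1-\epsilon)$-guarantee.

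More concretely, first I would dispose of the trivial case $k=0$: here $G=UG(G)$ is a tree (indeed arboresque after adding no arcs, or directly an instance of the Hakimi--Schmeichel--Young setting), so by \Cref{tfzghik} an \emph{optimal} orientation can be computed in polynomial time. So assume $k\geq 1$, and set $\epsilon' = \frac{\epsilon}{392k^3}$. Let $T^1,\dots,T^q$ be the undirected components of $G$; by \Cref{srdzftuzguh} we have $q\leq 2k$. For each $i\in[q]$, apply \Cref{rdztghgsfok} to compute, in polynomial time, an $\epsilon'$-optimal replacement set $\mathcal{T}_i$ for $((G,w),T^i)$ of size $f(\epsilon')=f(k,\epsilon)$. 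Now enumerate all tuples $\vec{T}_1\in\mathcal{T}_1,\dots,\vec{T}_q\in\mathcal{T}_q$; each such tuple determines a unique orientation $\vec{G}$ of $G$ by orienting the edges of each $E(T^i)$ according to $\vec{T}_i$ (the arcs of $A(G)$ are already oriented, and every edge of $G$ lies in exactly one $E(T^i)$ since the $T^i$ are the components of $G\setminus A(G)$). Compute $R(\vec{G},w)$ for each and keep the best one, call it $\vec{G}_1$.

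To see correctness, let $\vec{G}^*$ be an optimal orientation of $G$. By \Cref{gvfziuog} (applied with parameter $\epsilon$, noting that $\mathcal{T}_i$ is an $\frac{\epsilon}{392k^3}$-optimal replacement set), there exists a $(1-\epsilon)$-optimal orientation $\vec{G}_2$ of $G$ with $\vec{G}_2[V(T^i)]\in\mathcal{T}_i$ for all $i\in[q]$; this orientation is one of the candidates enumerated by our algorithm, so $R(\vec{G}_1,w)\geq R(\vec{G}_2,w)\geq (1-\epsilon)R(\vec{G}^*,w)$, i.e.\ $\vec{G}_1$ is $(1-\epsilon)$-optimal. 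For the running time: the reduction to connected dismembered instances has already been handled (we are proving the dismembered version here, and \Cref{reddismem} will lift it), each $\mathcal{T}_i$ is computed in polynomial time, and the number of enumerated tuples is $\prod_{i\in[q]}|\mathcal{T}_i|\leq f(k,\epsilon)^{2k}=f(k,\epsilon)$, with polynomial work per tuple; hence the total running time is $f(k,\epsilon)n^{O(1)}$.

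I do not expect a genuine obstacle here, since all the heavy lifting is in \Cref{rdztghgsfok} and \Cref{reciht}; the only point requiring a little care is the bookkeeping of the $\epsilon$'s — one must choose the replacement-set error small enough ($\frac{\epsilon}{392k^3}$, as in \Cref{gvfziuog}) so that after summing the at most $2k$ additive losses one is still within $\frac{\epsilon}{196k^2}|w|^2$ of optimum, which is precisely the threshold \Cref{reciht} needs. After \Cref{dwqdeqwd}, \Cref{main3} itself follows by combining it with \Cref{reddismem} (to reduce general instances to connected dismembered ones) and \Cref{addwa} (to reduce MMRO to WAMMRO), exactly as \Cref{main2} was deduced from \Cref{estdrzftugzuh}.
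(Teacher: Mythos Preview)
Your proposal is correct and follows essentially the same approach as the paper: handle $k=0$ via \Cref{tfzghik}, then for $k\geq 1$ compute $\frac{\epsilon}{392k^3}$-optimal replacement sets via \Cref{rdztghgsfok}, enumerate all combinations (bounded by $f(k,\epsilon)^{2k}$ using \Cref{srdzftuzguh}), and appeal to \Cref{gvfziuog} for correctness. Your additional remarks about lifting to \Cref{main3} via \Cref{reddismem} and \Cref{addwa} are also exactly what the paper does.
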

 \begin{proof}
 If $k=0$, it follows directly from Lemma \ref{tfzghik} that an  optimal orientation for $(G,w)$ can be computed in polynomial time. We may hence suppose that $k \geq 1$.
     Let $T_1,\ldots,T_{q}$ be the  undirected components of $G$. By \Cref{rdztghgsfok}, for $i \in [q]$, in polynomial time, we can compute an $\frac{\epsilon}{392k^3}$-optimal replacement set $\mathcal{T}_{i}$ for $((G,w),T_i)$ of size $f(k,\epsilon)$. By \Cref{gvfziuog}, there exists a $(1-\epsilon)$-optimal orientation $\vec{G}$ for $(G,w)$ with $\vec{G}[V(T_{i})]\in \mathcal{T}_{i}$ for $i \in [q]$. For $i \in [q]$, we let $\vec{T}^1_{i},\ldots,\vec{T}^{\mu_i}_{i}$ be an arbitrary enumeration of $\mathcal{T}_{i}$. By the assumption on $\mathcal{T}_{i}$, we obtain that $\mu_i=f(k,\epsilon)$. Now consider a vector $x \in [\mu_1]\times\ldots \times [\mu_{q}]$. We define the orientation $\vec{G}_x$ of $G$ by assigning every $e \in E(T_{i})$ the orientation it has in $\vec{T}^{x_i}_{i}$ for $i \in [q]$. Observe that this indeed defines an orientation of $G$. Further, by assumption, we have that there exists some $x^* \in [\mu_1]\times\ldots \times [\mu_{q}]$ such that $\vec{G_{x^*}}=\vec{G}$.

     Our algorithm consists of computing $\vec{G}_x$ and $R(\vec{G_x},w)$ for all $x \in [\mu_1]\times\ldots \times [\mu_{q}]$ and outputting an orientation $\vec{G_0}$ that maximizes $R(\vec{G_0},w)$ among all those found during this procedure. As there exists some $x^* \in [\mu_1]\times\ldots \times [\mu_{q}]$ such that $\vec{G_{x^*}}=\vec{G}$, we obtain that $R(\vec{G}_0,w)\geq R(\vec{G},w)$ and hence, by the choice of $\vec{G}$, we obtain that $\vec{G_0}$ is a $(1-\epsilon)$-optimal orientation for $(G,w)$. 

     For the running time of the algorithm, observe that $|[\mu_1]\times\ldots \times [\mu_{q}]|=\prod_{i \in [q]}\mu_i$. As $q\leq 2k$ by Proposition \ref{srdzftuzguh} and $k \geq 1$, and $\mu_i=f(k,\epsilon)$ for $i \in [q]$, we obtain that $|[\mu_1]\times\ldots \times [\mu_{q}]|=f(k,\epsilon)$.  Further, for every $x \in [\mu_1]\times\ldots \times [\mu_{q}]$, we clearly can compute $\vec{G}_x$ and $R(\vec{G}_x,w)$ in polynomial time. Hence the desired running time follows.
 \end{proof}

Now \Cref{main3} follows directly from \Cref{dwqdeqwd}, \Cref{reddismem}, and \Cref{addwa}.
 \section{Conclusion}\label{sec5}
In this article, we considered the problem of finding an orientation of a mixed graph $G$ that maximizes the total number of vertex pairs such that the second vertex in the pair is reachable from the first
 one. We first showed in Theorem \ref{main1} that the maximization problem is APX-hard. Then, we considered the parameterization by the number $k$ of arcs in $A(G)$. We show in Theorems \ref{main2} and \ref{main3} that the problem is XP with respect to $k$ and that an arbitrarily good approximation can be computed in FPT time with respect to $k$.

 We hope for these results to be the starting point for more research on the complexity of MMRO. There are several interesting follow-up questions that seem worthwhile studying. The perhaps most interesting one, in particular in the light of Theorem \ref{main1}, is whether a constant-factor approximation algorithm is available.

 \begin{question}\label{rxstzdtufziguhijo} 
     Is there a constant $\alpha>0$ such that there exists a polynomial-time $\alpha$-approximation algorithm for MMRO?
 \end{question}
  We wish to remark that the corresponding questions for FPUMRO also seems to be open. 

 One approach for Question \ref{rxstzdtufziguhijo} is to determine whether for any instance $(G,w)$ of WAMMRO, there exists an orientation of $G$ that maintains a constant fraction of the reachabilities in $G$ with respect to the given weight function.

 \begin{question}\label{etxsrdtfzguhi} 
     Does there exist a constant $\alpha>0$ such that for any instance $(G,w)$ of WAMMRO, there exists an orientation $\vec{G}$ for $(G,w)$ such that $R(\vec{G},w)\geq \alpha R(G,w)$?
 \end{question}

 It is easy to see that an affirmative answer to Question \ref{etxsrdtfzguhi} implies an affirmative answer to Question \ref{rxstzdtufziguhijo}.

 Moreover, it is natural to ask about a common strengthening of \Cref{main2} and \Cref{main3}, namely if MMRO is actually FPT when parameterized by $k$.

 \begin{question}\label{serdtfzgui}
     Is there an algorithm that computes an optimal orientation for any instance $G$ of MMRO and runs in time $f(k)n^{O(1)}$?
 \end{question}

 One way to achieve this would be to improve \Cref{rdztghok} in the sense that an optimal replacement set of size only dependent on $k$ is found rather than one of size $O(n^3)$. However, this is not possible as the following result shows.
 
 \begin{proposition}
     For any positive integer $q$, there exists a connected, dismembered instance $(G,w)$ of WAMMRO with $k=2$ and an undirected component $T$ of $G$ such that every optimal replacement set for $((G,T),w)$ is of size at least $q+1$.
 \end{proposition}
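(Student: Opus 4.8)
The plan is to construct a family of instances where an undirected component $T$ is a long path, and the two preoriented arcs create a situation in which the optimal way to orient $T$ depends on a parameter that can take $q+1$ distinct values. The idea is to attach a single incoming arc and a single outgoing arc to $T$, but route them so that the relevant tradeoff concerns \emph{where along the path} the orientation switches direction.

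Concretely, let $T$ be a path $v_0 v_1 \cdots v_{2q}$ with $w(v_i)$ chosen so that the path has $q+1$ ``natural split points''; for instance, take $w(v_i)=1$ for all $i$, so that $|V(T)|=2q+1$. Add two extra vertices $a$ and $b$, an arc $a v_0$ and an arc $v_{2q} b$, and give $a$ and $b$ weights (using additional weighted vertices or digons, or directly via the weight function since WAMMRO allows weights) large enough that reachabilities through $a$ and through $b$ dominate the internal reachabilities of $T$. The point is that an orientation of $T$ is essentially determined by choosing an index $j\in[2q]_0$ such that $v_0,\dots,v_j$ are oriented ``rightward toward $v_j$'' and $v_j,\dots,v_{2q}$ are oriented ``leftward toward $v_j$'' — any orientation with a different structure (multiple direction changes) is dominated. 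Among these, for a given orientation of the rest of the graph, exactly one value of $j$ is optimal, but by choosing which of the arcs $av_0$ or $v_{2q}b$ is ``active'' and with what surrounding weights, one forces each of the $q+1$ candidate values of $j$ (say the even ones, $j\in\{0,2,4,\dots,2q\}$) to be the unique optimum for some completion. Hence no replacement set of size $\le q$ can contain, for every external orientation, an orientation of $T$ that is at least as good; so every optimal replacement set has size $\ge q+1$.

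The steps I would carry out, in order: first, fix the construction of $(G,w)$ with $k=2$ and verify it is connected and dismembered (the component $T$ contains exactly the two vertices $v_0,v_{2q}$ of $V(A(G))$, each incident to exactly one arc, matching \Cref{c2}'s or \Cref{c1}'s dismembered condition). Second, characterize the ``reasonable'' orientations of $T$: show any orientation of $T$ that changes direction more than once is weakly dominated, so we may assume it is a single ``valley'' (or ``peak'') at some vertex $v_j$; this is an exchange-argument lemma. Third, for each target index $j^*$ among the $q+1$ candidates, exhibit an orientation $\vec{G}_{j^*}$ of the arcs and a choice of auxiliary weights such that in $\vec{G}_{j^*}$, the orientation of $T$ with valley at $v_{j^*}$ strictly beats the valley-at-$v_j$ orientation for every $j\ne j^*$ — this is where the weight bookkeeping lives, balancing the contribution $w(a)\cdot w(\{v_0,\dots\}) $ against $w(b)\cdot w(\{\dots,v_{2q}\})$ so that the linear-in-$j$ terms have their optimum exactly at $j^*$. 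Fourth, conclude: if $\mathcal{T}$ were an optimal replacement set with $|\mathcal{T}|\le q$, then for some $j^*$ no element of $\mathcal{T}$ has its valley at $v_{j^*}$ (pigeonhole over the $q+1$ targets), so for $\vec{G}=\vec{G}_{j^*}$ no $\vec{T}\in\mathcal{T}$ satisfies $R(\vec{G}\langle\vec{T}\rangle,w)\ge R(\vec{G},w)$, a contradiction.

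The main obstacle I anticipate is engineering the weights so that for each $j^*$ the optimum over $j$ is \emph{uniquely} at $j^*$ and the construction stays dismembered with only $k=2$ arcs — a single arc in and a single arc out gives only one ``global'' reachability term of each type, so the dependence of $R$ on the split index $j$ is affine with a slope controlled by $w(a)-w(b)$ (roughly), which by itself yields only a monotone optimum at an endpoint, not an arbitrary interior point. Overcoming this requires a cleverer weighting of the path vertices themselves (making $w(v_i)$ non-constant, e.g.\ a carefully chosen convex-like profile) so that the internal reachability term $\sum_{i<j, i'\ge j} w(v_i)w(v_{i'})$ is a strictly concave function of $j$ with maximum tunable by an affine perturbation coming from the external arcs; then varying the (bounded) auxiliary weights of $a$ and $b$ slides this maximum across all $q+1$ values. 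Getting this profile right, and double-checking that spurious multi-valley orientations never sneak ahead, is the delicate part; everything else is routine.
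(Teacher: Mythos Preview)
Your proposal has a genuine gap at its core. In the definition of an optimal replacement set, the instance $(G,w)$ is \emph{fixed}; what varies is only the orientation $\vec{G}$ of $G$, i.e.\ the orientation of the edges in $E(G)\setminus E(T)$. In your construction the only objects outside $T$ are the two singletons $a,b$, each attached by an arc; there are no edges outside $T$ at all. Hence every orientation of $G$ is determined by the orientation of $T$, and a replacement set of size~$1$ (any single optimal orientation of $T$) trivially works. Your plan to ``exhibit an orientation $\vec{G}_{j^*}$ of the arcs and a choice of auxiliary weights'' and later to fix things by ``varying the (bounded) auxiliary weights of $a$ and $b$'' is not available: arcs are already oriented and cannot be flipped, and the weight function is part of the fixed instance. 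The obstacle you flag at the end (that with one arc in and one arc out the dependence on the split index is essentially affine) is real, but your proposed remedy is illegal.

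What the paper does to repair this is precisely to create variable external structure while keeping $k=2$. Besides the path $T=u_0\cdots u_q$ and the two arcs into $T$, it adds a second undirected component: a star with centre $x$ and leaves $v_0,\dots,v_q$ carrying rapidly growing weights $w(v_i)=2^{2i}N$. For each $i$ one orients the star so that $v_i\to x$ and $x\to v_j$ for $j\neq i$; this changes which heavy $v_i$ feeds into $T$ through $x$, and together with the weights $w(u_j)=2^{q^2-j^2}$ on the path it forces the unique optimal orientation of $T$ to be the one with sink at $u_i$. The exponential weights make the dominant terms cleanly separable, so an exchange argument on a single edge of $T$ gives strict improvement whenever the sink is not at $u_i$. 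The missing idea in your sketch is exactly this extra undirected component whose edge orientations play the role you tried to assign to varying weights.
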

 \begin{proof}
     Let a positive integer $q$ be fixed. We now construct an instance $(G,w)$ of WAMMRO in the following way. We let $V(G)$ consist of a set of $2q+4$ vertices $\{x,y,u_0,\ldots,u_q,v_0,\ldots,v_q\}$. We next let $E(G)$ contain edges so that $u_0\ldots u_q$ forms a path $T$ and we let $E(G)$ contain the edge $v_ix$ for $i \in [q]_0$. Finally, we let $A(G)$ consist of the arcs $xu_0$ and $yv_q$. We now define the weight function $w: V(G)\rightarrow \mathbb{Z}_{\geq 0}$. First, let $N=2{q+2 \choose 2}2^{2q^2}$. Now, for $i \in [q]_0$, we set $w(v_i)=2^{2i}N$ and $w(u_i)=2^{q^2-i^2}$. Finally, we set $w(x)=0$ and $w(y)=N$. This finishes the description of $((G,w),T)$. Observe that $(G,w)$ is a dismembered instance of WAMMRO and $T$ is an undirected component of $G$. For an illustration, see Figure \ref{qwe}. 
\begin{figure}[h]
    \centering
        \includegraphics[width=.8\textwidth]{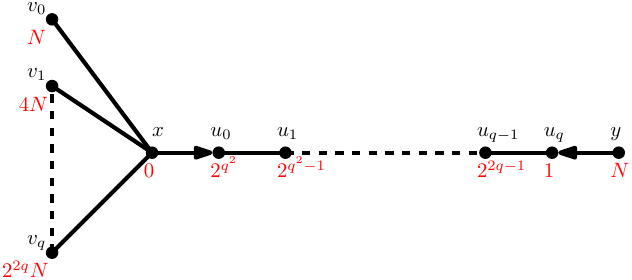}
        \caption{An illustration of the constructed instance $(G,w)$. The vertex names are marked in black and the weight function $w$ is marked in red.}\label{qwe}
\end{figure}

Let $P_0$ be the set of pairs $(a,b)$ in $P(G)$ with $\{a,b\}\subseteq V(T)\cup x$ and $P_1=P(G)\setminus P_0$. For any $(a,b)\in P_0$ we have $w(a)w(b)\leq 2^{2q^2}$. It follows that $\sum_{(a,b)\in P_0}w(a)w(b)\leq |P_0|2^{2q^2}=N$.

     Now for $i \in [q]_0$, let $\vec{T}_i$ be the unique orientation of $T$ in which $u_i$ is reachable from all other vertices in $V(T)$. Further, let $\mathcal{T}$ be an optimal replacement set for $((G,T),w)$. We will show that $\vec{T}_i\in \mathcal{T}$ for $i \in [q]_0$.

     In order to do so, fix some $i \in [q]_0$. Let $H$ be the partial orientation of $G$ which is obtained from $G$ by orienting $v_ix$ from $v_i$ to $x$ and by orienting $v_jx$ from $x$ to $v_j$ for all $j \in [q]_0 \setminus \{i\}$. Let $\vec{H}$ be an optimal orientation for $(H,w)$. As $\mathcal{T}$ is an optimal replacement set for $((G,w),T)$, there exists some $\vec{T}\in \mathcal{T}$ such that $R(\vec{H}\langle \vec{T}\rangle,w) \geq R(\vec{H},w)$. Let $\vec{H}_0=\vec{H}\langle \vec{T}\rangle$. By the choice of $\vec{H}$ and $\vec{H}_0$, we have that $\vec{H}_0$ is an optimal orientation for $(H,w)$.

     Now, let $j_1$ be the largest integer in $[q]_0$ such that $u_{j_1}$ is reachable from $x$ in $\vec{H}_0$. We will show that $j_1 \geq i$. Suppose otherwise. Then, by construction and assumption, we have that in $\vec{H}_0$ the unique edge linking $u_{j_1}$ and $u_{j_1+1}$ is oriented from $u_{j_1+1}$ to $u_{j_1}$. Let $\vec{H}_1$ be obtained from $\vec{H}_0$ by reorienting this edge from $u_{j_1}$ to $u_{j_1+1}$. Observe that $\kappa_{\vec{H}_1}(v_i,u_{j_1+1})>\kappa_{\vec{H}_0}(v_i,u_{j_1+1})$. Moreover, for any pair $(a,b)\in P_1$, we have either $\kappa_{\vec{H}_1}(a,b)\geq \kappa_{\vec{H}_0}(a,b)$ or $(a,b)=(y,u_{j_1})$.

     As $q\geq i>j_1$, this yields that 
     \begin{align*}
         R(\vec{H}_1,w)-R(\vec{H}_0,w)
         &\geq w(v_i)w(u_{j_1+1})-w(y)w(u_{j_1})-\sum_{(a,b)\in P_0}w(a)w(b)\\
         &\geq 2^{2i}N2^{q^2-(j_1+1)^2}-N2^{q^2-j_1^2}-N\\
         &=N(2^{q^2-j_1^2+2i-2j_1-1}-2^{q^2-j_1^2}-1)\\
         &\geq N.
     \end{align*}
This contradicts $\vec{H}_0$ being optimal.

     Now, let $j_2$ be the smallest integer in $[q]_0$ such that $u_{j_2}$ is reachable from $y$ in $\vec{H}_0$. We will show that $j_2 \leq i$. Suppose otherwise. Then, by construction and assumption, we have that in $\vec{H}_0$ the unique edge linking $u_{j_2}$ and $u_{j_2-1}$ is oriented from $u_{j_2-1}$ to $u_{j_2}$. Let $\vec{H}_2$ be obtained from $\vec{H}_0$ by reorienting this edge from $u_{j_2}$ to $u_{j_2-1}$. Observe that $\kappa_{\vec{H}_2}(y,u_{j_2-1})>\kappa_{\vec{H}_0}(y,u_{j_2-1})$. Moreover, for any pair $(a,b)\in P_1$, we have either $\kappa_{\vec{H}_2}(a,b)\geq \kappa_{\vec{H}_0}(a,b)$ or $(a,b)=(v_i,u_{j_2})$.

     As $q\geq j_2>i$, this yields that 
     \begin{align*}
         R(\vec{H}_2,w)-R(\vec{H}_0,w)
         &\geq w(y)w(u_{j_2-1})-w(v_i)w(u_{j_2})-\sum_{(a,b)\in P_0}w(a)w(b)\\
         &\geq N2^{q^2-(j_2-1)^2}-2^{2i}N2^{q^2-j_2^2}-N\\
         &=N(2^{q^2-j_2^2+2j_2-1}-2^{q^2-j_2^2+2i}-1)\\
         &\geq N.
     \end{align*}
This contradicts $\vec{H}_0$ being optimal.

It follows that $u_i$ is reachable from both $x$ and $y$ in $\vec{H}_0$. Hence, by construction, we obtain that $\vec{H}_0[V(T)]=\vec{T}_i$. By the definition of $\vec{H}_0$, it follows that $\vec{T}=\vec{T}_i$, so in particular $\vec{T}_i \in \mathcal{T}$. As $i \in [q]_0$ was chosen arbitrarily, it follows that $|\mathcal{T}|\geq q+1$.
 \end{proof}
 
 A further interesting question is how structural parameters influence the difficulty of MMRO. In particular, it would be good to understand how, given an instance $(G,w)$ of WAMMRO, the treewidth $tw$ of $UG(G)$ influences the difficulty of the problem.  As every undirected component of $G$ is a tree, we obtain that $tw \leq k+1$. It follows that an XP algorithm for $tw$ would imply a qualitative version of \Cref{main2}. While it seems plausible that such an algorithm can be obtained by a dynamic programming approach, we believe that it would need to be significantly more technical than the algorithm presented in \Cref{sec42}. Similarly, an FPT algorithm for $tw$ would yield an affirmative answer for \Cref{serdtfzgui}.
 
 On a higher level, except for \cite{hanaka2025structuralparameterssteinerorientation}, to our best knowledge, this is the first time that the number of arcs has been considered as a parameter for orientation problems of mixed graphs. It would be interesting to consider this parameterization for other problems on mixed graphs. For this question to be meaningful, the problem needs to be solvable in undirected graphs, but hard in mixed graphs.
 
 One candidate is the problem of finding a 2-vertex-connected orientation of a mixed graph $G$. This problem can be solved in polynomial time if $G$ is an undirected graph \cite{THOMASSEN201567} and is NP-hard in general mixed graphs \cite{HORSCH2023100774}. It would hence be interesting to see whether this problem is XP or even FPT with respect to the number of arcs of $G$.

 Another candidate is finding an orientation of a mixed graph that is a well-balanced orientation of its underlying graph. For the definition of a well-balanced orientation, see \cite{10.1016/j.dam.2011.09.003}. The strong orientation theorem of Nash-Williams states that every undirected graph has a well-balanced orientation \cite{NashWilliams1960OnOC}. However, the above described orientation problem in mixed graphs is NP-hard \cite{Bernth2008WellbalancedOO}. Again, the parameterization by the number of arcs could be considered.

\bibliography{lipics-v2021-sample-article}

\end{document}